\newcommand{\mathcircumflex}[0]{\mbox{\^{}}}
\providecommand{\tabularnewline}{\\}
  \theoremstyle{definition}
  \newtheorem{defn}{\protect\definitionname}
  \theoremstyle{plain}
  \newtheorem{lem}{\protect\lemmaname}
  \theoremstyle{remark}
  \newtheorem{rem}{\protect\remarkname}
\theoremstyle{plain}
\newtheorem{thm}{\protect\theoremname}
  \theoremstyle{plain}
  \newtheorem{prop}{\protect\propositionname}
\providecommand{\U}[1]{\protect\rule{.1in}{.1in}}
  \providecommand{\definitionname}{Definition}
  \providecommand{\lemmaname}{Lemma}
  \providecommand{\propositionname}{Proposition}
  \providecommand{\remarkname}{Remark}
\providecommand{\theoremname}{Theorem}
\begin{document}

\title{Equilibrium in Misspecified\\
Markov Decision Processes \thanks{We thank Vladimir Asriyan, Hector Chade, Xiaohong Chen, Emilio Espino,
Drew Fudenberg, Bruce Hansen, Philippe Jehiel, Jack Porter, Philippe
Rigollet, Tom Sargent, Iván Werning, and several seminar participants
for helpful comments. Esponda: Olin Business School, Washington University
in St. Louis, 1 Brookings Drive, Campus Box 1133, St. Louis, MO 63130,
iesponda@wustl.edu; Pouzo: Department of Economics, UC Berkeley, 530-1
Evans Hall \#3880, Berkeley, CA 94720, dpouzo@econ.berkeley.edu.} \bigskip{}
}

\author{%
\begin{tabular}{cc}
Ignacio Esponda~~~~~~~~  & ~~~~~~~~Demian Pouzo\tabularnewline
(WUSTL)~~~~~~~~ & ~~~~~~~~(UC Berkeley)\tabularnewline
\end{tabular}}
\maketitle
\begin{abstract}
We study Markov decision problems where the agent does not know the
transition probability function mapping current states and actions
to future states. The agent has a prior belief over a set of possible
transition functions and updates beliefs using Bayes' rule. We allow
her to be misspecified in the sense that the true transition probability
function is not in the support of her prior. This problem is relevant
in many economic settings but is usually not amenable to analysis
by the researcher. We make the problem tractable by studying asymptotic
behavior. We propose an equilibrium notion and provide conditions
under which it characterizes steady state behavior. In the special
case where the problem is static, equilibrium coincides with the single-agent
version of Berk-Nash equilibrium (\citealp{esponda2016berk}). We
also discuss subtle issues that arise exclusively in dynamic settings
due to the possibility of a negative value of experimentation.
\end{abstract}
\bigskip{}

\thispagestyle{empty}

\newpage{}

\tableofcontents\thispagestyle{empty}\newpage{}

\setcounter{page}{1}

\section{Introduction}

Early interest on studying the behavior of agents who hold misspecified
views of the world (e.g., \cite{Arrow-Green}, \cite{kirman75learning},
\cite{sobel1984non}, \cite{kagel1986winner}, \cite{nyarko1991learning},
\cite{sargent-book}) has recently been renewed by the work of \cite{piccione2003modeling},
\cite{jehiel2005analogy}, \cite{eyster2005cursed}, \cite{jehiel2008revisiting},
\cite{esponda2008behavioral}, Esponda and Pouzo (2012, 2016)\nocite{esponda2016berk}\nocite{esponda2012conditional},
\cite{eyster2013approach}, Spiegler (2013, 2016a, 2016b)\nocite{spiegler2013placebo}\nocite{spiegler2016abayesian}\nocite{spiegler2016bon},
\cite{heidhues2016unrealistic}, and \cite{fudenberg2016active}.
There are least two reasons for this interest. First, it is natural
for agents to be uncertain about their complex environment and to
represent this uncertainty with parsimonious parametric models that
are likely to be misspecified. Second, endowing agents with misspecified
models can explain how certain biases in behavior arise endogenously
as a function of the primitives.\footnote{We take the misspecified model as a primitive and assume that agents
learn and behave optimally given their model. In contrast, \cite{hansen2008robustness}
study optimal behavior of agents who have a preference for robustness
because they are aware of the possibility of model misspecification.}

The previous literature mostly focuses on problems that are intrinsically
``static'' in the sense that they can be viewed as repetitions of
static problems where the only link between periods arises because
the agent is learning the parameters of the model. Yet dynamic decision
problems, where an agent chooses an action that affects a state variable
(other than a belief), are ubiquitous in economics. The main goal
of this paper is to provide a tractable framework to study dynamic
settings where the agent learns with a possibly misspecified model.

We study a Markov Decision Process where a single agent chooses actions
at discrete time intervals. A transition probability function describes
how the agent's action and the current state affects next period's
state. The current payoff is a function of states and actions. We
assume that the agent is uncertain about the true transition probability
function and wants to maximize expected discounted payoff. She has
a prior belief over a set of possible transition functions, and her
model is possibly misspecified, meaning that we do not require the
true transition probability function to be in the support of her prior.
The agent uses Bayes' rule to update her belief after observing the
realized state.

To better illustrate the main question and results, consider a dynamic
savings problem with unknown returns, where $s$ is current income,
$x$ is the choice of savings, $\pi(s-x)$ is the payoff from current
consumption, and next period's income $s'$ is drawn from the distribution
$Q(\cdot\mid s,x)$. The agent, however, does not know the return
distribution $Q$. She has a parametric model representing the set
of possible return distributions $Q_{\theta}$ indexed by a parameter
$\theta\in\Theta$. The agent has a prior $\mu$ over $\Theta$, and
this belief is updated using Bayes' rule based on current income,
the savings decision, and the income realized next period, $\mu'=B(s,x,s',\mu)$,
where $B$ denotes the Bayesian operator and $\mu'$ is the posterior
belief. The agent is correctly specified if the support of her prior
includes the true return distribution $Q$ and is misspecified otherwise.
We represent this problem recursively via the following Bellman equation:
\begin{equation}
W(s,\mu)=\max_{x\in[0,s]}\pi(s-x)+\delta\int\int W(s',\mu')Q_{\theta}(ds'\mid s,x)\mu(d\theta),\label{eq:savings_Bellman}
\end{equation}

The solution to this Bellman equation determines the evolution of
states, actions, and beliefs. A large computational literature provides
algorithms that agents and researchers can use to approximate the
solution to problems such as (\ref{eq:savings_Bellman}), where a
belief is part of the state variable; see \cite{powell2007approximate}
for a textbook treatment.\footnote{Of course, we do not expect less sophisticated agents to apply these
numerical methods. But, following the standard view in the literature,
the dynamic programming approach is still a useful tool for the researcher
to model the behavior of an agent facing intertemporal tradeoffs.} The issue for economists, however, is that these numerical methods
do not usually allow us to make general predictions about behavior.

We propose to circumvent this problem by instead characterizing the
agent's steady state behavior and beliefs. The main question that
we ask is whether we can replace a dynamic programming problem with
learning, such as (\ref{eq:savings_Bellman}), by a problem where
beliefs are not being updated, such as 
\begin{equation}
V(s)=\max_{x\in[0,s]}\pi(s-x)+\int V(s')\bar{Q}_{\mu^{*}}(ds'\mid s,x),\label{eq:savings_Bellman2}
\end{equation}
where $\mu^{*}$ is the agent's equilibrium or steady-state belief
over $\Theta$ and $\bar{Q}_{\mu^{*}}=\int_{\Theta}Q_{\theta}\mu^{*}(d\theta)$
is the corresponding subjective transition probability function. We
refer to this problem as a \emph{Markov Decision Process} (MDP) with
transition probability function $\bar{Q}_{\mu^{*}}$. The main advantage
of this approach is that, provided that we can characterize the equilibrium
belief $\mu^{*}$, it obviates the need to include beliefs in the
state space, thus making the problem much more amenable to analysis.
This focus on \emph{equilibrium} behavior is indeed a distinguishing
feature of economics.

We begin by defining a notion of equilibrium to capture the steady
state behavior and belief of an agent who does not know the true transition
probability function. We call this notion a Berk-Nash equilibrium
because, in the special case where the environment is static, it collapses
to the single-agent version of Berk-Nash equilibrium, a concept introduced
by \cite{esponda2016berk} to characterize steady state behavior
in static environments with misspecified agents. A strategy in an
MDP is a mapping from states to actions; recall that beliefs are not
included in the state for an MDP. For a given strategy and true transition
probability function, the stochastic process for states and actions
in an MDP is a Markov chain and has a corresponding stationary distribution
that can be interpreted as the steady-state distribution over outcomes.
A strategy and corresponding stationary distribution is a Berk-Nash
equilibrium if there exists a belief $\mu^{*}$ over the parameter
space such that: (i) the strategy is optimal for an MDP with transition
probability function $\bar{Q}_{\mu^{*}}$, and (ii) $\mu^{*}$ puts
probability one on the set of parameter values that yield transition
probability functions that are ``closest'' to the true transition
probability function. The notion of \textquotedblleft closest\textquotedblright{}
is given by a weighted version of the Kullback-Leibler divergence
that depends on the equilibrium stationary distribution.

We use the framework to revisit three classic examples. These examples
illustrate how our framework makes dynamic environments with uncertainty
amenable to analysis and expands the scope of the classical dynamic
programming approach. First, we consider the classic problem of a
monopolist with unknown demand function. We assume that demand is
dynamic, so that a sale in the current period affects the likelihood
of a sale the next period. The monopolist, however, has a misspecified
model and believes that demand is not dynamic. We show that a monopolist
who thinks demand is not dynamic does not necessarily set higher prices.

The second illustrative example is a search model where a worker does
not realize that she gets fired with higher probability in times in
which it is actually harder to find another job. We show that she
becomes pessimistic about the chances of finding a new job and sub-optimally
accepts wage offers that are too low.

The final example is a stochastic growth model along the lines of
the problem represented by (\ref{eq:savings_Bellman}). The agent
determines how much of her income to invest every period, which determines,
together with an unknown productivity process, next period's income.
We assume that there are correlated shocks to both the agent's utility
and productivity, but the agent believes these shocks to be independent.
If the shocks are positively correlated, the misspecified agent invests
more of her income when productivity is low. She ends up underestimating
productivity and, therefore, underinvesting in equilibrium.

We then turn to providing a foundation for Berk-Nash equilibrium by
studying the limiting behavior of a Bayesian agent who takes actions
and updates her beliefs about the transition probability function
every period. We ask if an equilibrium approach is appropriate in
this environment, i.e., ``Is it possible to characterize the steady
state behavior of a Bayesian agent by reference to a simpler MDP in
which the agent has fixed (though possibly incorrect) beliefs about
the transition probability function?''

The answer is yes if the agent is sufficiently impatient. But, if
the agent is sufficiently patient, some subtle issues arise in the
dynamic setting that lead to a more nuanced answer: The answer is
yes provided that we restrict attention to steady states with a property
we call exhaustive learning. Under exhaustive learning, the agent
perceives that she has nothing else to learn in steady state. In the
context of the previous example, this condition guarantees that optimal
actions in problem (\ref{eq:savings_Bellman}) are also optimal in
problem (\ref{eq:savings_Bellman2}). Without exhaustive learning,
an action may be optimal in problem (\ref{eq:savings_Bellman2}) because
the agent is not updating her beliefs. But the same action could be
suboptimal if she were to update beliefs because, as we show in this
paper, the value of experimentation can be negative in dynamic settings.
This situation is not possible in static settings because the value
function is only a function of beliefs and its convexity and the martingale
property of Bayesian beliefs imply that the value of experimentation
is always nonnegative.

The notion of exhaustive learning motivates a natural refinement of
Berk-Nash equilibrium in dynamic settings. This refinement, however,
still allows beliefs to be incorrect due to lack of experimentation,
which is a hallmark of the bandit (e.g., \cite{rothschild1974two},
\cite{mclennan1984price}, \cite{easley1988controlling}) and self-confirming
equilibrium (e.g., \cite{battigalli1987compartamento}, \cite{fudenberg1993self},
\cite{dekel2004learning}, \cite{fershtman2012dynamic}) literatures.
Following \cite{selten1975reexamination}, we define a further refinement,
\emph{perfect} Berk-Nash equilibrium, to characterize behavior that
is robust to experimentation, and provide conditions for its existence.

Our asymptotic characterization of beliefs and actions contributes
to the literature that studies asymptotic beliefs and/or behavior
under Bayesian learning. Table 1 categorizes some of the more relevant
papers in connection to our work. The table on the left includes papers
where the agent learns from data that is exogenous in the sense that
she does not affect the stochastic properties of the data. This topic
has mostly been tackled by statisticians for both correctly-specified
and misspecified models and for both i.i.d. and non-i.i.d. data. The
table on the right includes papers where the agent learns from data
that is endogenous in the sense that it is driven by the agent's actions,
a topic that has been studied by economists mostly in static settings.
By static we mean that the problem reduces to a static optimization
problem if stripped of the learning dynamics.\footnote{Formally, we say a problem is static if, for a fixed strategy and
belief over the transition probability function, outcomes (states
and actions) are independent across time.}

\begin{table}
\advance\leftskip-1.75cm\includegraphics[scale=0.62]{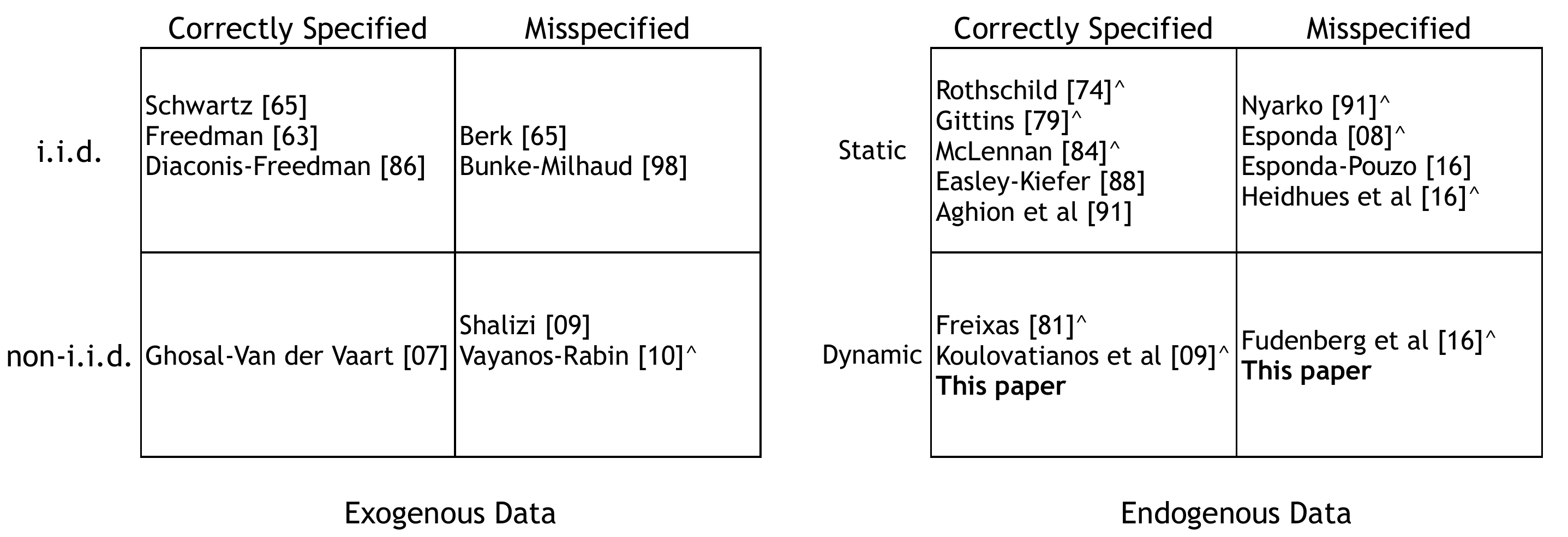}

\protect\caption{\label{tab:Literature}Literature on Bayesian Learning}
\end{table}

Table 1 also differentiates between two complementary approaches to
studying asymptotic beliefs and/or behavior. The first approach is
to focus on specific settings and provide a complete characterization
of asymptotic actions and beliefs, including convergence results;
these papers are marked with a superscript $\mathcircumflex$ in Table
1. Some papers pursue this approach in dynamic and correctly specified
stochastic growth models (e.g., \cite{freixas1981optimal}, \cite{koulovatianos2009optimal}).
In static misspecified settings, \cite{nyarko1991learning}, \cite{esponda2008behavioral},
and \cite{heidhues2016unrealistic} study passive learning problems
where there is no experimentation motive. \cite{fudenberg2016active}
is the only paper that provides a complete characterization in a dynamic
decision problem with active learning.\footnote{Under active learning, different actions convey different amount of
information and a non-myopic agent takes the exploitation vs. experimentation
tradeoff into account. There can be passive or active learning in
both static and dynamic settings. },\footnote{The environment in \cite{fudenberg2016active} is dynamic because
the agent controls the drift of a Brownian motion, even though the
only relevant state variable for optimality ends up being the agent's
belief.} The second approach, which we follow in this paper and we followed
earlier for the static case (\citealp{esponda2016berk}) is to study
general settings and focus on characterizing the set of steady states.\footnote{In macroeconomics there are several models where agents make forecasts
using statistical models that are misspecified (e.g., \cite{Evans-book}
Ch. 13, \cite{sargent-book} Ch. 6).}%

The paper is also related to the literature which provides learning
foundations for equilibrium concepts, such as Nash or self-confirming
equilibrium (see \cite{fudenberg1998theory} for a survey). In contrast
to this literature, we consider Markov decision problems and allow
for misspecified models. Particular types of misspecifications have
been studied in extensive form games. \cite{jehiel1995limited} considers
the class of repeated alternating-move games and assumes that players
only forecast a limited number of time periods into the future; see
\cite{jehiel1998learning} for a learning foundation. We share the
feature that the learning process takes place within the play of the
game and that beliefs are those that provide the best fit given the
data.\footnote{\cite{jehiel2007valuation} consider the general class of extensive
form games with perfect information and assume that players simplify
the game by partitioning the nodes into similarity classes.}

The framework and equilibrium notion are presented in Sections 2 and
3. In Section 4, we work through several examples. We provide a foundation
for equilibrium in Section 5 and study equilibrium refinements in
Section 6.

\section{\label{sec:MDP}Markov Decision Processes}

We begin by describing the environment faced by the agent.

\medskip{}

\begin{defn}
A \textbf{Markov Decision Process} (MDP) is a tuple $\left\langle \mathbb{S},\mathbb{X},\Gamma,q_{0},Q,\pi,\delta\right\rangle $
where\end{defn}
\begin{itemize}
\item $\mathbb{S}$ is a nonempty and finite set of states
\item $\mathbb{X}$ is a nonempty and finite set of actions
\item $\Gamma:\mathbb{S}\rightarrow2^{\mathbb{X}}$ is a non-empty constraint
correspondence
\item $q_{0}\in\Delta(\mathbb{S})$ is a probability distribution on the
initial state
\item $Q:Gr(\Gamma)\rightarrow\Delta(\mathbb{S})$ is a transition probability
function\footnote{For a correspondence $\Gamma:\mathbb{S}\rightarrow2^{\mathbb{X}}$,
its graph is defined by $Gr(\Gamma)\equiv\{(s,x)\in\mathbb{S}\times\mathbb{X}:x\in\Gamma(s)\}.$}
\item $\pi:Gr(\Gamma)\times\mathbb{S}\rightarrow\mathbb{R}$ is a per-period
payoff function
\item $\delta\in[0,1)$ is a discount factor
\end{itemize}
\medskip{}

We sometimes use MDP($Q$) to denote an MDP with transition probability
function $Q$ and exclude the remaining primitives.

The timing is as follows. At the beginning of each period $t=0,1,2,...$,
the agent observes state $s_{t}\in\mathbb{S}$ and chooses a feasible
action $x_{t}\in\Gamma(s_{t})\subset\mathbb{X}$. Then a new state
$s_{t+1}$ is drawn according to the probability distribution $Q(\cdot\mid s_{t},x_{t})$
and the agent receives payoff $\pi(s_{t},x_{t},s_{t+1})$ in period
$t$. The initial state $s_{0}$ is drawn according to the probability
distribution $q_{0}$.

The agent facing an MDP chooses a policy rule that specifies at each
point in time a (possibly random) action as a function of the history
of states and actions observed up to that point. As usual, the objective
of the agent is to choose a feasible policy rule to maximize expected
discounted utility, $\sum_{t=0}^{\infty}\delta^{t}\pi(s_{t},x_{t},s_{t+1})$.

By the Principle of Optimality, the agent's problem can be cast recursively
as 
\begin{equation}
V_{Q}(s)=\max_{x\in\Gamma(s)}\int_{\mathbb{S}}\left\{ \pi(s,x,s')+\delta V_{Q}(s')\right\} Q(ds'|s,x)\label{eq:bellman}
\end{equation}
where $V_{Q}:\mathbb{S}\rightarrow\mathbb{R}$ is the (unique) solution
to the Bellman equation (\ref{eq:bellman}).

\medskip{}

\begin{defn}
A \textbf{strategy} $\sigma$ is a distribution over actions given
states, $\sigma:\mathbb{S}\rightarrow\Delta(\mathbb{X})$, that satisfies
$\sigma(s)\in\Gamma(s)$ for all $s$.
\end{defn}
\medskip{}

Let $\Sigma$ denote the space of all strategies and let $\sigma(x\mid s)$
denote the probability that the agent chooses $x$ when the state
is $s$.\footnote{A standard result is the existence of a \emph{deterministic} optimal
strategy. Nevertheless, allowing for randomization will be important
in the case where the transition probability function is uncertain.}

\medskip{}

\begin{defn}
A strategy $\sigma\in\Sigma$ is \textbf{optimal} for an MDP($Q$)
if, for all $s\in\mathbb{S}$ and all $x\in\mathbb{X}$ such that
$\sigma(x\mid s)>0$, 
\[
x\in\arg\max_{\hat{x}\in\Gamma(s)}\int_{\mathbb{S}}\left\{ \pi(s,\hat{x},s')+\delta V_{Q}(s')\right\} Q(ds'|s,\hat{x}).
\]

\end{defn}
\medskip{}

Let $\Sigma(Q)$ be the set of all strategies that are optimal for
an MDP($Q$).

\medskip{}

\begin{lem}
\label{Lemma:Sigma(Q)}(i) There is a unique solution $V_{Q}$ to
the Bellman equation in (\ref{eq:bellman}), and it is continuous
in $Q$ for all $s\in\mathbb{S}$; (ii) The correspondence of optimal
strategies $Q\mapsto\Sigma(Q)$ is non-empty, compact-valued, convex-valued,
and upper hemicontinuous.\end{lem}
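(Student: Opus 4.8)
The plan is to treat (i) and (ii) in sequence, with the continuity of $V_Q$ in $Q$ serving as the bridge between them. For the uniqueness claim in (i), I would define the Bellman operator $T_Q:\mathbb{R}^{\mathbb{S}}\to\mathbb{R}^{\mathbb{S}}$ by $(T_QV)(s)=\max_{x\in\Gamma(s)}\sum_{s'}\{\pi(s,x,s')+\delta V(s')\}Q(s'\mid s,x)$, where the integral in (\ref{eq:bellman}) collapses to a finite sum because $\mathbb{S}$ is finite. Equipping $\mathbb{R}^{\mathbb{S}}$ with the sup-norm, the elementary estimate $|\max_x f(x)-\max_x g(x)|\le\max_x|f(x)-g(x)|$ shows that $T_Q$ is a contraction of modulus $\delta<1$. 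Since $\mathbb{R}^{\mathbb{S}}$ is complete, the Banach fixed point theorem yields a unique fixed point $V_Q$, which is precisely the unique solution to (\ref{eq:bellman}).

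For continuity of $Q\mapsto V_Q$, I would invoke the uniform (parametric) contraction principle. The map $(Q,V)\mapsto T_QV$ is jointly continuous: each coordinate $\sum_{s'}\{\pi(s,x,s')+\delta V(s')\}Q(s'\mid s,x)$ is continuous in $(Q,V)$, where I note that $\pi$ is automatically bounded and continuous since its domain is finite, and the maximum over the finite set $\Gamma(s)$ preserves continuity. The key point is that the contraction modulus $\delta$ does not depend on $Q$. A standard triangle-inequality argument then gives $\|V_Q-V_{Q_0}\|\le(1-\delta)^{-1}\|T_QV_{Q_0}-T_{Q_0}V_{Q_0}\|$, and the right-hand side vanishes as $Q\to Q_0$ by continuity of the map $Q\mapsto T_QV_{Q_0}$, establishing continuity of $V_Q$ in $Q$.

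For (ii), I would characterize $\Sigma(Q)$ through the optimal-action correspondence $G(s,Q)\equiv\arg\max_{x\in\Gamma(s)}\Phi(s,x,Q)$, where $\Phi(s,x,Q)=\sum_{s'}\{\pi(s,x,s')+\delta V_Q(s')\}Q(s'\mid s,x)$. By definition, $\sigma\in\Sigma(Q)$ if and only if $\mathrm{supp}\,\sigma(\cdot\mid s)\subseteq G(s,Q)$ for every $s$, so $\Sigma(Q)$ is the product over $s$ of the simplices $\Delta(G(s,Q))$. Non-emptiness follows because $G(s,Q)$ is a nonempty argmax over a finite set; convexity and compactness follow because each $\Delta(G(s,Q))$ is a closed, convex face of $\Delta(\mathbb{X})$ and finite products preserve both properties.

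The one genuinely technical step in (ii) is upper hemicontinuity, which I would obtain via a closed-graph argument, valid because the range lies in the compact strategy space $\Sigma$. Suppose $Q_n\to Q$, $\sigma_n\to\sigma$, and $\sigma_n\in\Sigma(Q_n)$. For any $s$ and any $x$ with $\sigma(x\mid s)>0$, componentwise convergence forces $\sigma_n(x\mid s)>0$ for large $n$, hence $x\in G(s,Q_n)$, i.e.\ $\Phi(s,x,Q_n)\ge\Phi(s,\hat x,Q_n)$ for all $\hat x\in\Gamma(s)$; passing to the limit and using continuity of $\Phi$ in $Q$ gives $x\in G(s,Q)$, so $\sigma\in\Sigma(Q)$. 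The essential ingredient throughout is the continuity of $\Phi$, which rests entirely on the continuity of $V_Q$ from (i). I therefore expect the continuity of the value function to be the place requiring the most care; once it is secured, every property asserted in (ii) reduces to elementary facts about argmax correspondences and products of simplices.
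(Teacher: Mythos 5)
Your proposal is correct and takes essentially the same route as the paper: part (i) via the Banach contraction principle together with the uniform-modulus estimate $\Vert V_{Q}-V_{Q_{0}}\Vert\leq(1-\delta)^{-1}\Vert T_{Q}V_{Q_{0}}-T_{Q_{0}}V_{Q_{0}}\Vert$, and part (ii) via the identification of $\Sigma(Q)$ with the product over states of the simplices $\Delta(G(s,Q))$ on the argmax sets, with continuity of $V_{Q}$ from (i) feeding into the optimality correspondence exactly as the paper does. The only difference is cosmetic: you verify upper hemicontinuity by a direct closed-graph argument (positive weights $\sigma(x\mid s)>0$ persist along convergent sequences, so the optimality inequalities pass to the limit by continuity of $\Phi$), whereas the paper invokes the Theorem of the Maximum for $X_{s}(Q)$, lifts upper hemicontinuity to $\Delta(X_{s}(Q))$ by a cited result, and takes products via Tychonoff; both are equally valid in this finite setting.
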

\begin{proof}
The proof is standard and relegated to the Online Appendix.
\end{proof}
\medskip{}

A strategy determines the transitions in the space of states and actions
and, consequently, the set of stationary distributions over states
and actions. For any strategy $\sigma$ and transition probability
function $Q$, define a \textbf{transition kernel} $M_{\sigma,Q}:Gr(\Gamma)\rightarrow\Delta\left(Gr(\Gamma)\right)$
by letting 
\begin{equation}
M_{\sigma,Q}(s',x'\mid s,x)=\sigma(x'\mid s')Q(s'\mid s,x)\label{eq:transition_kernel-1}
\end{equation}
for all $(s,x),(s',x')\in Gr(\Gamma)$. The transition kernel $M_{\sigma,Q}$
is the transition probability function over $Gr(\Gamma)$ given strategy
$\sigma$ and transition probability function $Q$.

For any $m\in\Delta(Gr(\Gamma))$, let $M_{\sigma,Q}[m]\in\Delta(Gr(\Gamma))$
denote the probability measure 
\[
\sum_{(s,x)\in Gr(\Gamma)}M_{\sigma,Q}(\cdot,\cdot\mid s,x)m(s,x).
\]

\begin{defn}
A distribution $m\in\Delta(Gr(\Gamma))$ is a \textbf{stationary (or
invariant) distribution} given $(\sigma,Q)$ if $m=M_{\sigma,Q}[m]$.
\end{defn}
\medskip{}

A stationary distribution represents the steady-state distribution
over outcomes (i.e, states and actions) when the agent follows a given
strategy. Let $I_{Q}(\sigma)\equiv\{m\in\Delta(Gr(\Gamma))\mid m=M_{\sigma,Q}[m]\}$
denote the set of stationary distributions given $(\sigma,Q)$.\medskip{}

\begin{lem}
\label{Lemma:I(sigma)}The correspondence of stationary distributions
$\sigma\mapsto I_{Q}(\sigma)$ is non-empty, compact-valued, convex-valued,
and upper hemicontinuous.\end{lem}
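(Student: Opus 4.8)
The plan is to exploit the finiteness of $Gr(\Gamma)$, which makes $\Delta(Gr(\Gamma))$ a compact, convex subset of a finite-dimensional Euclidean space, and to view $I_Q(\sigma)$ as the fixed-point set of the linear (stochastic) operator $m \mapsto M_{\sigma,Q}[m]$. I would establish the four properties in turn, keeping $Q$ fixed throughout.

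For non-emptiness, I would observe that for each fixed $\sigma$ the map $m \mapsto M_{\sigma,Q}[m]$ carries the simplex $\Delta(Gr(\Gamma))$ into itself, since $M_{\sigma,Q}[m]$ is a convex combination of the probability measures $M_{\sigma,Q}(\cdot,\cdot\mid s,x)$, and it is continuous (indeed linear). Brouwer's fixed-point theorem then yields an $m$ with $m = M_{\sigma,Q}[m]$, so $I_Q(\sigma) \neq \emptyset$; equivalently, this is just the standard existence of a stationary distribution for a finite Markov chain. For compactness and convexity, I would note that $I_Q(\sigma)$ is precisely the solution set of the finite linear system $m = M_{\sigma,Q}[m]$ intersected with $\Delta(Gr(\Gamma))$. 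The solution set of a linear system is an affine subspace, so intersecting it with the compact convex simplex produces a set that is closed (hence compact, being a closed subset of a compact set) and convex.

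The remaining property, upper hemicontinuity, is where I would concentrate. Because the domain and the compact-valued codomain both lie in finite-dimensional space, u.h.c. is equivalent to the correspondence having a closed graph, so I would take sequences $\sigma_n \to \sigma$ and $m_n \to m$ with $m_n \in I_Q(\sigma_n)$ and show $m \in I_Q(\sigma)$. The key observation is that the kernel entries $M_{\sigma,Q}(s',x' \mid s,x) = \sigma(x' \mid s')Q(s' \mid s,x)$ are linear in $\sigma$, so the map $(\sigma, m) \mapsto M_{\sigma,Q}[m]$ is bilinear and therefore jointly continuous on the finite-dimensional product. Passing to the limit in $m_n = M_{\sigma_n,Q}[m_n]$ then gives $m = M_{\sigma,Q}[m]$, i.e. $m \in I_Q(\sigma)$; the graph is closed, and combined with the compactness of the range this delivers u.h.c.

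I expect no serious obstacle, since the finiteness of the state and action spaces collapses every step to elementary facts about stochastic matrices and linear systems. The only point requiring a little care will be the u.h.c. argument: one must invoke the closed-graph characterization of u.h.c. for correspondences with compact range and verify the joint (bilinear) continuity of $(\sigma,m)\mapsto M_{\sigma,Q}[m]$. Once that continuity is in hand, the limit passage is immediate.
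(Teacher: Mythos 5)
Your proposal is correct and follows essentially the same route as the paper's proof: Brouwer's fixed-point theorem on the simplex for non-emptiness, linearity of $m\mapsto M_{\sigma,Q}[m]$ for convexity, and a closed-graph argument (valid because the range lies in the compact set $\Delta(Gr(\Gamma))$) for upper hemicontinuity, where your appeal to joint continuity of the bilinear map $(\sigma,m)\mapsto M_{\sigma,Q}[m]$ in finite dimensions does in one stroke what the paper does via an explicit three-term triangle inequality. Your only departure is cosmetic but slightly cleaner: you get compactness and convexity directly by viewing $I_{Q}(\sigma)$ as the intersection of the simplex with the affine solution set of the linear system $m=M_{\sigma,Q}[m]$, whereas the paper obtains closedness (hence compactness) as a byproduct of the closed-graph argument.
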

\begin{proof}
See the Appendix.
\end{proof}

\section{\label{sec:BMDP}Subjective Markov Decision Processes}

Our main objective is to study the behavior of an agent who faces
an MDP but is uncertain about the transition probability function.
We begin by introducing a new object to model the problem with uncertainty,
which we call the \emph{Subjective} \emph{Markov decision process}
(SMDP). We then define the notion of a Berk-Nash equilibrium of an
SMDP.

\subsection{Setup}
\begin{defn}
\label{def:BMDP}A \textbf{Subjective Markov Decision Process} (SMDP)
is an MDP, $\left\langle \mathbb{S},\mathbb{X},\Gamma,q_{0},Q,\pi,\delta\right\rangle $,
and a nonempty family of transition probability functions, $\mathcal{Q}_{\Theta}=\{Q_{\theta}:\theta\in\Theta\}$,
where each transition probability function $Q_{\theta}:Gr(\Gamma)\rightarrow\Delta(\mathbb{S})$
is indexed by a parameter $\theta\in\Theta$.
\end{defn}
\medskip{}

We interpret the set $\mathcal{Q}_{\Theta}$ as the different transition
probability functions (or models of the world) that the agent considers
possible. We sometimes use SMDP($Q,\mathcal{Q}_{\Theta}$) to denote
an SMDP with true transition probability function $Q$ and a family
of transition probability functions $\mathcal{Q}_{\Theta}$.\medskip{}

\begin{defn}
\label{Def:regular}A \textbf{Regular Subjective Markov Decision Process}
(regular-SMDP) is an SMDP that satisfies the following conditions\end{defn}
\begin{itemize}
\item $\Theta$ is a compact subset of an Euclidean space.
\item $Q_{\theta}(s'\mid s,x)$ is continuous as a function of $\theta\in\Theta$
for all $(s',s,x)\in\mathbb{S}\times Gr(\Gamma)$.
\item There is a dense set $\hat{\Theta}\subseteq\Theta$ such that, for
all $\theta\in\hat{\Theta}$, $Q_{\theta}(s'\mid s,x)>0$ for all
$(s',s,x)\in\mathbb{S}\times Gr(\Gamma)$ such that $Q(s'\mid s,x)>0$.%

\end{itemize}
\medskip{}

The first two conditions in Definition \ref{Def:regular} place parametric
and continuity assumptions on the subjective models.\footnote{Without the assumption of a finite-dimensional parameter space, Bayesian
updating need not converge to the truth for most priors and parameter
values even in correctly specified statistical settings (\cite{freedman1963asymptotic},
\cite{diaconis1986consistency}). Note that the parametric assumption
is only a restriction if the set of states or actions is nonfinite,
a case we consider in some of the examples.} The last condition plays two roles. First, it rules out a stark form
of misspecification by guaranteeing that there exists at least one
parameter value that can rationalize every feasible observation. Second,
it implies that the correspondence of parameters that are a closest
fit to the true model is upper hemicontinuous. \cite{esponda2016berk}
provide a simple (non-dynamic) example where this assumption does
not hold and equilibrium fails to exist.

\subsection{Equilibrium}

The goal of this section is to define the notion of Berk-Nash equilibrium
of an SMDP. The next definition is used to place constraints on the
belief $\mu\in\Delta(\Theta)$ that the agent may hold if $m$ is
the stationary distribution over outcomes.

\medskip{}

\begin{defn}
\label{def:wKLD}The \textbf{weighted Kullback-Leibler divergence}
(wKLD) is a mapping $K_{Q}\colon\Delta(Gr(\Gamma))\times\Theta\rightarrow\mathbb{\bar{R}}_{+}$
such that for any $m\in\Delta(Gr(\Gamma))$ and $\theta\in\Theta$,
\[
K_{Q}(m,\theta)=\sum_{(s,x)\in Gr(\Gamma)}E_{Q(\cdot|s,x)}\left[\ln\left(\frac{Q(S'|s,x)}{Q_{\theta}(S'|s,x)}\right)\right]m(s,x).
\]

The \textbf{set of closest parameter values given }$m\in\Delta(Gr(\Gamma))$
is the set 
\[
\Theta_{Q}(m)\equiv\arg\min_{\theta\in\Theta}K_{Q}(m,\theta).
\]

\end{defn}
\medskip{}

The set $\Theta_{Q}(m)$ contains the parameter values constitute
the best fit with the true transition probability function $Q$ when
outcomes are drawn from the distribution $m$. \medskip{}

\begin{lem}
\label{Lemma:Theta(m)}(i) For every $m\in\Delta(Gr(\Gamma))$ and
$\theta\in\Theta$, $K_{Q}(m,\theta)\geq0$, with equality holding
if and only if $Q_{\theta}(\cdot\mid s,x)=Q(\cdot\mid s,x)$ for all
$(s,x)$ such that $m(s,x)>0$. (ii) For any regular SMDP($Q,\mathcal{Q}_{\Theta}$),
$m\mapsto\Theta_{Q}(m)$ is non-empty, compact valued, and upper hemicontinuous.\end{lem}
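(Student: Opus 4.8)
The plan is to prove the two parts separately, since part (i) is an information-theoretic identity about each individual term of the sum and part (ii) is a topological statement about the arg-min correspondence.

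For part (i), I would first observe that $K_Q(m,\theta)$ is a nonnegative-weighted sum (the weights being $m(s,x)\ge 0$) of the terms
\[
D(s,x)\equiv E_{Q(\cdot|s,x)}\left[\ln\left(\frac{Q(S'|s,x)}{Q_\theta(S'|s,x)}\right)\right]=\sum_{s'\in\mathbb{S}}Q(s'|s,x)\ln\frac{Q(s'|s,x)}{Q_\theta(s'|s,x)}.
\]
Each $D(s,x)$ is the ordinary Kullback--Leibler divergence between the probability vectors $Q(\cdot|s,x)$ and $Q_\theta(\cdot|s,x)$ on the finite set $\mathbb{S}$. The standard fact, proved by applying Jensen's inequality to the concave function $\ln$ (equivalently Gibbs' inequality, using $\ln t\le t-1$ with equality iff $t=1$), gives $D(s,x)\ge 0$ with equality iff $Q(\cdot|s,x)=Q_\theta(\cdot|s,x)$. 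Summing over $(s,x)$ with weights $m(s,x)$ then yields $K_Q(m,\theta)\ge 0$. For the equality characterization, $K_Q(m,\theta)=0$ forces every term with positive weight to vanish, i.e. $D(s,x)=0$ for all $(s,x)$ with $m(s,x)>0$, which by the single-divergence characterization is exactly $Q_\theta(\cdot|s,x)=Q(\cdot|s,x)$ on the support of $m$; the converse is immediate. One technical point I would be careful about is the convention for $\infty$: when $Q(s'|s,x)>0$ but $Q_\theta(s'|s,x)=0$ the summand is $+\infty$, so $K_Q$ genuinely takes values in $\bar{\mathbb{R}}_+$, which is why the codomain is the extended nonnegatives.

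For part (ii), I would invoke Berge's Maximum Theorem (applied to the minimization of $K_Q(m,\cdot)$). Nonemptiness and compact-valuedness follow once I establish that for fixed $m$ the map $\theta\mapsto K_Q(m,\theta)$ is lower semicontinuous on the compact set $\Theta$ and that the minimum is finite, so that the arg-min is a nonempty closed (hence compact) subset of $\Theta$. Continuity of $Q_\theta(s'|s,x)$ in $\theta$ (a regular-SMDP assumption) gives continuity of each summand where it is finite; the only delicacy is the possible blow-up to $+\infty$, handled by lower semicontinuity. Finiteness of the minimum is exactly where the third regularity condition enters: the dense set $\hat\Theta$ on which $Q_\theta(s'|s,x)>0$ whenever $Q(s'|s,x)>0$ guarantees there exist parameters making every relevant log-ratio finite, so $\inf_\theta K_Q(m,\theta)<\infty$. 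Upper hemicontinuity of $m\mapsto\Theta_Q(m)$ then follows from Berge once I verify that $(m,\theta)\mapsto K_Q(m,\theta)$ is jointly lower semicontinuous (it is linear, hence continuous, in $m$ on the finite-dimensional simplex $\Delta(Gr(\Gamma))$, and lower semicontinuous in $\theta$).

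The main obstacle I anticipate is not the algebra but the careful bookkeeping around the extended-real-valued and possibly discontinuous objective. Specifically, the log-ratio terms can equal $+\infty$, so I cannot simply assert joint continuity and quote the textbook Maximum Theorem; I need the lower-semicontinuous version of Berge's theorem (objective minimized is l.s.c., feasible set $\Theta$ is a fixed compact). Establishing lower semicontinuity of $K_Q$ in $\theta$ uniformly enough to survive the $+\infty$ values—and confirming that the third regularity assumption is exactly what rules out the degenerate case where the min is $+\infty$ on an open set of $m$—is the step I would treat most carefully. Everything else reduces to the standard finite-state KL computation and the linearity of $K_Q$ in its first argument.
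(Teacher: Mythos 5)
Your part (i) is correct and matches the paper's argument (Jensen/Gibbs applied cell by cell, weighted by $m$). The problem is part (ii): the step ``joint lower semicontinuity of $K_Q$ $\Rightarrow$ upper hemicontinuity of $\Theta_Q(\cdot)$ via Berge'' is not a valid implication, and your supporting parenthetical is false. First the sub-claim: $m\mapsto K_Q(m,\theta)$ is \emph{not} continuous for fixed $\theta$ --- it is linear with coefficients in $[0,+\infty]$, so if $Q_\theta(s'\mid s,x)=0<Q(s'\mid s,x)$ for some cell, then taking $m_n\rightarrow m$ with $m_n(s,x)=1/n\rightarrow 0=m(s,x)$ gives $K_Q(m_n,\theta)=+\infty$ for all $n$ while $K_Q(m,\theta)$ may be finite. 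Only lower semicontinuity survives. Second, and more fundamentally: the semicontinuous version of Berge's theorem with an l.s.c.\ objective (for minimization) only yields that the value function $K^{*}(m)=\min_{\theta}K_Q(m,\theta)$ is l.s.c.; upper hemicontinuity of the arg-min correspondence additionally requires $K^{*}$ to be \emph{upper} semicontinuous. Indeed, the closed-graph argument runs: for $\theta_n\in\Theta_Q(m_n)$, $\theta_n\rightarrow\theta$, $m_n\rightarrow m$, joint l.s.c.\ gives $K_Q(m,\theta)\leq\liminf_n K^{*}(m_n)$, and one needs $\liminf_n K^{*}(m_n)\leq K^{*}(m)$ to conclude $\theta\in\Theta_Q(m)$. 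That last inequality is exactly u.s.c.\ of the value function and is not free here, precisely because $K_Q(\cdot,\theta)$ fails to be u.s.c.\ in $m$ (the counterexample above). This is not a removable technicality: the paper points to an example (Esponda--Pouzo 2016) where the third regularity condition fails and u.h.c./existence breaks down, so any correct proof must use that condition in the u.h.c.\ step itself, not merely to bound the minimum.

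The repair is to use the dense set $\hat{\Theta}$ for more than finiteness: for $\theta\in\hat{\Theta}$ the coefficients are all finite, so $m\mapsto K_Q(m,\theta)$ \emph{is} genuinely continuous (this is the paper's Claim A(ii)); and by extended-real continuity of $\theta\mapsto K_Q(m,\theta)$ plus denseness, any $\hat{\theta}$ with finite divergence can be approximated by $\hat{\theta}_j\in\hat{\Theta}$ with $K_Q(m,\hat{\theta}_j)\rightarrow K_Q(m,\hat{\theta})$, so $K^{*}(m)=\inf_{\theta\in\hat{\Theta}}K_Q(m,\theta)$, an infimum of functions continuous in $m$, which is u.s.c. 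Combined with your joint l.s.c.\ (the paper's Claim A(iii)) this closes the graph. The paper implements the same idea not by quoting Berge but by a direct sequential contradiction: assuming $\theta\notin\Theta_Q(m)$, it picks a strictly better $\hat{\theta}$, approximates it by $\hat{\theta}_J\in\hat{\Theta}$, uses continuity in $m$ at $\hat{\theta}_J$ to transfer the improvement to $m_n$, and contradicts optimality of $\theta_n$ via joint l.s.c.\ (with a separate branch for $K_Q(m,\theta)=+\infty$ using the uniform bound of Claim A(i)). Either formulation works, but the approximation-through-$\hat{\Theta}$ step is the essential idea your proposal is missing.
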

\begin{proof}
See the Appendix.
\end{proof}
\medskip{}

We now define equilibrium.\medskip{}

\begin{defn}
\label{Def:Berk-Nash}A strategy and probability distribution $(\sigma,m)\in\Sigma\times\Delta(Gr(\Gamma))$
is a \textbf{Berk-Nash} \textbf{equilibrium} of the SMDP($Q,\mathcal{Q}_{\Theta}$)
if there exists a belief $\mu\in\Delta(\Theta)$ such that

(i) $\sigma$ is an optimal strategy for the MDP($\bar{Q}_{\mu}$),
where $\bar{Q}_{\mu}=\int_{\Theta}Q_{\theta}\mu(d\theta)$,

(ii) $\mu\in\Delta(\Theta_{Q}(m))$, and

(iii) $m\in I_{Q}(\sigma)$.
\end{defn}
\medskip{}

Condition (i) in the definition of Berk-Nash equilibrium requires
$\sigma$ to be an optimal strategy in the MDP where the transition
probability function is $\int_{\Theta}Q_{\theta}\mu(d\theta)$. Condition
(ii) requires that the agent only puts positive probability on the
set of closest parameter values given $m$, $\Theta_{Q}(m)$. Finally,
condition (iii) requires $m$ to be a stationary distribution given
$(\sigma,Q)$.\medskip{}

\begin{rem}
In Section \ref{sec:Equilibrium-foundation}, we interpret the set
of equilibria as the set of steady states of a learning environment
where the agent is uncertain about $Q$. The main advantage of the
equilibrium approach is that it allows us to replace a difficult learning
problem with a simpler MDP with a fixed transition probability function.
The cost of this approach is that it can only be used to characterize
asymptotic behavior, as opposed to the actual dynamics starting from
the initial distribution over states, $q_{0}\in\Delta(\mathbb{S})$.
This explains why $q_{0}$ does not enter the definition of equilibrium,
and why a mapping between $q_{0}$ and the set of corresponding equilibria
cannot be provided in general.
\end{rem}
\smallskip{}

\begin{rem}
In the special case of a static environment, Definition \ref{Def:Berk-Nash}
reduces to Esponda and Pouzo's (2016) definition of Berk-Nash equilibrium
for a single agent. In the dynamic environment, outcomes follow a
Markov process and we must keep track not only of strategies but also
of the corresponding stationary distribution over outcomes.

\medskip{}

\end{rem}
The next result establishes existence of equilibrium in any regular
SMDP.\medskip{}

\begin{thm}
\label{The:Existence}For any regular SMDP, there exists a Berk-Nash
equilibrium.\end{thm}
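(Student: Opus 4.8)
The plan is to realize a Berk-Nash equilibrium as a fixed point of a product correspondence and invoke the Kakutani--Fan--Glicksberg fixed-point theorem. Consider the space
\[
\Psi \equiv \Sigma \times \Delta(Gr(\Gamma)) \times \Delta(\Theta),
\]
where $\Sigma$ and $\Delta(Gr(\Gamma))$ are finite-dimensional (since $\mathbb{S}$ and $\mathbb{X}$ are finite) and $\Delta(\Theta)$ carries the weak\textsuperscript{*} topology. Because $\Theta$ is compact, $\Delta(\Theta)$ is a convex, weak\textsuperscript{*}-compact subset of the space of finite signed measures on $\Theta$, a locally convex Hausdorff topological vector space; hence $\Psi$ is a nonempty, convex, compact subset of a locally convex Hausdorff space, which is precisely the setting for the Fan--Glicksberg extension of Kakutani's theorem. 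The reason the infinite-dimensional generalization is needed is exactly that beliefs range over $\Delta(\Theta)$ rather than a simplex.

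Next I would define $\Phi \colon \Psi \rightrightarrows \Psi$ by
\[
\Phi(\sigma, m, \mu) = \Sigma(\bar{Q}_{\mu}) \times I_{Q}(\sigma) \times \Delta(\Theta_{Q}(m)).
\]
By construction, any fixed point $(\sigma^{*}, m^{*}, \mu^{*}) \in \Phi(\sigma^{*}, m^{*}, \mu^{*})$ satisfies exactly the three conditions of Definition \ref{Def:Berk-Nash}: $\sigma^{*}$ is optimal for the MDP($\bar{Q}_{\mu^{*}}$), $\mu^{*} \in \Delta(\Theta_{Q}(m^{*}))$, and $m^{*} \in I_{Q}(\sigma^{*})$. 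So it suffices to check that $\Phi$ has nonempty, convex, compact values and a closed graph (equivalently, is upper hemicontinuous with closed values on the compact domain $\Psi$).

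I would verify these properties factor by factor. For the first factor, the map $\mu \mapsto \bar{Q}_{\mu}$ is continuous from $(\Delta(\Theta), \text{weak}^{*})$ into the finite-dimensional space of transition functions, because each coordinate $\mu \mapsto \int_{\Theta} Q_{\theta}(s' \mid s,x)\,\mu(d\theta)$ is the integral against $\mu$ of a bounded continuous function of $\theta$ (continuity of $Q_{\theta}$ in $\theta$ is guaranteed by the regularity assumption). Composing this continuous map with the correspondence $Q \mapsto \Sigma(Q)$, which by Lemma \ref{Lemma:Sigma(Q)} is nonempty-, compact-, convex-valued and upper hemicontinuous, shows that $\mu \mapsto \Sigma(\bar{Q}_{\mu})$ inherits all of these properties. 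The second factor $\sigma \mapsto I_{Q}(\sigma)$ has exactly these properties by Lemma \ref{Lemma:I(sigma)}. For the third factor, Lemma \ref{Lemma:Theta(m)} gives that $m \mapsto \Theta_{Q}(m)$ is nonempty-, compact-valued and upper hemicontinuous; since $\Theta_{Q}(m)$ is compact, $\Delta(\Theta_{Q}(m))$ is nonempty, convex, and weak\textsuperscript{*}-compact, and one checks that upper hemicontinuity of $m \mapsto \Theta_{Q}(m)$ lifts to upper hemicontinuity of $m \mapsto \Delta(\Theta_{Q}(m))$ in the weak\textsuperscript{*} topology. With each factor nonempty-, convex-, compact-valued and upper hemicontinuous, the product correspondence $\Phi$ is as well, and Fan--Glicksberg yields the desired fixed point.

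The main obstacle is the infinite-dimensionality of the belief space $\Delta(\Theta)$, which rules out the elementary Kakutani theorem and forces one to work in the weak\textsuperscript{*} topology. This makes two steps, trivial in finite dimensions, require genuine care: first, establishing continuity of $\mu \mapsto \bar{Q}_{\mu}$ in the weak\textsuperscript{*} topology (the point at which the continuity of $Q_{\theta}$ in $\theta$ is essential); and second, transferring upper hemicontinuity from $m \mapsto \Theta_{Q}(m)$ to $m \mapsto \Delta(\Theta_{Q}(m))$, where one must verify that a weak\textsuperscript{*} limit of measures supported on $\Theta_{Q}(m_{n})$ is supported on $\Theta_{Q}(m)$ whenever $m_{n} \to m$ — this uses the closed graph of $\Theta_{Q}$ furnished by Lemma \ref{Lemma:Theta(m)}, which in turn is where the regularity assumption on the density of $\hat{\Theta}$ does its work.
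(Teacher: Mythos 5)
Your proposal is correct and follows essentially the same route as the paper's own proof: the same product space $\Sigma\times\Delta(Gr(\Gamma))\times\Delta(\Theta)$ with the weak topology on beliefs, the identical correspondence $(\sigma,m,\mu)\mapsto\Sigma(\bar{Q}_{\mu})\times I_{Q}(\sigma)\times\Delta(\Theta_{Q}(m))$ built from Lemmas \ref{Lemma:Sigma(Q)}, \ref{Lemma:I(sigma)}, and \ref{Lemma:Theta(m)}, and an appeal to the Kakutani--Fan--Glicksberg theorem. The two points you flag as requiring care (weak continuity of $\mu\mapsto\bar{Q}_{\mu}$ and lifting upper hemicontinuity from $\Theta_{Q}(\cdot)$ to $\Delta(\Theta_{Q}(\cdot))$) are exactly the steps the paper handles via the regularity assumption and the cited results on measure-valued correspondences.
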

\begin{proof}
See the Appendix.
\end{proof}
\medskip{}

The standard approach to proving existence begins by defining a ``best
response correspondence'' in the space of strategies. This approach
does not work here because the possible non-uniqueness of beliefs
implies that the correspondence may not be convex valued. The trick
we employ is to define equilibrium via a correspondence on the space
of strategies, stationary distributions, and beliefs, and then use
Lemmas \ref{Lemma:Sigma(Q)}, \ref{Lemma:I(sigma)} and \ref{Lemma:Theta(m)}
to show that this correspondence satisfies the assumptions of a generalized
version of Kakutani's fixed point theorem.\footnote{\cite{esponda2016berk} rely on perturbations to show existence of
equilibrium in a static setting. In contrast, our approach does not
require the use of perturbations.}

\subsection{Correctly specified and identified SMDPs}

An SMDP is correctly specified if the set of subjective models contains
the true model. \medskip{}

\begin{defn}
An SMDP($Q,\mathcal{Q}_{\Theta}$) is \textbf{correctly specified}
if $Q\in\mathcal{Q}_{\Theta}$; otherwise, it is misspecified.
\end{defn}
\medskip{}

In decision problems, data is endogenous and so, following \cite{esponda2016berk},
it is natural to consider two notions of identification: weak and
strong identification. These definitions distinguish between outcomes
on and off the equilibrium path. In a dynamic environment, the right
object to describe what happens on and off the equilibrium path is
not the strategy but rather the stationary distribution over outcomes
$m$. \medskip{}

\begin{defn}
An SMDP is \textbf{weakly identified} \textbf{given} $\mathbf{m}\in\Delta(Gr(\Gamma))$
if $\theta,\theta'\in\Theta_{Q}(m)$ implies that $Q_{\theta}(\cdot\mid s,x)=Q_{\theta'}(\cdot\mid s,x)$
for all $(s,x)\in Gr(\Gamma)$ such that $m(s,x)>0$; if the condition
is satisfied for all $(s,x)\in Gr(\Gamma)$, we say that the \textbf{SMDP
is strongly identified given }$\mathbf{m}$. An SMDP is weakly (strongly)
identified if it is weakly (strongly) identified for all $m\in\Delta(Gr(\Gamma))$.
\end{defn}
\medskip{}

Weak identification implies that, for any equilibrium distribution
$m$, the agent has a unique belief along the equilibrium path, i.e.,
for states and actions that occur with positive probability. It is
a condition that turns out to be important for proving the existence
of equilibria that are robust to experimentation (see Section \ref{sec:Equilibrium-refinements})
and is always satisfied in correctly specified SMDPs.\footnote{The following is an example where weak identification fails. Suppose
an unbiased coin is tossed every period, but the agent believes that
the coin comes up heads with probability 1/4 or 3/4, but not 1/2.
Then both 1/4 and 3/4 minimize the Kullback-Leibler divergence, but
they imply different distributions over outcomes. Relatedly, Berk
(1966) shows that beliefs do not converge.} Strong identification strengthens the condition by requiring that
beliefs are unique also off the equilibrium path.

\medskip{}

\begin{prop}
Consider a correctly specified and strongly identified SMDP with corresponding
MDP($Q$). A strategy and probability distribution $(\sigma,m)\in\Sigma\times\Delta(Gr(\Gamma))$
is a Berk-Nash equilibrium of the SMDP if and only if $\sigma$ is
optimal given MDP($Q$) and $m$ is a stationary distribution given
$\sigma$.\end{prop}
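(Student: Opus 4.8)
The plan is to prove both implications directly, and the crucial preliminary observation is that correct specification forces the minimized weighted Kullback-Leibler divergence to equal zero. Indeed, since $Q\in\mathcal{Q}_{\Theta}$ there is some $\theta^{*}\in\Theta$ with $Q_{\theta^{*}}=Q$, and then $K_{Q}(m,\theta^{*})=0$ for every $m$ directly from the definition of $K_{Q}$ (the log-ratio is identically zero). Combined with part (i) of Lemma \ref{Lemma:Theta(m)} (which gives $K_{Q}\geq 0$), this shows that the minimum of $K_{Q}(m,\cdot)$ is zero and that $\theta^{*}\in\Theta_{Q}(m)$; moreover, any $\theta\in\Theta_{Q}(m)$ must also achieve $K_{Q}(m,\theta)=0$, so by the equality case of Lemma \ref{Lemma:Theta(m)}(i) it satisfies $Q_{\theta}(\cdot\mid s,x)=Q(\cdot\mid s,x)$ for all $(s,x)$ with $m(s,x)>0$. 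These facts are what I would establish first.

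For the ``if'' direction, suppose $\sigma$ is optimal for MDP($Q$) and $m\in I_{Q}(\sigma)$. I would exhibit the required belief by taking $\mu$ to be the point mass on $\theta^{*}$. Then $\bar{Q}_{\mu}=Q_{\theta^{*}}=Q$, so condition (i) of Definition \ref{Def:Berk-Nash} holds because $\sigma$ is optimal for MDP($Q$); condition (ii) holds because $\theta^{*}\in\Theta_{Q}(m)$ by the preliminary step; and condition (iii) is the hypothesis $m\in I_{Q}(\sigma)$. Note this direction uses only correct specification, not identification.

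For the ``only if'' direction, suppose $(\sigma,m)$ is a Berk-Nash equilibrium with witnessing belief $\mu\in\Delta(\Theta_{Q}(m))$. Condition (iii) gives $m\in I_{Q}(\sigma)$ immediately, so it remains to upgrade optimality for MDP($\bar{Q}_{\mu}$) to optimality for MDP($Q$). The natural route is to show $\bar{Q}_{\mu}=Q$ as transition probability functions on all of $Gr(\Gamma)$, after which optimality transfers trivially. Here is where strong identification enters: the support of $\mu$ lies in $\Theta_{Q}(m)$, and $\theta^{*}\in\Theta_{Q}(m)$ as well, so strong identification given $m$ yields $Q_{\theta}(\cdot\mid s,x)=Q_{\theta^{*}}(\cdot\mid s,x)=Q(\cdot\mid s,x)$ for every $\theta\in\Theta_{Q}(m)$ and every $(s,x)\in Gr(\Gamma)$, including off-path pairs. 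Integrating over $\mu$ gives $\bar{Q}_{\mu}=Q$, hence $\sigma$ is optimal for MDP($Q$).

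The main obstacle is precisely the off-path behavior handled in the last step. The preliminary observation only pins down $Q_{\theta}$ on the pairs where $m(s,x)>0$, because the wKLD is $m$-weighted and is blind to pairs of zero measure. But optimality in the Bellman equation (\ref{eq:bellman}) compares the chosen action against all feasible deviations $\hat{x}\in\Gamma(s)$, and the continuation values depend on the full transition function, so a discrepancy between $\bar{Q}_{\mu}$ and $Q$ off path could in principle break the equivalence of the two optimization problems. Strong identification is exactly the hypothesis that closes this gap, and I would emphasize that weak identification, which constrains beliefs only on-path, would not suffice for this argument.
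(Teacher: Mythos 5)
Your proposal is correct and takes essentially the same route as the paper's proof: correct specification plus Lemma \ref{Lemma:Theta(m)}(i) places $\theta^{*}$ (with $Q_{\theta^{*}}=Q$) in $\Theta_{Q}(m)$, the ``if'' direction supports the equilibrium with the point mass on $\theta^{*}$, and the ``only if'' direction invokes strong identification to conclude $Q_{\hat{\theta}}=Q$ for every $\hat{\theta}\in\Theta_{Q}(m)$, so that optimality transfers from MDP($\bar{Q}_{\mu}$) to MDP($Q$). Your write-up merely makes explicit two steps the paper leaves implicit---the choice of the point-mass belief and the integration step $\bar{Q}_{\mu}=\int_{\Theta}Q_{\theta}\,\mu(d\theta)=Q$---and, like the paper's discussion following the proposition, correctly notes that the ``if'' direction needs only correct specification.
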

\begin{proof}
\emph{Only if}: Suppose $(\sigma,m)$ is a Berk-Nash equilibrium.
Then there exists $\mu$ such that $\sigma$ is optimal given MDP($\bar{Q}_{\mu}$),
$\mu\in\Delta(\Theta(m))$, and $m\in I_{Q}(\sigma)$. Because the
SMDP is correctly specified, there exists $\theta^{*}$ such that
$Q_{\theta^{*}}=Q$ and, therefore, by Lemma \ref{Lemma:Theta(m)}(i),
$\theta^{*}\in\Delta(\Theta(m))$. Then, by strong identification,
any $\hat{\theta}\in\Theta(m)$ satisfies $Q_{\hat{\theta}}=Q_{\theta^{*}}=Q$,
implying that $\sigma$ is also optimal given MDP($Q$). \emph{If}:
Let $m\in I_{Q}(\sigma)$, where $\sigma$ is optimal given MDP($Q$).
Because the SMDP is correctly specified, there exists $\theta^{*}$
such that $Q_{\theta^{*}}=Q$ and, therefore, by Lemma \ref{Lemma:Theta(m)}(i),
$\theta^{*}\in\Delta(\Theta(m))$. Thus, $\sigma$ is also optimal
given $Q_{\theta^{*}}$, implying that $(\sigma,m)$ is a Berk-Nash
equilibrium.
\end{proof}
\medskip{}

Proposition 1 says that, in environments where the agent is uncertain
about the transition probability function but her subjective model
is both correctly specified and strongly identified, then Berk-Nash
equilibrium corresponds to the solution of the MDP under correct beliefs
about the transition probability function. If one drops the assumption
that the SMDP is strongly identified, then the ``if'' part of the
proposition continues to hold but the ``only if'' condition does
not hold. In other words, there may be Berk-Nash equilibria of correctly-specified
SMDPs in which the agent has incorrect beliefs off the equilibrium
path. This feature of equilibrium is analogous to the main ideas of
the bandit and self-confirming equilibrium literatures.

\section{Examples\label{sec:Examples}}

We use three classic examples to illustrate how easy it is to use
our framework to expand the scope of the classical dynamic programming
approach.

\subsection{\label{sub:Monopolist}Monopolist with unknown dynamic demand}

The problem of a monopolist facing an unknown, \emph{static} demand
function was first studied by \cite{rothschild1974two} and \cite{nyarko1991learning}
in correctly and misspecified settings, respectively. In the following
example, the monopolist faces a dynamic demand function but incorrectly
believes that demand is static.

\textbf{MDP}: In each period $t$, a monopolist chooses price $x_{t}\in\mathbb{X}=\{L,H\}$,
where $0<L<H$. It then sells $s_{t+1}\in\mathbb{S}=\{0,1\}$ units
at zero cost and obtains profit $\pi(x_{t},s_{t+1})=x_{t}s_{t+1}$.
The probability that $s_{t+1}=1$ is $q_{sx}\equiv Q(1\mid s_{t}=s,x_{t}=x)$,
where $0<q_{sx}<1$ for all $(s,x)\in Gr(\Gamma)=\mathbb{S}\times\mathbb{X}$.\footnote{The set of feasible actions is independent of the state, i.e., $\Gamma(s)=\mathbb{X}$
for all $s\in\mathbb{S}$.} The monopolist wants to maximize expected discounted profits, with
discount factor $\delta\in[0,1)$.

Demand is dynamic in the sense that a sale yesterday increases the
probability of a sale today: $q_{1x}>q_{0x}$ for all $x\in\mathbb{X}$.
Moreover, a higher price reduces the probability of a sale: $q_{sL}>q_{sH}$
for all $s\in\mathbb{S}$. Finally, for concreteness, we assume that
\begin{equation}
\frac{q_{1L}}{q_{1H}}<\frac{H}{L}<\frac{q_{0L}}{q_{0H}}.\label{eq:assumption_priceratio}
\end{equation}
Expression (\ref{eq:assumption_priceratio}) implies that current-period
profits are maximized by choosing price $L$ if there was no sale
last period and price $H$ otherwise (i.e., $Lq_{0L}>Hq_{0H}$ and
$Hq_{1H}>Lq_{1L}$). Thus, the optimal strategy of a myopic monopolist
(i.e., $\delta=0$) who knows the primitives is $\sigma(H\mid0)=0$
and $\sigma(H\mid1)=1$. If, however, the monopolist is sufficiently
patient, it is optimal to always choose price $L$.\footnote{\label{fn:C_delta}Formally, there exists $C_{\delta}\in[q_{1L}/q_{1H},q_{0L}/q_{0H}]$,
where $C_{0}=q_{1L}/q_{1H}$ and $\delta\mapsto C_{\delta}$ is increasing,
such that, if $H/L<C_{\delta}$, the optimal strategy is $\sigma(H\mid0)=\sigma(H\mid1)=0$.}

\textbf{SMDP}. The monopolist does not know $Q$ and believes, incorrectly,
that demand is not dynamic. Formally, $\mathcal{Q}_{\Theta}=\{Q_{\theta}:\theta\in\Theta\}$,
where $\Theta=[0,1]^{2}$ and, for all $\theta=(\theta_{L},\theta_{H})\in\Theta$,
$Q_{\theta}(1\mid s,L)=\theta_{L}$ and $Q_{\theta}(1\mid s,H)=\theta_{H}$
for all $s\in\mathbb{S}$. In particular, $\theta_{x}$ is the probability
that a sale occurs given price $x\in\{L,H\}$, and the agent believes
that it does not depend on $s$. Note that this SMDP is regular. For
simplicity, we restrict attention to equilibria in which the monopolist
does not condition on last period's state, and denote a strategy by
$\sigma_{H}$, the probability that price $H$ is chosen.

\textbf{Equilibrium}. %
\emph{Optimality}. Because the monopolist believes that demand is
static, the optimal strategy is to choose the price that maximizes
current period's profit. Let 
\[
\Delta(\theta)\equiv H\theta_{H}-L\theta_{L}
\]
denote the perceived expected payoff difference of choosing $H$ vs.
$L$ under the belief that the parameter value is $\theta=(\theta_{L},\theta_{H})$
with probability 1. If $\Delta(\theta)>0$, $\sigma_{H}=1$ is the
unique optimal strategy; if $\Delta(\theta)<0$, $\sigma_{H}=0$ is
the unique optimal strategy; and if $\Delta(\theta)=0$, any $\sigma_{H}\in[0,1]$
is optimal.

\emph{Beliefs}. For any $m\in\Delta(\mathbb{S}\times\mathbb{X})$,
the wKLD simplifies to 
\[
K_{Q}(m,\theta)=\sum_{x\in\{L,H\}}m_{\mathbb{X}}(x)\left\{ \bar{s}_{x}(m)\ln\theta_{x}+(1-\bar{s}_{x}(m))\ln(1-\theta_{x})\right\} +Const,
\]
where $\bar{s}_{x}(m)=m_{\mathbb{S}\mid\mathbb{X}}(0\mid x)q_{0x}+m_{\mathbb{S}\mid\mathbb{X}}(0\mid x)q_{1x}$
is the probability of a sale given $x$.

If $\sigma_{L}>0$ and $\sigma_{H}>0$, $\theta_{Q}(m)\equiv(\bar{s}_{L}(m),\bar{s}_{H}(m))$
is the unique parameter value that minimizes the wKLD function. If,
however, one of the prices is chosen with zero probability, there
are no restrictions on beliefs for the corresponding parameter, i.e.,
the set of minimizers is $\Theta_{Q}(m)=\{(\theta_{L},\theta_{H})\in\Theta:\theta_{H}=\bar{s}_{H}(m)\}$
if $\sigma_{L}=0$ and $\Theta_{Q}(m)=\{(\theta_{L},\theta_{H})\in\Theta:\theta_{L}=\bar{s}_{L}(m)\}$
if $\sigma_{H}=0$.

\emph{Stationary distribution}. Fix a strategy $\sigma_{H}$ and denote
a corresponding stationary distribution by $m(\cdot;\sigma_{H})\in\Delta(\mathbb{S}\times\mathbb{X})$.
Since the strategy does not depend on the state,  $m_{\mathbb{S}\mid\mathbb{X}}(\cdot\mid x;\sigma_{H})$
does not depend on $x$ and, therefore, coincides with the marginal
stationary distribution over $\mathbb{S}$, denoted by $m_{\mathbb{S}}(\cdot;\sigma_{H})\in\Delta(\mathbb{S})$.
This distribution is unique and given by the solution to  
\[
m_{\mathbb{S}}(1;\sigma_{H})=(1-m_{\mathbb{S}}(1;\sigma_{H}))((1-\sigma_{H})q_{0L}+\sigma_{H}q_{0H})+m_{\mathbb{S}}(1;\sigma_{H})((1-\sigma_{H})q_{1L}+\sigma_{H}q_{1H}).
\]

\emph{Equilibrium}. We restrict attention to equilibria that are robust
to experimentation (i.e., perfect equilibria; see Section \ref{sec:Equilibrium-refinements})
by focusing on the belief $\theta(\sigma_{H})=(\theta_{L}(\sigma_{H}),\theta_{H}(\sigma_{H}))\equiv\theta_{Q}(m(\cdot;\sigma_{H}))$
for a given strategy $\sigma_{H}\in[0,1]$.\footnote{Both $\sigma_{H}=0$ and $\sigma_{H}=1$ are Berk-Nash equilibria
supported by beliefs $\theta_{H}(0)=0$ and $\theta_{L}(1)=0$, respectively.
These outcomes, however, are not robust to experimentation, and are
eliminated by requiring $\theta_{H}(0)=\lim_{\sigma_{H}\rightarrow0}\bar{s}_{H}(m(\cdot;\sigma_{H}))=\bar{s}_{H}(m(\cdot;0))$,
and similarly for $\theta_{L}(1)$.} Next, let $\Delta(\theta(\sigma_{H}))$ be the \emph{perceived} expected
payoff difference for a given strategy $\sigma_{H}$. Note that $\sigma_{H}\mapsto\Delta(\theta(\sigma_{H}))$
is decreasing\footnote{The reason is that $\frac{d}{d\sigma_{H}}\Delta(\theta(\sigma_{H}))=\frac{d}{d\sigma_{H}}m_{\mathbb{S}}(1;\sigma_{H})\left(H(q_{1H}-q_{0H})+L(q_{1L}-q_{0L})\right)>0$,
since $\frac{d}{d\sigma_{H}}m_{\mathbb{S}}(1;\sigma_{H})<0$ and $q_{1x}>q_{0x}$
for all $x\in\{L,H\}$.}, which means that a higher probability of choosing price $H$ leads
to more pessimistic beliefs about the benefit of choosing $H$ vs.
$L$. Therefore, there exists a unique (perfect) equilibrium strategy.
Figure \ref{fig:monopolist} depicts an example where the equilibrium
is in mixed strategies.\footnote{See \cite{esponda2016berk} for the importance of mixed strategies
in misspecified settings.} Since $\Delta(\theta(0))>0$, an agent who always chooses a low price
must believe in equilibrium that setting a high price would instead
be optimal. Similarly, $\Delta(\theta(1))<0$ implies that an agent
who always chooses a high price must believe in equilibrium that settings
a low price would instead be optimal. Therefore, in equilibrium, the
agent chooses a strictly mixed strategy $\sigma_{H}^{*}\in(0,1)$
such that $\Delta(\theta(\sigma_{H}^{*}))=0$.\footnote{\label{fn:Def_D}More generally, the unique equilibrium is $\sigma_{H}=0$
if $\Delta(\theta(0))<0$ (i.e., $\frac{H}{L}\leq D_{1}\equiv\frac{q_{0L}}{(1-q_{1L})q_{0H}+q_{1H}q_{0L}}$),
$\sigma_{H}=1$ if $\Delta(\theta(1))>0$ (i.e., $\frac{H}{L}\geq D_{2}\equiv(1-q_{1H})\frac{q_{0L}}{q_{0H}}+q_{1L}$),
and $\sigma_{H}^{*}\in(0,1)$ the solution to $\Delta(\theta(\sigma_{H}^{*}))=0$
if $D_{1}<\frac{H}{L}<D_{2}$, where $\frac{q_{1L}}{q_{1H}}<D_{1}<D_{2}<\frac{q_{0L}}{q_{1H}}$.}

\begin{figure}
\centering{}\includegraphics[scale=0.6]{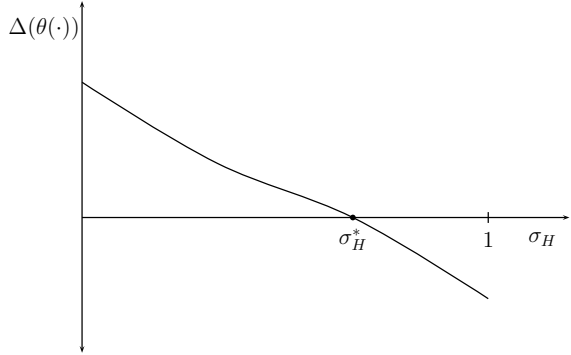}\protect\caption{Equilibrium of the monopoly example\label{fig:monopolist}}
\end{figure}

The misspecified monopolist may end up choosing higher prices than
optimal, since she fails to realize that high prices today cost her
in the future. But, a bit more surprisingly, she also may end up choosing
lower prices for some primitives.\footnote{This happens if $C_{\delta}<H/L<D_{1}$; see footnotes \ref{fn:C_delta}
and \ref{fn:Def_D}. } The reason is that her failure to realize that $H$ does relatively
better in state $s=1$ makes $H$ unattractive to her.

\subsection{Search with uncertainty about future job offers}

Search-theoretic models have been central to understanding labor markets
since \cite{mccall1970economics}. Most of the literature assumes
that the worker knows all the primitives. Exceptions include \cite{rothschild1974searching}
and \cite{burdett1988declining}, wherein the worker does not know
the wage distribution but has a correctly-specified model. In contrast,
we study a worker or entrepreneur who \emph{knows} the distribution
of wages or returns for new projects but does not know the probability
that she would be able to find a new job or fund a new project. The
worker or entrepreneur, however, does not realize that she is fired
or her project fails with higher probability in times in which it
is actually harder to find a new job or fund a new project. We show
that the worker or entrepreneur becomes pessimistic about the chances
of finding new prospects and sub-optimally accepts prospects with
low returns in equilibrium.

\textbf{MDP}. At the beginning of each period $t$, a worker (or entrepreneur)
faces a wage offer (or a project with returns) $w_{t}\in\mathbb{S}=[0,1]$
and decides whether to reject or accept it, $x_{t}\in\mathbb{X}=\{0,1\}$.\footnote{The set of feasible actions is independent of the state, i.e., $\Gamma(w)=\mathbb{X}$
for all $w\in\mathbb{S}$.} Her payoff in period $t$ is $\pi(w_{t},x_{t})=w_{t}x_{t}$; i.e,
she earns $w_{t}$ if she accepts and zero otherwise. After making
her decision, an economic fundamental $z_{t}\in\mathbb{Z}$ is drawn
from an i.i.d. distribution $G$.\footnote{To simplify the notation, we assume the fundamental is unobserved,
although the results are identical if it is observed, since it is
i.i.d. and it is realized after the worker makes her decision.} If the worker is employed, she is fired (or the project fails) with
probability $\gamma(z_{t})$. If the worker is unemployed (either
because she was employed and then fired or because she did not accept
employment at the beginning of the period), then with probability
$\lambda(z_{t})$ she draws a new wage $w_{t+1}\in[0,1]$ according
to some absolutely continuous distribution $F$ with density $f$;
wages are independent and identically distributed across time. With
probability $1-\lambda(z_{t})$, the unemployed worker receives no
wage offer, and we denote the corresponding state by $w_{t+1}=0$
without loss of generality. The worker will have to decide whether
to accept or reject $w_{t+1}$ at the beginning of next period. If
the worker accepted employment at wage $w_{t}$ at the beginning of
time $t$ and was not fired, then she starts next period with wage
offer $w_{t+1}=w_{t}$ and will again have to decide whether to quit
or remain in her job at that offer.\footnote{Formally, $Q(w'\mid w,x)$ is as follows: If $x=1$, then $w'=w$
with probability $1-\overline{\gamma}$, $w'$ is a draw from $F$
with probability $E[\gamma(Z)\lambda(Z)]$, and $w'=0$ with probability
$E[\gamma(Z)(1-\lambda(Z))]$; If $x=0$, then $w'$ is a draw from
$F$ with probability $\overline{\lambda}$ and $w'=0$ with probability
$1-\overline{\lambda}$.} The agent wants to maximize discounted expected utility with discount
factor $\delta\in[0,1)$. Suppose that $\overline{\gamma}\equiv E[\gamma(Z)]>0$
and $\overline{\lambda}\equiv E[\lambda(Z)]>0$.

We assume that $Cov(\gamma(Z),\lambda(Z))<0$; for example, the worker
is more likely to get fired and less likely to receive an offer when
economic fundamentals are strong, and the opposite holds when fundamentals
are weak.

\textbf{SMDP}. The worker knows all the primitives except $\lambda(\cdot)$,
which determines the probability of receiving an offer. The worker
has a misspecified model of the world and believes $\lambda(\cdot)$
does not depend on the economic fundamental, i.e., $\lambda(z)=\theta$
for all $z\in\mathbb{Z}$ , where $\theta\in[0,1]$ is the unknown
parameter.\footnote{The results are identical if the agent is also uncertain of $\gamma(\cdot)$;
given the current misspecification, the agent only cares about the
expectation of $\gamma$ and will have correct beliefs about it.} The transition probability function $Q_{\theta}(w'\mid w,x)$ is
as follows: If $x=1$, then $w'=w$ with probability $1-\theta$,
$w'$ is a draw from $F$ with probability $\theta\overline{\gamma}$,
and $w'=0$ with probability $(1-\theta)\overline{\gamma}$; If $x=0$,
then $w'$ is a draw from $F$ with probability $\theta$ and $w'=0$
with probability $1-\theta$.

\textbf{Equilibrium}. \emph{Optimality}. Suppose that the worker believes
that the true parameter is $\theta$ with probability 1. The value
of receiving wage offer $w\in\mathbb{S}$ is 
\begin{align*}
V(w) & =\max\left\{ w+\delta\left((1-\overline{\gamma})V(w)+(1-\theta)\overline{\gamma}V(0)+\theta\overline{\gamma}E[V(W')]\right),\right.\\
 & \left.\,\,\,\,\,\,\,\,\,\,\,\,\,\,\,\,\,\,\,\,\,\,0+\delta\left(\theta E[V(W')]+(1-\theta)V(0)\right)\right\} .
\end{align*}
 By standard arguments, her optimal strategy is a stationary reservation
wage strategy $w(\theta)$ that solves the following equation: 
\begin{equation}
w(\theta)(1-\delta+\delta\overline{\gamma})=\delta\theta(1-\overline{\gamma})\int_{w>w(\theta)}\left(w-w(\theta)\right)F(dw).\label{eq:wage(theta)}
\end{equation}
The worker accepts wages above the reservation wage and rejects wages
below it. Also, $\theta\mapsto w(\theta)$ is increasing: The higher
is the probability of receiving a wage offer, then the more she is
willing to wait for a better offer in the future. Figure \ref{fig:search}
depicts an example.

\emph{Beliefs}. For any $m\in\Delta(\mathbb{S}\times\mathbb{X})$,
the wKLD simplifies to 
\begin{align*}
K_{Q}(m,\theta) & =\int_{\mathbb{S}\times\mathbb{X}}E_{Q(\cdot\mid\tilde{w},x)}\Bigl[\ln\frac{Q(W'\mid\tilde{w},x)}{Q_{\theta}(W'\mid\tilde{w},x)}\Bigr]m(d\tilde{w},dx)\\
 & =\Bigl\{ E[\gamma\lambda]\ln\frac{E[\gamma\lambda]}{\overline{\gamma}\theta}+E[\gamma(1-\lambda)]\ln\frac{E[\gamma(1-\lambda)]}{\overline{\gamma}(1-\theta)}\Bigr\} m_{\mathbb{X}}(1)\\
 & +\Bigl\{\overline{\lambda}\ln\frac{\overline{\lambda}}{\theta}+(1-\overline{\lambda})\ln\frac{1-\overline{\lambda}}{1-\theta}\Bigr\} m_{\mathbb{X}}(0),
\end{align*}
where the density of $W'$ cancels out because the workers knows it
and where $m_{\mathbb{X}}$ is the marginal distribution over $\mathbb{X}$.
In the Online Appendix, we show that the unique parameter that minimizes
$K_{Q}(m,\cdot)$ is 
\begin{equation}
\theta_{Q}(m)\equiv\frac{m_{\mathbb{X}}(0)}{m_{\mathbb{X}}(0)+m_{\mathbb{X}}(1)\overline{\gamma}}\overline{\lambda}+\Bigl(1-\frac{m_{\mathbb{X}}(0)}{m_{\mathbb{X}}(0)+m_{\mathbb{X}}(1)\overline{\gamma}}\Bigr)\Bigl(\overline{\lambda}+\frac{Cov(\gamma,\lambda)}{\overline{\gamma}}\Bigr).\label{eq:eqmtheta*_mis}
\end{equation}
To see the intuition behind equation (\ref{eq:eqmtheta*_mis}), note
that the agent only observes the realization of $\lambda$, i.e.,
whether she receives a wage offer, when she is unemployed. Unemployment
can be voluntary or involuntary. In the first case, the agent rejects
the offer and, since this decision happens before the fundamental
is realized, it is independent of getting or not an offer. Thus, with
conditional on unemployment being voluntary, the agent will observe
an unbiased average probability of getting an offer, $\overline{\lambda}$
(see the first term in the RHS of (\ref{eq:eqmtheta*_mis})). In the
second case, the agent accepts the offer but is then fired. Since
$Cov(\gamma,\lambda)<0$, she is less likely to get an offer in periods
in which she is fired and, because she does not account for this correlation,
she will have a more pessimistic view about the probability of receiving
a wage offer relative to the average probability $\overline{\lambda}$
(the second term in the RHS of (\ref{eq:eqmtheta*_mis}) captures
this bias).

\emph{Stationary distribution}. Fix a reservation wage strategy $w$
and denote the marginal over $\mathbb{X}$ of the corresponding stationary
distribution by $m_{\mathbb{X}}(\cdot;w)\in\Delta(\mathbb{X})$. In
the Online Appendix, we characterize $m_{\mathbb{X}}(\cdot;w)$ and
show that $w\mapsto m_{\mathbb{X}}(0;w)$ is increasing. Intuitively,
the more selective the worker, the higher the chance of being unemployed.

\emph{Equilibrium}. Let $\theta(\omega)\equiv\theta_{Q}(m(\cdot;w))$
denote the equilibrium belief for an agent following reservation wage
strategy $w$. The weight on $\overline{\lambda}$ in equation (\ref{eq:eqmtheta*_mis})
represents the probability of voluntary unemployment conditional on
unemployment. This weight is increasing in $\omega$ because $w\mapsto m_{\mathbb{X}}(0;w)$
is increasing. Therefore, $w\mapsto\theta(w)$ is increasing. In the
extreme case in which $w=1$, the worker rejects all offers, unemployment
is always voluntary, and the bias disappears, $\theta(1)=\overline{\lambda}$.
An example of the schedule $\theta(\cdot)$ is depicted in Figure
\ref{fig:search}. The set of Berk-Nash equilibria is given by the
intersection of $w(\cdot)$ and $\theta(\cdot)$. In the example depicted
in Figure \ref{fig:search}, there is a unique equilibrium strategy
$w^{M}=w(\theta^{M})$, where $\theta^{M}<\overline{\lambda}$.

\begin{figure}
\centering{}\includegraphics[scale=0.6]{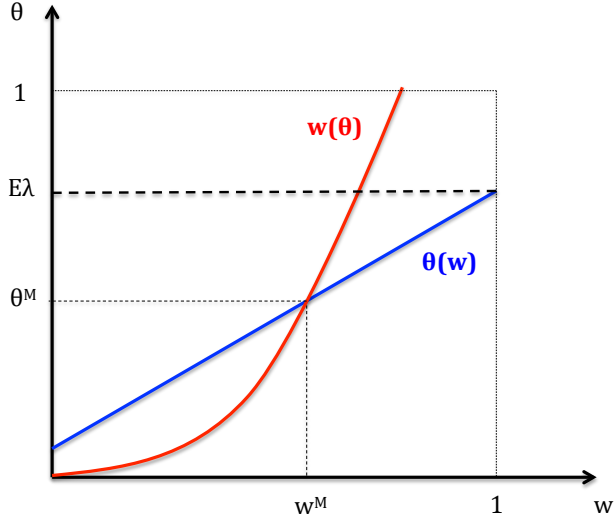}\protect\caption{Equilibrium of the search model\label{fig:search}}
\end{figure}

We conclude by comparing Berk-Nash equilibria to the optimal strategy
of a worker who knows the primitives, $w^{*}$. By standard arguments,
$w^{*}$ is the unique solution to 
\begin{equation}
w^{\ast}(1-\delta+\delta\overline{\gamma})=\delta(\overline{\lambda}-E[\gamma\lambda])\int_{w>w^{\ast}}\left(w-w^{\ast}\right)F(dw).\label{eq:w*-well-1}
\end{equation}
The only difference between equations (\ref{eq:wage(theta)}) and
(\ref{eq:w*-well-1}) appears in the term multiplying the RHS, which
captures the cost of accepting a wage offer. In the misspecified case,
this term is $\delta\theta(1-\overline{\gamma})$; in the correct
case, it is $\delta(\overline{\lambda}-E[\gamma\lambda])=\delta\overline{\lambda}(1-\overline{\gamma})-\delta Cov(\gamma,\lambda)$.
The misspecification affects the optimal threshold in two ways. First,
the misspecified agent estimates the mean of $\lambda$ incorrectly,
i.e., $\theta<\overline{\lambda}$; therefore, she (incorrectly) believes
that, in expectation, offers arrive with lower probability. Second,
she does not realize that, because $Cov(\gamma,\lambda)<0$, she is
less likely to receive an offer when fired. Both effects go in the
same direction and make the option to reject and wait for the possibility
of drawing a new wage offer next period less attractive for the misspecified
worker. Formally, $\theta\delta(1-\overline{\gamma})<\delta\overline{\lambda}(1-\overline{\gamma})-\delta Cov(\gamma,\lambda)$
and so $w^{M}<w^{*}$.

\subsection{Stochastic growth with correlated shocks}

Stochastic growth models have been central to studying optimal intertemporal
allocation of capital and consumption since the work of \cite{brock1972optimal}.
\cite{freixas1981optimal} and \cite{koulovatianos2009optimal}
assume that agents learn the distribution over productivity shocks
with correctly specified models. We follow \cite{hall1997macroeconomic}
and subsequent literature in incorporating shocks to both preferences
and productivity, but assume that these shocks are (positively) correlated.
We show that agents who fail to account for the correlation of shocks
underinvest in equilibrium.

\textbf{MDP}. In each period $t$, an agent observes $s_{t}=(y_{t},z_{t})\in\mathbb{S}=\mathbb{R}_{+}\times\{L,H\}$,
where $y_{t}$ is income from the previous period and $z_{t}$ is
a current utility shock, and chooses how much income to save, $x_{t}\in\Gamma(y_{t},z_{t})=[0,y_{t}]\subseteq\mathbb{X}=\mathbb{R}_{+}$,
consuming the rest. Current period utility is $\pi(y_{t},z_{t},x_{t})=z_{t}\ln(y_{t}-x_{t})$.
Income next period, $y_{t+1}$, is given by 
\begin{equation}
\ln y_{t+1}=\alpha^{*}+\beta^{*}\ln x_{t}+\varepsilon_{t},\label{eq:production}
\end{equation}
where $\varepsilon_{t}=\gamma^{*}z_{t}+\xi_{t}$ is an unobserved
productivity shock, $\xi_{t}\sim N(0,1)$, and $0<\delta\beta^{*}<1$,
where $\delta\in[0,1)$ is the discount factor. We assume that $\gamma^{*}>0$,
so that the utility and productivity shocks are positively correlated.
Let $0<L<H$ and let $q\in(0,1)$ be the probability that the shock
is $H$.\footnote{Formally, $Q(y',z'\mid y,z,x)$ is such that $y'$ and $z'$ are independent,
$y'$ has a log-normal distribution with mean $\alpha^{*}+\beta^{*}\ln x+\gamma^{*}z$
and unit variance, and $z'=H$ with probability $q$.}

\textbf{SMDP}. The agent believes that 
\begin{equation}
\ln y_{t+1}=\alpha+\beta\ln x_{t}+\varepsilon_{t},\label{eq:production_subjective}
\end{equation}
where $\varepsilon_{t}\sim N(0,1)$ and is \emph{independent} of the
utility shock. For simplicity, we assume that the agent knows the
distribution of the utility shock, and is uncertain about $\theta=(\alpha,\beta)\in\Theta=\mathbb{R}^{2}$.
The subjective transition probability function $Q_{\theta}(y',z'\mid y,z,x)$
is such that $y'$ and $z'$ are independent, $y'$ has a log-normal
distribution with mean $\alpha+\beta\ln x$ and unit variance, and
and $z'=H$ with probability $q$. The agent has a misspecified model
because she believes that the productivity and utility shocks are
independent when in fact $\gamma^{*}\neq0$.

\textbf{Equilibrium}. \emph{Optimality}. The Bellman equation for
the agent is 
\[
V(y,z)=\max_{0\leq x\leq y}z\ln(y-x)+\delta E\left[V(Y',Z')\mid x\right]
\]
and it is straightforward to verify that the optimal strategy is to
invest a fraction of income that depends on the utility shock and
the unknown parameter $\beta$, i.e., $x=A_{z}(\beta)\cdot y$, where
$A_{L}(\beta)=\frac{\delta\beta((1-q)L+qH)}{(1-\delta\beta(1-q))H+\delta\beta(1-q)}$
and $A_{H}(\beta)=\frac{\delta\beta((1-q)L+qH)}{\delta\beta qH+(1-\delta\beta q)L}<A_{L}(\beta)$.
For the agent who knows the primitives, the optimal strategy is to
invest fractions $A_{L}(\beta^{*})$ and $A_{H}(\beta^{*})$ in the
low and high state, respectively. Since $\beta\mapsto A_{z}(\beta)$
is increasing, the equilibrium strategy of a misspecified agent can
be compared to the optimal strategy by comparing the equilibrium belief
about $\beta$ with the true $\beta^{*}$.

\emph{Beliefs} \emph{and stationary distribution}. Let $A=(A_{L},A_{H})$,
with $A_{H}<A_{L}$, represent a strategy, where $A_{z}$ is the proportion
of income invested given utility shock $z$. Because the agent believes
that $\varepsilon_{t}$ is independent of the utility shock and normally
distributed, minimizing the wKLD function is equivalent to performing
an OLS regression of equation (\ref{eq:production_subjective}). Thus,
for a strategy represented by $A=(A_{L},A_{H})$, the parameter value
$\hat{\beta}(A)$ that minimizes wKLD is 
\begin{align*}
\hat{\beta}(A) & =\frac{Cov(\ln Y',\ln X)}{Var(\ln X)}=\frac{Cov(\ln Y',\ln A_{Z}Y)}{Var(\ln A_{Z}Y)}\\
 & =\beta^{*}+\gamma^{*}\frac{Cov(Z,\ln A_{Z})}{Var(\ln A_{Z})+Var(Y)}.
\end{align*}
where $Cov$ and $Var$ are taken with respect to the (true) stationary
distribution of $(Y,Z)$. Since $A_{H}<A_{L}$, then $Cov(Z,\ln A_{Z})<0$.
Therefore, the assumption that $\gamma^{*}>0$ implies that the bias
$\hat{\beta}(A)-\beta^{*}$ is negative and its magnitude depends
on the strategy $A$. Intuitively, the agent invests a larger fraction
of income when $z$ is low, which happens to be during times when
$\varepsilon$ is also low.

\emph{Equilibrium}. We establish that there exists at least one equilibrium
with positive investment by showing that there is at least one fixed
point of the function $\hat{\beta}(A_{L}(\beta),A_{H}(\beta))$.\footnote{Our existence theorem is not directly applicable because we have assumed,
for convenience, nonfinite state and action spaces.} The function is continuous in $\beta$ and satisfies $\hat{\beta}(A_{L}(0),A_{H}(0))=\hat{\beta}(A_{L}(1/\delta),A_{H}(1/\delta))=\beta^{*}$
and $\hat{\beta}(A_{L}(\beta),A_{H}(\beta))<\beta^{*}$ for all $\beta\in(0,1/\delta)$.
Then, since $\delta\beta^{*}<1$, there is at least one fixed point
$\beta^{M}$, and any fixed point satisfies $\beta^{M}\in(0,\beta^{*})$.
Thus, the misspecified agent underinvests in equilibrium compared
to the optimal strategy.\footnote{It is also an equilibrium not to invest, $A=(0,0)$, supported by
the belief $\beta^{*}=0$, which cannot be disconfirmed since investment
does not take place. But this equilibrium is not robust to experimentation
(i.e., it is not perfect; see Section \ref{sec:Equilibrium-refinements}).} The conclusion is reversed if $\gamma^{*}<0$, illustrating how the
framework provides predictions about beliefs and behavior that depend
on the primitives (as opposed to simply postulating that the agent
is over or under-confident about productivity).

\section{\label{sec:Equilibrium-foundation}Equilibrium foundation}

In this section, we provide a learning foundation for the notion of
Berk-Nash equilibrium of SMDPs. We fix an SMDP and assume that the
agent is Bayesian and starts with a prior $\mu_{0}\in\Delta(\Theta)$
over her set of models of the world. She observes past actions and
states and uses this information to update her beliefs about $\Theta$
in every period.\medskip{}

\begin{defn}
\label{Def:Bayesian}For any $(s,x,s')\in Gr(\Gamma)\times\mathbb{S}$,
let $B(s,x,s',\cdot):D_{s,x,s'}\rightarrow\Delta(\Theta)$ denote
the Bayesian operator: For all $A\subseteq\Theta$ Borel 
\begin{equation}
B(s,x,s',\mu)(A)=\frac{\int_{A}Q_{\theta}(s'\mid s,x)\mu(d\theta)}{\int_{\Theta}Q_{\theta}(s'\mid s,x)\mu(d\theta)}.\label{eq:bayesian_operator-1-1}
\end{equation}
 for any $\mu\in D_{s,x,s'}$, where $D_{s,x,s'}=\{p\in\Delta(\Theta)\colon\int_{\Theta}Q_{\theta}(s'\mid s,x)p(d\theta)>0\}$.%

\end{defn}
\medskip{}

\begin{defn}
A \textbf{Bayesian Subjective Markov Decision Process} (Bayesian-SMDP)
is an SMDP($Q,\mathcal{Q}_{\Theta}$) together with a prior $\mu_{0}\in\Delta(\Theta)$
and the Bayesian operator $B$ (see Definition \ref{Def:Bayesian}).
It is said to be \textbf{regular} if the corresponding SMDP is regular.
\end{defn}
\medskip{}

By the Principle of Optimality, the agent's problem in a Bayesian-SMDP
can be cast recursively as 
\begin{equation}
W(s,\mu)=\max_{x\in\Gamma(s)}\int_{\mathbb{S}}\left\{ \pi(s,x,s')+\delta W(s',\mu')\right\} \bar{Q}_{\mu}(ds'|s,x),\label{eq:bellman-perturbedBMDP}
\end{equation}
where $\bar{Q}_{\mu}=\int_{\Theta}Q_{\theta}\mu(d\theta)$, $\mu'=B(s,x,s',\mu)$
is next period's belief, updated using Bayes' rule, and $W:\mathbb{S}\times\Delta(\Theta)\rightarrow\mathbb{R}$
is the (unique) solution to the Bellman equation (\ref{eq:bellman-perturbedBMDP}).
Compared to the case where the agent knows the transition probability
function, the agent's belief about $\Theta$ is now part of the state
space.

\medskip{}

\begin{defn}
A \textbf{policy function} is a function $f:\Delta(\Theta)\rightarrow\Sigma$
mapping beliefs into strategies (recall that a strategy is a mapping
$\sigma:\mathbb{S}\rightarrow\Delta(\mathbb{X})$). For any belief
$\mu\in\Delta(\Theta)$, state $s\in\mathbb{S}$, and action $x\in\mathbb{X}$,
let $f(x\mid s,\mu)$ denote the probability that the agent chooses
$x$ when selecting policy function $f$. A policy function $f$ is
\textbf{optimal} for the Bayesian-SMDP if, for all $s\in\mathbb{S}$,
$\mu\in\Delta(\Theta)$, and $x\in\mathbb{X}$ such that $f(x\mid s,\mu)>0$,
\[
x\in\arg\max_{\hat{x}\in\Gamma(s)}\int_{\mathbb{S}}\left\{ \pi(s,\hat{x},s')+\delta W(s',\mu')\right\} \bar{Q}_{\mu}(ds'|s,\hat{x}).
\]

\end{defn}
\medskip{}

For each $\mu\in\Delta(\Theta)$, let $\bar{\Sigma}(\mu)\subseteq\Sigma$
denote the set of all strategies that are induced by a policy that
is \emph{optimal}, i.e., 
\[
\bar{\Sigma}(\mu)=\bigl\{\sigma\in\Sigma:\exists\,\,\mbox{optimal }f\,\,\mbox{such that \ensuremath{\sigma(\cdot\mid s)=f(\cdot\mid s,\mu)}for all \ensuremath{s\in\mathbb{S}}}\bigr\}.
\]

\medskip{}

\begin{lem}
\label{Lemma:Sigma(Q)-1}(i) There is a unique solution $W$ to the
Bellman equation in (\ref{eq:bellman-perturbedBMDP}), and it is continuous
in $\mu$ for all $s\in\mathbb{S}$; (ii) The correspondence of optimal
strategies $\mu\mapsto\bar{\Sigma}(\mu)$ is non-empty, compact-valued,
convex-valued, and upper hemicontinuous.\end{lem}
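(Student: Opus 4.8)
The plan is to treat (\ref{eq:bellman-perturbedBMDP}) as a standard dynamic programming problem on the enlarged state space $\mathbb{S}\times\Delta(\Theta)$, establishing (i) by a contraction-mapping argument and (ii) by Berge's maximum theorem, exactly paralleling Lemma \ref{Lemma:Sigma(Q)} but with the belief $\mu$ now part of the state. Since $\Theta$ is compact, $\Delta(\Theta)$ endowed with the weak* topology is a compact metrizable space, so $\mathbb{S}\times\Delta(\Theta)$ is a compact metric space and the space $C(\mathbb{S}\times\Delta(\Theta))$ of bounded continuous functions, with the sup norm, is a Banach space. I would define the Bellman operator $T$ so that $(TW)(s,\mu)$ equals the right-hand side of (\ref{eq:bellman-perturbedBMDP}), and show that $T$ maps $C(\mathbb{S}\times\Delta(\Theta))$ into itself and is a contraction of modulus $\delta$. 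Monotonicity and discounting (Blackwell's sufficient conditions) are immediate because $\delta<1$ and $\pi$ is bounded ($Gr(\Gamma)$ is finite), so the only real content is that $T$ preserves continuity; the unique fixed point then yields existence, uniqueness, and continuity of $W$ in $\mu$, proving part (i).

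The crux, and the step I expect to be the main obstacle, is continuity of $T$ in $\mu$, because the belief transition $\mu'=B(s,x,s',\mu)$ can fail to be continuous in $\mu$ --- indeed it is undefined --- exactly where the one-step predictive probability $\bar{Q}_{\mu}(s'\mid s,x)=\int_{\Theta}Q_{\theta}(s'\mid s,x)\,\mu(d\theta)$ vanishes. Because $\mathbb{S}$ is finite, I would write the integral in (\ref{eq:bellman-perturbedBMDP}) as the finite sum $\sum_{s'}\bigl\{\pi(s,x,s')+\delta W(s',B(s,x,s',\mu))\bigr\}\bar{Q}_{\mu}(s'\mid s,x)$ and treat each summand separately. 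At a belief $\mu_{0}$ with $\bar{Q}_{\mu_{0}}(s'\mid s,x)>0$, continuity of $\theta\mapsto Q_{\theta}(s'\mid s,x)$ gives, by the definition of weak* convergence, both that $\mu\mapsto\bar{Q}_{\mu}(s'\mid s,x)$ is continuous and that $\mu\mapsto B(s,x,s',\mu)$ is weak* continuous near $\mu_{0}$ (numerator and denominator of $B$ converge, and the denominator has a strictly positive limit), so the summand is continuous there. At a belief $\mu_{0}$ with $\bar{Q}_{\mu_{0}}(s'\mid s,x)=0$ the posterior need not converge, but boundedness of $W$ rescues continuity: $|W(s',B(s,x,s',\mu))|\,\bar{Q}_{\mu}(s'\mid s,x)\le\Vert W\Vert_{\infty}\,\bar{Q}_{\mu}(s'\mid s,x)\to 0$, which equals the value of the summand at $\mu_{0}$. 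Hence each summand, and therefore $TW$, is continuous, and $T$ is a self-map of $C(\mathbb{S}\times\Delta(\Theta))$.

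For part (ii), I would set $G(s,x,\mu)\equiv\sum_{s'}\bigl\{\pi(s,x,s')+\delta W(s',B(s,x,s',\mu))\bigr\}\bar{Q}_{\mu}(s'\mid s,x)$, the perceived value of playing $x$ at $(s,\mu)$. The same two-case argument --- now with the continuous fixed point $W$ from part (i) in place of a generic bounded function --- shows $G(s,x,\cdot)$ is continuous in $\mu$ for each of the finitely many pairs $(s,x)$. Since $\Gamma(s)$ is finite, Berge's maximum theorem (or a direct argument) implies the optimal-action correspondence $X^{*}(s,\mu)\equiv\arg\max_{x\in\Gamma(s)}G(s,x,\mu)$ is non-empty, compact-valued, and upper hemicontinuous in $\mu$. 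Unwinding the definition of $\bar{\Sigma}$, a strategy $\sigma$ lies in $\bar{\Sigma}(\mu)$ if and only if $\sigma(\cdot\mid s)\in\Delta\bigl(X^{*}(s,\mu)\bigr)$ for every $s$ (any such $\sigma$ extends to a full optimal policy by selecting argmax actions at every other belief), so $\bar{\Sigma}(\mu)=\prod_{s\in\mathbb{S}}\Delta\bigl(X^{*}(s,\mu)\bigr)$. This product inherits non-emptiness, compactness, and convexity from the sets $\Delta\bigl(X^{*}(s,\mu)\bigr)$, and upper hemicontinuity from that of $X^{*}$ via a closed-graph argument. The only genuinely new ingredient relative to Lemma \ref{Lemma:Sigma(Q)} is the continuous dependence of the Bayesian continuation value on $\mu$ handled in the second paragraph.
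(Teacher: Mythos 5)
Your proposal is correct, and it takes the route the paper itself gestures at: the paper never writes this proof out (its Online Appendix proves only Lemma \ref{Lemma:Sigma(Q)} and declares the present lemma ``analogous''), and your argument is precisely the contraction-plus-maximum-theorem scheme of that proof, transplanted to the enlarged state space. The comparison is still worth recording, because the analogy is not automatic in part (i) and your proposal supplies the missing content. In Lemma \ref{Lemma:Sigma(Q)}, continuity is in the \emph{external} parameter $Q$ and is obtained by the perturbation bound $\Vert V_{Q_{n}}-V_{Q}\Vert\leq\delta\Vert V_{Q_{n}}-V_{Q}\Vert+\Vert T_{Q_{n}}[V_{Q}]-T_{Q}[V_{Q}]\Vert$; in (\ref{eq:bellman-perturbedBMDP}) the belief $\mu$ is part of the \emph{state}, so that argument has no direct counterpart, and continuity must instead be built into the function space---a contraction on $C(\mathbb{S}\times\Delta(\Theta))$---with the real work being that the Bellman operator preserves continuity even though the Bayes map $\mu\mapsto B(s,x,s',\mu)$ is undefined (and cannot be continuously extended) where $\bar{Q}_{\mu}(s'\mid s,x)=0$. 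Your two-case treatment---weak* convergence of numerator and denominator when the predictive probability is positive, and domination of the summand by $\bigl(\max|\pi|+\delta\Vert W\Vert_{\infty}\bigr)\bar{Q}_{\mu}(s'\mid s,x)\rightarrow0$ when it vanishes---is exactly the step that the paper's ``standard/analogous'' label glosses over, and it is also where the regularity conditions (compactness of $\Theta$, continuity of $\theta\mapsto Q_{\theta}$) genuinely enter, consistent with the paper's later use of the lemma for regular Bayesian-SMDPs. Two cosmetic remarks: in the vanishing case you bound only the $W$ term, but the $\pi(s,x,s')$ term is handled identically since $\pi$ is bounded; and the contraction inequality holds on all bounded functions, so uniqueness of $W$ is not merely uniqueness within the continuous class. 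Part (ii)---per-state argmax correspondence, Berge, $\Delta(\cdot)$ of a uhc compact-valued correspondence, product structure, and the observation that $\bar{\Sigma}(\mu)=\prod_{s\in\mathbb{S}}\Delta\bigl(X^{*}(s,\mu)\bigr)$ because any selection at other beliefs extends $\sigma$ to an optimal policy---is a faithful transplant of the paper's proof of Lemma \ref{Lemma:Sigma(Q)}(ii) and is correct as written.
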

\begin{proof}
The proof is standard and relegated to the Online Appendix.
\end{proof}
\medskip{}

Let $h^{\infty}=(s_{0},x_{0},...,s_{t},x_{t},...)$ represent the
infinite history or outcome path of the dynamic optimization problem
and let $\mathbb{H}^{\infty}\equiv(Gr(\Gamma))^{\infty}$ represent
the space of infinite histories. For every $t$, let $\mu_{t}:\mathbb{H}^{\infty}\rightarrow\Delta(\Theta)$
denote the agent's Bayesian beliefs, defined recursively by $\mu_{t}=B(s_{t-1},x_{t-1},s_{t},\mu_{t-1})$
whenever $\mu_{t-1}\in D_{s_{t-1},x_{t-1},s_{t}}$ (see Definition
\ref{Def:Bayesian}), and arbitrary otherwise. We assume that the
agent follows some policy function $f$. %
In each period $t$, there is a state $s_{t}$ and a belief $\mu_{t}$,
and the agent chooses a (possibly mixed) action $f(\cdot\mid s_{t},\mu_{t})\in\Delta(\mathbb{X})$.
After an action $x_{t}$ is realized, the state $s_{t+1}$ is drawn
from the true transition probability. The agent observes the realized
action and the new state and updates her beliefs to $\mu_{t+1}$ using
Bayes' rule. The primitives of the Bayesian-SMDP (including the initial
distribution over states, $q_{0}$, and the prior, $\mu_{0}\in\Delta(\Theta)$)
and a policy function $f$ induce a probability distribution over
$\mathbb{H}^{\infty}$ that is defined in a standard way; let $\boldsymbol{P}^{f}$
denote this probability distribution over $\mathbb{H}^{\infty}$.

We now define strategies and outcomes as random variables. For a fixed
policy function $f$ and for every $t$, let $\sigma_{t}:\mathbb{H}^{\infty}\rightarrow\Sigma$
denote the strategy of the agent, defined by setting 
\[
\sigma_{t}(h^{\infty})=f(\cdot\mid\cdot,\mu_{t}(h^{\infty}))\in\Sigma.
\]
Finally, for every $t$, let $m_{t}:\mathbb{H}^{\infty}\rightarrow\Delta(Gr(\Gamma))$
be such that, for all $t$, $h^{\infty}$, and $(s,x)\in Gr(\Gamma)$,
\[
m_{t}(s,x\mid h^{\infty})=\frac{1}{t}\sum_{\tau=0}^{t}\mathbf{1}_{(s,x)}(s_{\tau},x_{\tau})
\]
is the frequency of times that the outcome $(s,x)$ occurs up to time
$t$. 

One reasonable criteria to claim that the agent has reached a steady-state
is that her strategy and the time average of outcomes converge.\medskip{}

\begin{defn}
\label{Def:stability}A strategy and probability distribution $(\sigma,m)\in\Sigma\times\Delta(Gr(\Gamma))$
is \textbf{stable} for a Bayesian-SMDP with prior $\mu_{0}$ and
policy function $f$ if there is a set $\mathcal{H}\subseteq\mathbb{H}$
with $\mathbf{P}^{f}\left(\mathcal{H}\right)>0$ such that, for all
$h^{\infty}\in\mathcal{H}$, as $t\rightarrow\infty$,
\begin{equation}
\sigma_{t}(h^{\infty})\rightarrow\sigma\,\,\,\,\,\mbox{and}\,\,\,\,\,m_{t}(h^{\infty})\rightarrow m.\label{eq:sigma_converges}
\end{equation}
If, in addition, there exists a belief $\mu^{*}$ and a subsequence
$(\mu_{t(j)})_{j}$ such that, 
\begin{equation}
\mu_{t(j)}(h^{\infty})\overset{w}{\rightarrow}\mu^{*}\label{eq:beliefs_converge}
\end{equation}
and, for all $(s,x)\in Gr(\Gamma)$, $\mu^{*}=B(s,x,s',\mu^{*})$
for all $s'\in\mathbb{S}$ such that $\bar{Q}_{\mu^{*}}(s'\mid s,x)>0$,
then $(\sigma,m)$ is called \textbf{stable with exhaustive learning}.

\end{defn}
\medskip{}
Condition (\ref{eq:sigma_converges}) requires that strategies and
the time frequency of outcomes stabilize. %
{} By compactness, there exists a subsequence of beliefs that converges.
The additional requirement of exhaustive learning says that the limit
point of one of the subsequences, $\mu^{*}$, is perceived to be a
fixed point of the Bayesian operator, implying that no matter what
state and strategy the agent contemplates, she does not \emph{expect}
her belief to change. Thus, the agent believes that all learning possibilities
are exhausted under $\mu^{*}$. The condition, however, does not imply
that the agent has correct beliefs in steady state.

The next result establishes that, if the time average of outcomes
stabilize to $m$, then beliefs become increasingly concentrated on
$\Theta_{Q}(m)$. 

\medskip{}

\begin{lem}
\label{Lemma:Berk}Consider a regular Bayesian-SMDP with true transition
probability function $Q$, full-support prior $\mu_{0}\in\Delta(\Theta)$,
and policy function $f$. Suppose that $(m_{t})_{t}$ converges to
$m$ for all histories in a set $\mathcal{H}\subseteq\mathbb{H}$
such that $\mathbf{P}^{f}\left(\mathcal{H}\right)>0$. Then, for all
open sets $U\supseteq\Theta_{Q}(m)$, 
\[
\lim_{t\rightarrow\infty}\mu_{t}\left(U\right)=1
\]
$\mathbf{P}^{f}$-a.s. in $\mathcal{H}$.\end{lem}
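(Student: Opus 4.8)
The plan is to reduce the statement to a Bayesian consistency result for misspecified models and then invoke a Berk-type argument, but with the added complication that the data here is endogenous and non-i.i.d. First I would fix a history $h^{\infty}\in\mathcal{H}$ along which $m_{t}(h^{\infty})\rightarrow m$. The key quantity to track is the log-likelihood of the observed data under a parameter $\theta$, namely $\ell_{t}(\theta)=\sum_{\tau=0}^{t-1}\ln Q_{\theta}(s_{\tau+1}\mid s_{\tau},x_{\tau})$. Writing the posterior explicitly from the Bayesian operator in (\ref{eq:bayesian_operator-1-1}), the posterior mass that any Borel set $A$ receives is governed by the ratios $e^{\ell_{t}(\theta)-\ell_{t}(\theta')}$; in particular, for an open $U\supseteq\Theta_{Q}(m)$ it suffices to show that the average log-likelihood $\frac{1}{t}\ell_{t}(\theta)$ is, in the limit, strictly smaller on the compact complement $\Theta\setminus U$ than it is at points in $\Theta_{Q}(m)$. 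That gap, after normalizing, is exactly the wKLD difference $K_{Q}(m,\theta)-\min_{\theta'}K_{Q}(m,\theta')$, which is strictly positive on $\Theta\setminus U$ by Lemma \ref{Lemma:Theta(m)}(i) and the definition of $\Theta_{Q}(m)$.

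The central step is therefore a law-of-large-numbers argument for the (random, non-i.i.d.) log-likelihood ratios. I would decompose
\[
\frac{1}{t}\ell_{t}(\theta)-\frac{1}{t}\ell_{t}(\theta^{\circ})=-\frac{1}{t}\sum_{\tau=0}^{t-1}\ln\frac{Q_{\theta^{\circ}}(s_{\tau+1}\mid s_{\tau},x_{\tau})}{Q_{\theta}(s_{\tau+1}\mid s_{\tau},x_{\tau})},
\]
where $\theta^{\circ}\in\Theta_{Q}(m)$ is a reference minimizer. Each summand, conditional on $(s_{\tau},x_{\tau})$, has conditional mean $\sum_{s'}Q(s'\mid s_{\tau},x_{\tau})\ln\frac{Q_{\theta^{\circ}}(s'\mid s_{\tau},x_{\tau})}{Q_{\theta}(s'\mid s_{\tau},x_{\tau})}$, so the predictable compensator of this sum is a weighted average of per-$(s,x)$ relative-entropy terms with weights given by the empirical frequencies $m_{t}(s,x\mid h^{\infty})$. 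Since those frequencies converge to $m(s,x)$ by hypothesis, the compensator converges to $K_{Q}(m,\theta)-K_{Q}(m,\theta^{\circ})$. The fluctuation of the sum around its compensator is a martingale whose increments are bounded (the states and actions live in finite sets, and the regularity assumption $Q_{\theta}(s'\mid s,x)>0$ on a dense set together with continuity and compactness yields uniform bounds on $\ln Q_{\theta}$ over the relevant support after restricting to parameters where the ratios are finite); a strong law for martingales with bounded increments then forces the normalized fluctuation to vanish $\mathbf{P}^{f}$-a.s.

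To promote the pointwise-in-$\theta$ convergence to the uniform statement needed for the posterior, I would cover the compact set $\Theta\setminus U$ by finitely many neighborhoods on which $\theta\mapsto\ln Q_{\theta}(s'\mid s,x)$ is uniformly close (using the continuity hypothesis of the regular SMDP and compactness of $\Theta$), obtaining a uniform strictly positive lower bound $2\eta>0$ on $\liminf_{t}\bigl(\frac{1}{t}\ell_{t}(\theta^{\circ})-\frac{1}{t}\ell_{t}(\theta)\bigr)$ over $\theta\in\Theta\setminus U$. Integrating the posterior representation and bounding $\mu_{t}(\Theta\setminus U)$ by $e^{-\eta t}$ times a ratio of integrals that stays bounded away from zero because the full-support prior puts positive mass near $\theta^{\circ}$ (here I use that $\mu_{0}$ has full support and $\theta^{\circ}$ can be approximated by points of the dense set $\hat{\Theta}$ on which all likelihoods are strictly positive), gives $\mu_{t}(\Theta\setminus U)\rightarrow 0$, i.e.\ $\mu_{t}(U)\rightarrow 1$.

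The main obstacle I expect is handling the endogeneity and the boundary behavior of the likelihood simultaneously: parameters $\theta$ for which $Q_{\theta}(s'\mid s,x)=0$ while the event $s'$ occurs make $\ln Q_{\theta}$ blow up to $-\infty$, so the log-likelihood ratios are not uniformly bounded over all of $\Theta$. The regularity assumptions are designed precisely to tame this—the dense set $\hat{\Theta}$ guarantees a well-behaved reference parameter $\theta^{\circ}$ with finite wKLD and lets us approximate it from within the interior, while continuity plus compactness controls the maximizer side—but the argument must be arranged so that the martingale strong law is applied only to increments that are genuinely bounded, and the $-\infty$ contributions are shown to \emph{help} (they drive posterior mass away from badly misspecified $\theta$ even faster) rather than break the bounds. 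Making the uniform lower bound $2\eta$ and the compactness covering rigorous in the presence of these singularities, for endogenously generated and serially dependent data, is the delicate part.
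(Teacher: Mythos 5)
Your overall architecture is the same as the paper's: represent the posterior mass of $\Theta\setminus U$ as a ratio of integrals of likelihood ratios, kill the numerator with a uniform lower bound on the time-averaged log-likelihood ratio over the compact set of parameters bounded away from $\Theta_{Q}(m)$, and keep the denominator alive via Fatou's lemma, the full-support prior, and points of the dense set $\hat{\Theta}$ near the minimizers (whose log-likelihoods are finite along any realized history). Your martingale-plus-compensator decomposition, with weights given by the empirical frequencies $m_{t}$ and limits given by the wKLD, is exactly the paper's Claim B(i), proved there by a martingale convergence argument and Kronecker's lemma. So the skeleton is right; the issue is the step you yourself call ``the delicate part,'' which is left genuinely open and for which your proposed mechanism would fail.

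Concretely, the gap is the uniformity of the lower bound over $\Theta\setminus U$. Your plan is to cover this compact set by finitely many neighborhoods on which $\theta\mapsto\ln Q_{\theta}(s'\mid s,x)$ is uniformly close; but at any parameter with $Q_{\theta}(s'\mid s,x)=0$ for an outcome that occurs with positive limiting frequency, $\ln Q_{\theta}=-\infty$, no neighborhood delivers uniform closeness, and regularity guarantees positivity only on the dense subset $\hat{\Theta}$, not on $\Theta\setminus U$. The paper closes this (Claim B(ii)) with a thresholding device that needs no covering at all: fix $\xi>0$ and split $\Theta$ into $\Theta_{m,\xi}=\{\theta\colon Q_{\theta}(z)\geq\xi\text{ for all }z\text{ with }\bar{P}_{m}(z)>0\}$ and its complement. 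On $\Theta_{m,\xi}$ the relevant log-likelihoods lie in $[\ln\xi,0]$, so uniform convergence of the empirical frequencies alone yields a bound uniform in $\theta$, with an error calibrated to $\alpha/4$ by choosing $\zeta_{\xi}\propto-\alpha/\ln\xi$; no equicontinuity in $\theta$ is invoked. Off $\Theta_{m,\xi}$, some outcome $z$ with $\bar{P}_{m}(z)>0$ has $Q_{\theta}(z)<\xi$ and is observed with frequency eventually at least $p_{L}/2>0$, so the average log-likelihood ratio is at least $\kappa_{2}-(p_{L}/2)\ln\xi$, which exceeds the required level $K^{\ast}(m)+\tfrac{3}{2}\alpha$ once $\xi$ is small---this is the rigorous version of your remark that the $-\infty$ contributions ``help.'' Without this (or an equivalent one-sided argument replacing uniform closeness), your proof is incomplete precisely at its load-bearing step.
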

\begin{proof}
See the Appendix.
\end{proof}
\medskip{}

The proof of Lemma \ref{Lemma:Berk} clarifies the origin of the wKLD
function in the definition of Berk-Nash equilibrium. The proof adapts
the proof of Lemma 2 by \cite{esponda2016berk} to dynamic environments.
Lemma \ref{Lemma:Berk} extends results from the statistics of misspecified
learning (\cite{berk1966limiting}, \cite{bunke1998asymptotic},
\cite{shalizi2009dynamics}) by considering a setting where agents
learn from data that is endogenously generated by their own actions
in a Markovian setting.

The following result provides a learning foundation for the notion
of Berk-Nash equilibrium of an SMDP.\medskip{}

\begin{thm}
\label{theo:Stability_implies_equilibrium}There exists $\bar{\delta}\in[0,1]$
such that: 

(i) for all $\delta\leq\bar{\delta}$, if $(\sigma,m)$ is stable
for a regular Bayesian-SMDP with full-support prior $\mu_{0}$ and
policy function $f$ that is optimal, then $(\sigma,m)$ is a Berk-Nash
equilibrium of the SMDP.

(ii) for all $\delta>\bar{\delta}$, if $(\sigma,m)$ is stable with
exhaustive learning for a regular Bayesian-SMDP with full-support
prior $\mu_{0}$ and policy function $f$ that is optimal, then $(\sigma,m)$
is a Berk-Nash equilibrium of the SMDP.\end{thm}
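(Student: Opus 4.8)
The plan is to verify, for the limiting pair $(\sigma,m)$ and a suitable limit belief $\mu^*$, the three defining conditions of a Berk-Nash equilibrium in Definition \ref{Def:Berk-Nash}: (iii) $m\in I_Q(\sigma)$, (ii) $\mu^*\in\Delta(\Theta_Q(m))$, and (i) $\sigma$ optimal for MDP$(\bar Q_{\mu^*})$. I would isolate the fact that conditions (ii) and (iii) hold regardless of $\delta$ and of exhaustive learning, so that the entire role of the threshold $\bar\delta$ and of exhaustive learning is confined to the optimality condition (i).

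First I would establish (iii) and (ii) directly from stability. For (iii), conditioning on the history under $\mathbf{P}^{f}$, the indicator $\mathbf{1}_{(s',x')}(s_{\tau+1},x_{\tau+1})$ has conditional mean $M_{\sigma_\tau,Q}(s',x'\mid s_\tau,x_\tau)$; a strong law for bounded martingale differences gives that the Cesàro average of $\mathbf{1}_{(s',x')}(s_{\tau+1},x_{\tau+1})-M_{\sigma_\tau,Q}(s',x'\mid s_\tau,x_\tau)$ vanishes $\mathbf{P}^f$-a.s. Since $\sigma_\tau\to\sigma$ and $m_\tau\to m$ on the positive-probability set $\mathcal H$, and $M_{\cdot,Q}$ is continuous, the averaged kernel term converges to $M_{\sigma,Q}[m](s',x')$, whence $m=M_{\sigma,Q}[m]$, i.e. $m\in I_Q(\sigma)$. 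For (ii) I would invoke Lemma \ref{Lemma:Berk}: because $m_t\to m$ on $\mathcal H$, $\mu_t(U)\to1$ for every open $U\supseteq\Theta_Q(m)$. By compactness of $\Delta(\Theta)$ extract a weakly convergent subsequence $\mu_{t(j)}\overset{w}{\rightarrow}\mu^*$ (in part (ii), take the exhaustive-learning subsequence guaranteed by Definition \ref{Def:stability}). Using compactness of $\Theta_Q(m)$ (Lemma \ref{Lemma:Theta(m)}) and the Portmanteau theorem on the closed neighborhoods $\{\theta:d(\theta,\Theta_Q(m))\le 1/n\}$ gives $\mu^*(\Theta_Q(m))=1$, i.e. $\mu^*\in\Delta(\Theta_Q(m))$.

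The crux is condition (i). Since $f$ is optimal, $\sigma_{t(j)}=f(\cdot\mid\cdot,\mu_{t(j)})\in\bar\Sigma(\mu_{t(j)})$; as $\sigma_{t(j)}\to\sigma$ and $\mu_{t(j)}\overset{w}{\rightarrow}\mu^*$, upper hemicontinuity of $\mu\mapsto\bar\Sigma(\mu)$ (Lemma \ref{Lemma:Sigma(Q)-1}) yields $\sigma\in\bar\Sigma(\mu^*)$, so $\sigma$ is optimal in the Bayesian problem evaluated at the frozen belief $\mu^*$. It remains to pass to optimality for MDP$(\bar Q_{\mu^*})$, i.e. to show $\bar\Sigma(\mu^*)\subseteq\Sigma(\bar Q_{\mu^*})$, and the two cases differ exactly here. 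Under exhaustive learning (part (ii)), $\mu^*=B(s,x,s',\mu^*)$ for all $(s,x)$ and all $s'$ with $\bar Q_{\mu^*}(s'\mid s,x)>0$, so substituting $\mu=\mu^*$ into the Bayesian Bellman equation (\ref{eq:bellman-perturbedBMDP}) leaves $\mu'=\mu^*$ inside the integral; the equation collapses to the MDP$(\bar Q_{\mu^*})$ Bellman equation (\ref{eq:bellman}), and by uniqueness of its solution $W(\cdot,\mu^*)=V_{\bar Q_{\mu^*}}(\cdot)$, so the two optimality notions coincide and $\sigma\in\Sigma(\bar Q_{\mu^*})$. For the impatient case (part (i)) I would define the \emph{value of experimentation} $\mathrm{VoE}(s,x,\mu)=\delta\int[W(s',B(s,x,s',\mu))-V_{\bar Q_\mu}(s')]\bar Q_\mu(ds'\mid s,x)$, the exact gap between the Bayesian and the frozen action values. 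Both $W$ and $V_{\bar Q_\mu}$ are bounded by $\max|\pi|/(1-\delta)$ and coincide at $\delta=0$, so $|\mathrm{VoE}(s,x,\mu)|\le\delta C$ for a primitive-dependent constant $C$; using finiteness of $\mathbb X$, I would take $\bar\delta$ to be the largest discount factor for which $2\delta C$ stays below the strict payoff gap separating MDP$(\bar Q_{\mu^*})$-suboptimal actions from the optimal ones, so that for $\delta\le\bar\delta$ no Bayesian-optimal action can be strictly MDP$(\bar Q_{\mu^*})$-suboptimal, giving $\sigma\in\Sigma(\bar Q_{\mu^*})$.

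The main obstacle is controlling $\mathrm{VoE}$ off the equilibrium path. On the path beliefs stabilize, so played actions carry negligible experimentation value, but a deviation can move the belief discretely; and, unlike in static problems where convexity of the value function and the martingale property of Bayesian beliefs force the value of experimentation to be nonnegative, here it can be negative. A deviation that is optimal in MDP$(\bar Q_{\mu^*})$ but carries strictly negative experimentation value would be shunned by a patient Bayesian agent, turning a frozen-suboptimal action into a Bayesian-optimal one and breaking the inclusion $\bar\Sigma(\mu^*)\subseteq\Sigma(\bar Q_{\mu^*})$. Exhaustive learning eliminates this channel by forcing $\mathrm{VoE}\equiv0$ at $\mu^*$, whereas impatience eliminates it by keeping $|\mathrm{VoE}|$ too small to overturn a strict ranking; the delicate point that the definition of $\bar\delta$ must absorb is that these rankings can be arbitrarily close to ties at the \emph{endogenous} limit belief $\mu^*$, so the bound linking $\delta C$ to the payoff gaps has to be made uniform over the relevant beliefs rather than pointwise.
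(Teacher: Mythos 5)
Your treatment of conditions (ii) and (iii) and of part (ii) of the theorem matches the paper's proof: the martingale-difference argument giving $m\in I_{Q}(\sigma)$, Lemma \ref{Lemma:Berk} plus weak compactness giving $\mu^{*}\in\Delta(\Theta_{Q}(m))$, and, under exhaustive learning, the collapse of the Bayesian Bellman equation (\ref{eq:bellman-perturbedBMDP}) to (\ref{eq:bellman}) so that $W(\cdot,\mu^{*})=V_{\bar{Q}_{\mu^{*}}}$ by uniqueness of the fixed point. The genuine gap is in part (i), precisely at the point you flag but do not resolve: the construction of $\bar{\delta}$. As sketched, $\bar{\delta}$ is ``the largest discount factor for which $2\delta C$ stays below the strict payoff gap separating MDP$(\bar{Q}_{\mu^{*}})$-suboptimal actions from the optimal ones.'' Taken literally, this makes $\bar{\delta}$ depend on the endogenous limit belief $\mu^{*}$ (hence on the prior, the policy function, and the realized history), whereas the theorem requires a single constant determined by the primitives. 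The repair you suggest---making the bound uniform over beliefs---fails for two reasons. First, the infimum over $\mu\in\Delta(\Theta)$ of the positive value gaps in MDP$(\bar{Q}_{\mu})$ is typically zero: beliefs arbitrarily close to indifference yield arbitrarily small positive gaps, and such beliefs are not pathological in this setting (mixed-strategy Berk-Nash equilibria, as in the monopoly example, are supported exactly by indifference beliefs). Second, the gap you want to compare against $2\delta C$ is itself a function of $\delta$, since $V_{\bar{Q}_{\mu}}$ and the discounted continuation terms move with $\delta$; ``the largest $\delta$ such that $2\delta C$ is below the gap'' is circular as a definition. (Your bound $|\mathrm{VoE}|\leq\delta C$ also hides a factor $1/(1-\delta)$: the correct uniform bound is $2\delta M/(1-\delta)$.)

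The paper cuts both knots by working with the \emph{myopic}, per-period gap instead of the value gap: $\tilde{\delta}(s,\mu)$ is the gap in current expected payoff $E_{\bar{Q}_{\mu}(\cdot\mid s,x)}[\pi(s,x,S')]$ between the myopic maximizer and the best alternative action, $\hat{\delta}\equiv\max\bigl\{\min_{s,\mu}\tilde{\delta}(s,\mu),0\bigr\}$, and $\bar{\delta}\equiv(\hat{\delta}/M)/(2+\hat{\delta}/M)$ with $M=\max|\pi|$. This object is independent of $\delta$ and of any equilibrium quantity. When $\hat{\delta}>0$ and $0<\delta\leq\bar{\delta}$, the myopic maximizer $x(s,\mu_{\infty})$ is a singleton and the inequality $\hat{\delta}\geq2\delta M/(1-\delta)$ dominates any difference of continuation values---whether computed with updated beliefs $W(s',B(s,x,s',\mu_{\infty}))$ or with frozen beliefs---so the same action is uniquely optimal both in the Bayesian problem and in MDP$(\bar{Q}_{\mu_{\infty}})$; when $\hat{\delta}=0$ the definition yields $\bar{\delta}=0$, so part (i) covers only the myopic case $\delta=0$, where the two optimality notions coincide trivially. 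In other words, the uniformity problem you correctly identify is resolved not by proving a uniform positive gap (which is false in general) but by building its possible failure into the definition of $\bar{\delta}$ itself, and the circularity is avoided because the myopic gap does not involve $\delta$. Without some version of this construction, your part (i) argument does not go through.
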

\begin{proof}
See the Appendix.
\end{proof}
\medskip{}

Theorem \ref{theo:Stability_implies_equilibrium} provides a learning
justification for Berk-Nash equilibrium. The main idea behind the
proof is as follows. We can always find a subsequence of posteriors
that converges to some $\mu^{*}$ and, by Lemma \ref{Lemma:Berk}
and the fact that behavior converges to $\sigma$, it follows that
$\sigma$ must solve the dynamic optimization problem for beliefs
converging to $\mu^{*}\in\Theta_{Q}(m)$. In addition, by convergence
of $\sigma_{t}$ to $\sigma$ and continuity of the transition kernel
$\sigma\mapsto M_{\sigma,Q}$, an application of the martingale convergence
theorem implies that $m_{t}$ is asymptotically equal to $M_{\sigma,Q}[m_{t}]$.
This fact, linearity of the operator $M_{\sigma,Q}[\cdot]$, and convergence
of $m_{t}$ to $m$ then imply that $m$ is an invariant distribution
given $\sigma$.

The proof concludes by showing that $\sigma$ not only solves the
optimization problem for beliefs converging to $\mu^{*}$ but also
solves the MDP, where the belief is forever fixed at $\mu^{*}$. This
is true, of course, if the agent is sufficiently impatient, which
explains why part (i) of Theorem \ref{theo:Stability_implies_equilibrium}
holds. For sufficiently patient agents, the result relies on the assumption
that the steady state satisfies exhaustive learning. We now illustrate
and discuss the role of this assumption.

\medskip{}

$\textsc{example}$. At the initial period, a risk-neutral agent has
four investment choices: A, B, S, and O. Action A pays $1-\theta^{*}$,
action B pays $\theta^{*}$, and action S pays a safe payoff of 2/3
in the initial period, where $\theta^{*}\in\{0,1\}$. For any of these
three choices, the decision problem ends there and the agent makes
a payoff of zero in all future periods. Action O gives the agent a
payoff of $-1/3$ in the initial period and the option to make an
investment next period, where there are two possible states, $s_{A}$
and $s_{B}$. State $s_{A}$ is realized if $\theta^{*}=1$ and state
$s_{B}$ is realized if $\theta^{*}=0$. In each of these states,
the agent can choose to make a risky investment or a safe investment.
The safe investment gives a payoff of 2/3 in both states, and a subsequent
payoff of zero in all future periods. The risky investment gives the
agent a payoff that is thrice the payoff she would have gotten from
choice A, that is, $3(1-\theta^{*})$, if the state is $s_{A}$, and
it gives the agent thrice the payoff she would have gotten from choice
B, that is, $3\theta^{*}$, if the state is $s_{B}$; the payoff is
zero is all future periods.

Suppose that the agent knows all the primitives except the value of
$\theta^{*}$. Let $\Theta=\{0,1\}$; in particular, the SMDP is correctly
specified. We now show that, in any Berk-Nash equilibrium, a sufficiently
patient agent never chooses the safe action S: Let $\mu\in[0,1]$
denote the agent's equilibrium belief about the probability that $\theta^{*}=1$.
For action S to be preferred to A and B, it must be the case that
$\mu\in[1/3,2/3]$. But, for a fixed $\mu$, the perceived benefit
from action O is 
\begin{align*}
-\frac{1}{3}+\delta\left(\mu V_{\bar{Q}_{\mu}}(s_{A})+(1-\mu)V_{\bar{Q}_{\mu}}(s_{B})\right) & =-\frac{1}{3}+\delta\left(\mu\max\{\frac{2}{3},3(1-\mu)\}+(1-\mu)\max\{\frac{2}{3},3\mu\}\right)\\
 & \geq-\frac{1}{3}+\delta6\mu(1-\mu),
\end{align*}
which is strictly higher than 2/3, the payoff from action S, for all
$\mu\in[1/3,2/3]$ provided that $\delta>\bar{\delta}=3/4$. Thus,
for a sufficiently patient agent, there is no belief that makes action
S optimal and, therefore, S is not chosen in any Berk-Nash equilibrium.

Now consider a Bayesian agent who starts with a prior $\mu_{0}=\Pr(\theta=1)\in(0,1)$
and updates her belief. The value of action O is 
\[
-\frac{1}{3}+\delta\left(\mu_{0}W(s_{A},1)+(1-\mu_{0})W(s_{B},0)\right)=-\frac{1}{3}+\delta\frac{2}{3}<\frac{2}{3}
\]
because $W(s_{A},1)=W(s_{B},0)=2/3$. In other words, the agent realizes
that if the state $s_{A}$ is realized, then she will update her belief
to $\mu_{1}=1$, which implies that the safe investment is optimal
in state $s_{A}$; a similar argument holds for state $s_{B}$. She
then finds it optimal to choose action A if $\mu_{0}\leq1/3$, B if
$\mu_{0}\geq2/3$, and S if $\mu_{0}\in[1/3,2/3]$. In particular,
choosing S is a steady state outcome for some priors, although it
is not chosen in any Berk-Nash equilibrium if the agent is sufficiently
patient. The belief supporting S, however, does not satisfy exhaustive
learning, since the agent believes that any other action would completely
reveal all uncertainty. $\square$

\medskip{}

More generally, the failure of a steady state to be a Berk-Nash equilibrium
if the agent is sufficiently patient occurs because the value of experimentation
can be negative. To see this point, let the value of experimentation
for action $x$ at state $s$ when the agent's belief is $\mu$ be
\[
ValueExp(s,x;\mu)\equiv E_{\bar{Q}_{\mu}(\cdot\mid s,x)}\left[W(S',B(s,x,S',\mu))\right]-E_{\bar{Q}_{\mu}(\cdot\mid s,x)}\bigl[V_{\bar{Q}_{\mu}}(S')\bigr].
\]
This expression is the difference between the value when the agent
updates her prior $\mu$ and the value when the agent has a fixed
belief $\mu$. An agent who does not account for future changes in
beliefs may end up choosing an action with a negative value of experimentation
that is actually suboptimal when accounting for changes in beliefs.

In the previous example, the value of experimentation for action O
given $\mu$ is

\[
\left(\mu W(s_{A},1)+(1-\mu)W(s_{B},0)\right)-\left(\mu V_{\bar{Q}_{\mu}}(s_{A})+(1-\mu)V_{\bar{Q}_{\mu}}(s_{B})\right),
\]
which reduces to $2/3-6\mu(1-\mu)$ and is negative for the values
of $\mu$ that make S better than A and B. Thus, it is possible for
action O to be optimal if the agent does not account for changes in
beliefs, but suboptimal if she does.

We now discuss specifically how the property of exhaustive learning
is used in the proof of Theorem \ref{theo:Stability_implies_equilibrium}.
We call an action a steady-state action if it is in the support of
a stable strategy and we call it a non steady-state action otherwise.
A key step is to show that, if a steady-state action is better than
a non steady-state action when beliefs are updated, it will also be
better when beliefs are fixed. This is true provided that there is
zero value of experimenting in steady state, which is guaranteed by
exhaustive learning. If instead of exhaustive learning we were to
simply require weak identification, there would be no value of experimentation
for steady-state actions. The concern, illustrated by the previous
example, is that the value of experimentation can be negative for
a non steady-state action. Therefore, a non steady-state action could
be suboptimal in the problem where the belief is updated but optimal
in the problem where the belief is not updated (and so the negative
value of experimentation is not taken into account). As shown by \cite{esponda2016berk},
this concern does not arise in static settings, where the only state
variable is a belief. The reason is that the convexity of the value
function and the martingale property of Bayesian beliefs imply that
the value of experimentation is always nonnegative.

We conclude with additional remarks about Theorem \ref{theo:Stability_implies_equilibrium}.\medskip{}

\begin{rem}
\emph{Discount factor}: In the proof of Theorem \ref{theo:Stability_implies_equilibrium},
we provide an exact value for $\bar{\delta}$ as a function of primitives.
This bound, however, may not be sharp. As illustrated by the above
example, to compute a sharp bound we would have to solve the dynamic
optimization problem with learning, which is precisely what we are
trying to avoid by focusing on Berk-Nash equilibrium.

\emph{Convergence}: Theorem \ref{theo:Stability_implies_equilibrium}
does not imply that behavior will necessarily stabilize in an SMDP.
In fact, it is well known from the theory of Markov chains that, even
if no decisions affect the relevant transitions, outcomes need not
stabilize without further assumptions. So one cannot hope to have
general statements regarding convergence of outcomes---this is also
true, for example, in the related context of learning to play Nash
equilibrium in games.\footnote{For example, in the game-theory literature, general global convergence
results have only been obtained in special classes of games\textendash e.g.
zero-sum, potential, and supermodular games (\citealp{hofbauer2002global}).} Thus, the theorem leaves open the question of convergence in specific
settings, a question that requires other tools (e.g., stochastic approximation)
and is best tackled by explicitly studying the dynamics of specific
classes of environments (see the references in the introduction).

\emph{Mixed strategies}: Theorem \ref{theo:Stability_implies_equilibrium}
also raises the question of how a mixed strategy could ever become
stable, given that, in general it is unlikely that agents will hold
beliefs that make them exactly indifferent at any point in time. \cite{fudenberg1993learning}
asked the same question in the context of learning to play mixed strategy
\emph{Nash} equilibria, and answered it by adding small payoff perturbations
a la \cite{harsanyi1973games}: Agents do not actually mix; instead,
every period their payoffs are subject to small perturbations, and
what we call the mixed strategy is simply the probability distribution
generated by playing \emph{pure} strategies and integrating over the
payoff perturbations. We followed this approach in the paper that
introduced Berk-Nash equilibrium in static contexts (Esponda and Pouzo,
2016). The same idea applies here, but we omit payoff perturbations
to reduce the notational burden.\footnote{\cite{doraszelski2010theory} incorporate payoff perturbations in
a dynamic environment.}
\end{rem}

\section{\label{sec:Equilibrium-refinements}Equilibrium refinements}

Theorem \ref{theo:Stability_implies_equilibrium} implies that, for
sufficiently patient players, we should be interested in the following
refinement of Berk-Nash equilibrium.

\medskip{}

\begin{defn}
\label{Def:Berk-Nash-1}A strategy and probability distribution $(\sigma,m)\in\Sigma\times\Delta(Gr(\Gamma))$
is a \textbf{Berk-Nash} \textbf{equilibrium with exhaustive learning}
of the SMDP if it is a Berk-Nash equilibrium that is supported by
a belief $\mu^{*}\in\Delta(\Theta)$ such that, for all $(s,x)\in Gr(\Gamma)$,
\[
\mu^{*}=B(s,x,s',\mu^{*})
\]
for all $s'\in\mathbb{S}$ such that $\bar{Q}_{\mu^{*}}(s'\mid s,x)>0$.%

\end{defn}
\medskip{}

In an equilibrium with exhaustive learning, there is a supporting
belief that is perceived to be a fixed point of the Bayesian operator,
implying that no matter what state and strategy the agent contemplates,
she does not \emph{expect} her belief to change. The requirement
of exhaustive learning does not imply robustness to experimentation.
For example, in the monopoly problem studied in Section \ref{sub:Monopolist},
choosing low price with probability 1 is an equilibrium with exhausted
learning which is supported by the belief that, with probability 1,
$\theta_{L}^{*}=0$. We rule out equilibria that are not robust to
experimentation by introducing a further refinement.\medskip{}

\begin{defn}
An $\varepsilon$-perturbed SMDP is an SMDP wherein strategies are
restricted to belong to 
\[
\Sigma^{\varepsilon}=\left\{ \sigma\in\Sigma:\sigma(x\mid s)\geq\varepsilon\,\,\mbox{for all \ensuremath{(s,x)\in Gr(\Gamma)}}\right\} .
\]

\end{defn}

\medskip{}

\begin{defn}
A strategy and probability distribution $(\sigma,m)\in\Sigma\times\Delta(Gr(\Gamma))$
is a \textbf{perfect Berk-Nash} \textbf{equilibrium} of an SMDP if
there exists a sequence $(\sigma^{\varepsilon},m^{\varepsilon})_{\varepsilon>0}$
of Berk-Nash equilibria with exhaustive learning of the $\varepsilon$-perturbed
SMDP that converges to $(\sigma,m)$ as $\varepsilon\rightarrow0$.\footnote{Formally, in order to have a sequence, we take $\varepsilon>0$ to
belong to the rational numbers; hereinafter we leave this implicit
to ease the notational burden.} \medskip{}

\end{defn}
\cite{selten1975reexamination} introduced the idea of perfection
in extensive-form games. By itself, however, perfection does not guarantee
that all $(s,x)\in Gr(\Gamma)$ are reached in an MDP. The next property
guarantees that all states can be reached when the agent chooses all
strategies with positive probability.\medskip{}

\begin{defn}
\label{de:Fullcomm}An MDP($Q$) satisfies \textbf{full communication}
if, for all $s_{0},s'\in\mathbb{S}$, there exist finite sequences
$(s_{1},...,s_{n})$ and $(x_{0},x_{1},...,x_{n})$ such that $(s_{i},x_{i})\in Gr(\Gamma)$
for all $i=0,1,...,n$ and 
\[
Q(s'\mid s_{n},x_{n})Q(s_{n}\mid s_{n-1},x_{n-1})...Q(s_{1}\mid s_{0},x_{0})>0.
\]
An SMDP satisfies full communication if the corresponding MDP satisfies
it.
\end{defn}
\medskip{}

Full communication is standard in the theory of MDPs and holds in
all of the examples in Section \ref{sec:Examples}.  It guarantees
that there is a single recurrent class of states for all $\varepsilon$-perturbed
environments. In cases where it does not hold and there is more than
one recurrent class of states, one can still apply the following results
by focusing on one of the recurrent classes and ignoring the rest
as long as the agent correctly believes that she cannot go from one
recurrent class to the other.

Full communication guarantees that there are no off-equilibrium outcomes
in a perturbed SMDP. It does not, however, rule out the desire for
experimentation on the equilibrium path. We rule out the latter by
requiring weak identification.

\medskip{}

\begin{prop}
\label{prop:refinement}Suppose that an SMDP is weakly identified,
$\varepsilon$-perturbed, and satisfies full communication. 

(i) If the SMDP is regular and if $(\sigma,m)$ is stable for the
Bayesian-SMDP, it is also stable with exhaustive learning.

(ii) If $(\sigma,m)$ is a Berk-Nash equilibrium, it is also a Berk-Nash
equilibrium with exhaustive learning. \end{prop}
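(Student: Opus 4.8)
The plan is to isolate one structural fact that drives both parts and then read off each conclusion. The fact is: under $\varepsilon$-perturbation and full communication, the relevant stationary distribution $m$ assigns positive probability to every $(s,x)\in Gr(\Gamma)$, and this promotes weak identification into the much stronger statement that \emph{all} $\theta\in\Theta_Q(m)$ induce the identical transition function $Q_\theta(\cdot\mid s,x)$ at \emph{every} $(s,x)$. Once this holds, any belief whose support lies in $\Theta_Q(m)$ is automatically a fixed point of the Bayesian operator $B$, which is precisely the exhaustive-learning requirement in Definitions \ref{Def:stability} and \ref{Def:Berk-Nash-1}.

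First I would record the full-support claim and cash it in at the level of $B$. Since every $\sigma_t$ produced by the perturbed policy lies in the closed set $\Sigma^{\varepsilon}$, the limit $\sigma$ also satisfies $\sigma(x\mid s)\ge\varepsilon$ for all $(s,x)$; full communication (Definition \ref{de:Fullcomm}) then makes $M_{\sigma,Q}$ irreducible on $Gr(\Gamma)$, so the invariant distribution $m\in I_Q(\sigma)$ is unique with full support, $m(s,x)>0$ for all $(s,x)$. (In part (i), where only stability is assumed, $m\in I_Q(\sigma)$ follows exactly as in the proof of Theorem \ref{theo:Stability_implies_equilibrium}; alternatively the uniform bound $\sigma_t(x\mid s)\ge\varepsilon$ plus full communication forces each $(s,x)$ to be visited with frequency bounded away from $0$, so the empirical-frequency limit $m$ has full support.) With $m(s,x)>0$ everywhere, weak identification given $m$ says $\theta,\theta'\in\Theta_Q(m)$ imply $Q_\theta(\cdot\mid s,x)=Q_{\theta'}(\cdot\mid s,x)$ for every $(s,x)$. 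Consequently, if $\operatorname{supp}\mu^{*}\subseteq\Theta_Q(m)$ then for each $(s,x,s')$ the likelihood $\theta\mapsto Q_\theta(s'\mid s,x)$ is $\mu^{*}$-a.s. equal to a constant $c_{s,x,s'}$, so $\bar Q_{\mu^{*}}(s'\mid s,x)=c_{s,x,s'}$ and, whenever this is positive, $B(s,x,s',\mu^{*})(A)=\frac{c_{s,x,s'}\mu^{*}(A)}{c_{s,x,s'}}=\mu^{*}(A)$ for every Borel $A$. Thus $\mu^{*}=B(s,x,s',\mu^{*})$ for all $(s,x)$ and all $s'$ with $\bar Q_{\mu^{*}}(s'\mid s,x)>0$: the exhaustive-learning identity comes for free.

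For part (i) it remains to exhibit such a $\mu^{*}$ as a weak limit of beliefs. Fix $h^{\infty}$ in the positive-probability set $\mathcal H$ on which $\sigma_t\to\sigma$, $m_t\to m$, and (using the full-support prior and $m_t\to m$) the conclusion of Lemma \ref{Lemma:Berk} holds; by compactness of $\Delta(\Theta)$ extract $\mu_{t(j)}\overset{w}{\to}\mu^{*}$. To check $\operatorname{supp}\mu^{*}\subseteq\Theta_Q(m)$, take $\theta_0\notin\Theta_Q(m)$; since $\Theta_Q(m)$ is closed (Lemma \ref{Lemma:Theta(m)}) there is an open $N\ni\theta_0$ with $\overline N\cap\Theta_Q(m)=\emptyset$, so $U=\Theta\setminus\overline N$ is open and contains $\Theta_Q(m)$, whence Lemma \ref{Lemma:Berk} gives $\mu_t(U)\to1$ and $\mu_t(N)\to0$; the portmanteau inequality for the open set $N$ then yields $\mu^{*}(N)\le\liminf_j\mu_{t(j)}(N)=0$, so $\theta_0\notin\operatorname{supp}\mu^{*}$. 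Now the computation of the previous paragraph applies to $\mu^{*}$, and $(\sigma,m)$ is stable with exhaustive learning.

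For part (ii) the argument is shorter: a Berk-Nash equilibrium $(\sigma,m)$ is supported by some $\mu$ with $\mu\in\Delta(\Theta_Q(m))$, i.e. $\operatorname{supp}\mu\subseteq\Theta_Q(m)$, while $\sigma\in\Sigma^{\varepsilon}$ and $m\in I_Q(\sigma)$ again give full support of $m$ and hence the global agreement of all $\theta\in\Theta_Q(m)$; by the operator computation the \emph{same} supporting belief $\mu$ satisfies the fixed-point identity, so it witnesses exhaustive learning and $(\sigma,m)$ is a Berk-Nash equilibrium with exhaustive learning. The step I expect to be the real crux is the full-support claim itself — specifically, in part (i), justifying full support of the stability limit $m$ despite the kernel $M_{\sigma_t,Q}$ being time-varying along the path; everything downstream (the portmanteau argument and the operator computation) is mechanical, since the entire conceptual content of the proposition is that $\varepsilon$-perturbation plus full communication upgrades weak into the global agreement needed for a Bayesian fixed point.
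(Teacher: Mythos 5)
Your proposal is correct and follows essentially the same route as the paper: the paper's own proof uses its Claim C (that $\sigma\in\Sigma^{\varepsilon}$, $m\in I_{Q}(\sigma)$, and full communication force $supp(m)=Gr(\Gamma)$), upgrades weak identification to agreement of all parameters in the belief's support at every $(s,x)$, and then verifies the Bayesian fixed-point identity exactly as you do, with part (i) additionally extracting a weakly convergent subsequence of posteriors and placing its limit in $\Delta(\Theta_{Q}(m))$ via Lemma \ref{Lemma:Berk} and part (ii) reusing the equilibrium's supporting belief. Your extra details (the closedness of $\Sigma^{\varepsilon}$, the portmanteau argument, the irreducibility phrasing of the full-support step) are just explicit versions of steps the paper states tersely or proves elementarily in Claim C.
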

\begin{proof}
See the Appendix.
\end{proof}
\medskip{}

Proposition \ref{prop:refinement} provides conditions such that a
steady state satisfies exhaustive learning and a Berk-Nash equilibrium
can be supported by a belief that satisfies the exhaustive learning
condition. Under these conditions, we can find equilibria that are
robust to experimentation, i.e., perfect equilibria, by considering
perturbed environments and taking the perturbations to zero (see the
examples in Section \ref{sec:Examples}).

The next proposition shows that perfect Berk-Nash is a refinement
of Berk-Nash with exhaustive learning. As illustrated by the monopoly
example in Section \ref{sub:Monopolist}, it is a strict refinement.\medskip{}

\begin{prop}
\label{prop:refinements2}Any perfect Berk-Nash equilibrium of a regular
SMDP is a Berk-Nash equilibrium with exhaustive learning.\end{prop}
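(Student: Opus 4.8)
The plan is to show that any perfect Berk-Nash equilibrium inherits the exhaustive-learning property from the perturbed equilibria in the sequence that defines it. Let $(\sigma,m)$ be a perfect Berk-Nash equilibrium of a regular SMDP. By definition, there is a sequence $(\sigma^{\varepsilon},m^{\varepsilon})$ of Berk-Nash equilibria with exhaustive learning of the $\varepsilon$-perturbed SMDP, converging to $(\sigma,m)$ as $\varepsilon\to0$. For each $\varepsilon$, let $\mu^{\varepsilon}\in\Delta(\Theta)$ be a supporting belief: that is, $\sigma^{\varepsilon}$ is optimal for MDP($\bar{Q}_{\mu^{\varepsilon}}$), $\mu^{\varepsilon}\in\Delta(\Theta_{Q}(m^{\varepsilon}))$, $m^{\varepsilon}\in I_{Q}(\sigma^{\varepsilon})$, and the exhaustive-learning fixed-point condition $\mu^{\varepsilon}=B(s,x,s',\mu^{\varepsilon})$ holds for all $(s,x)\in Gr(\Gamma)$ and all $s'$ with $\bar{Q}_{\mu^{\varepsilon}}(s'\mid s,x)>0$. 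The goal is to extract a limit belief $\mu^{*}$ that supports $(\sigma,m)$ and still satisfies the fixed-point condition.

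First I would use compactness of $\Delta(\Theta)$ (which holds because $\Theta$ is compact, by regularity) to pass to a subsequence along which $\mu^{\varepsilon}\overset{w}{\to}\mu^{*}$ for some $\mu^{*}\in\Delta(\Theta)$. Then I would verify that $\mu^{*}$ supports $(\sigma,m)$ as a Berk-Nash equilibrium, appealing to the continuity and upper-hemicontinuity results established earlier. Specifically: since $\mu\mapsto\bar{Q}_{\mu}$ is continuous and the optimal-strategy correspondence $Q\mapsto\Sigma(Q)$ is upper hemicontinuous (Lemma \ref{Lemma:Sigma(Q)}), optimality of $\sigma^{\varepsilon}$ for $\bar{Q}_{\mu^{\varepsilon}}$ passes to optimality of $\sigma$ for $\bar{Q}_{\mu^{*}}$ in the limit. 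Since $m\mapsto\Theta_{Q}(m)$ is upper hemicontinuous (Lemma \ref{Lemma:Theta(m)}) and $m^{\varepsilon}\to m$, the condition $\mu^{\varepsilon}\in\Delta(\Theta_{Q}(m^{\varepsilon}))$ passes to $\mu^{*}\in\Delta(\Theta_{Q}(m))$. Finally, upper hemicontinuity of $\sigma\mapsto I_{Q}(\sigma)$ (Lemma \ref{Lemma:I(sigma)}) together with $\sigma^{\varepsilon}\to\sigma$ and $m^{\varepsilon}\in I_{Q}(\sigma^{\varepsilon})$ yields $m\in I_{Q}(\sigma)$. This establishes that $(\sigma,m)$ is a Berk-Nash equilibrium supported by $\mu^{*}$.

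The remaining and most delicate step is to show that $\mu^{*}$ satisfies the exhaustive-learning fixed-point condition, i.e., $\mu^{*}=B(s,x,s',\mu^{*})$ whenever $\bar{Q}_{\mu^{*}}(s'\mid s,x)>0$. I would argue this by taking limits in the corresponding condition for $\mu^{\varepsilon}$. The hard part will be controlling the continuity of the Bayesian operator $B(s,x,s',\cdot)$ as $\varepsilon\to0$, because $B$ is only well-defined on the set $D_{s,x,s'}$ where the denominator $\int_{\Theta}Q_{\theta}(s'\mid s,x)\mu(d\theta)=\bar{Q}_{\mu}(s'\mid s,x)$ is strictly positive, and this denominator could in principle degenerate in the limit. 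The key observation is that the fixed-point condition $\mu^{\varepsilon}=B(s,x,s',\mu^{\varepsilon})$ is equivalent to the statement that $Q_{\theta}(s'\mid s,x)$ is $\mu^{\varepsilon}$-almost-surely constant in $\theta$ on the support of $\mu^{\varepsilon}$, for each relevant $(s,x,s')$; this reformulation avoids the division and is stable under weak convergence given the joint continuity of $Q_{\theta}(s'\mid s,x)$ in $\theta$ (regularity). I would therefore recast exhaustive learning in this quotient-free form, take the limit using weak convergence of $\mu^{\varepsilon}$ and continuity of $\theta\mapsto Q_{\theta}(s'\mid s,x)$, and then translate back to conclude $\mu^{*}=B(s,x,s',\mu^{*})$ on $\{s':\bar{Q}_{\mu^{*}}(s'\mid s,x)>0\}$. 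Care is needed to ensure that the set of outcomes $(s,x,s')$ over which the condition must hold in the limit is no larger than the set over which it held along the sequence, which is where the positivity of $\bar{Q}_{\mu^{*}}$ and continuity of $\bar{Q}_{\mu}$ in $\mu$ are used.
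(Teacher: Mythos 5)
Your proof is correct, and its overall skeleton coincides with the paper's: extract a weakly convergent subsequence of supporting beliefs $\mu^{\varepsilon}\overset{w}{\rightarrow}\mu^{*}$, use upper hemicontinuity of the three correspondences (equivalently, of $\mathcal{T}(\sigma,m,\mu)=\Sigma(\bar{Q}_{\mu})\times I_{Q}(\sigma)\times\Delta(\Theta_{Q}(m))$ from the existence proof) to conclude that $(\sigma,m)$ is a Berk-Nash equilibrium supported by $\mu^{*}$, and then pass the exhaustive-learning condition to the limit, using that $\bar{Q}_{\mu^{*}}(s'\mid s,x)>0$ forces $\bar{Q}_{\mu^{\varepsilon}}(s'\mid s,x)>0$ for all small $\varepsilon$, so that the condition is available along the tail of the sequence. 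Where you genuinely depart from the paper is in how that final limit is taken. The paper works with the Bayesian operator itself: it bounds $\bigl|\int f\,d\mu^{*}-\int f\,dB(s,x,s',\mu^{*})\bigr|$ by a triangle inequality, substitutes $B(s,x,s',\mu^{\varepsilon})$ for $\mu^{\varepsilon}$ via the fixed-point property, invokes continuity of $B(s,x,s',\cdot)$ at $\mu^{*}$ (where the denominator is strictly positive), and then still needs a separate measure-theoretic argument (a footnote in the paper) to upgrade ``equal integrals of every bounded continuous $f$'' to ``equal measures on every Borel set.'' You instead observe that $\mu=B(s,x,s',\mu)$ with $\bar{Q}_{\mu}(s'\mid s,x)>0$ is equivalent to $\theta\mapsto Q_{\theta}(s'\mid s,x)$ being $\mu$-almost surely constant, i.e., to this function having zero variance under $\mu$; since $\theta\mapsto Q_{\theta}(s'\mid s,x)$ is bounded and continuous by regularity, both $\int Q_{\theta}(s'\mid s,x)\mu(d\theta)$ and $\int Q_{\theta}(s'\mid s,x)^{2}\mu(d\theta)$ are weakly continuous in $\mu$, so the zero-variance property survives the limit even though the constant $\bar{Q}_{\mu^{\varepsilon}}(s'\mid s,x)$ varies with $\varepsilon$. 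This quotient-free reformulation buys you two things: you never need continuity of the Bayesian operator (there is no ratio to control), and the identification of the limit measure is automatic, making the paper's footnoted argument unnecessary. One shared gloss, not a gap specific to you: both your proof and the paper's assert that optimality of $\sigma^{\varepsilon}$ in the $\varepsilon$-perturbed problem yields, in the limit, optimality of $\sigma$ for the unperturbed MDP($\bar{Q}_{\mu^{*}}$) via upper hemicontinuity of the unconstrained correspondence $\Sigma(\cdot)$, even though $\sigma^{\varepsilon}$ is only optimal within $\Sigma^{\varepsilon}$; a fully spelled-out version would note that a perturbed-optimal strategy places all probability beyond the forced $\varepsilon$-weights on maximizing actions, so any limit point is supported on maximizers of the limiting problem.
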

\begin{proof}
See the Appendix.
\end{proof}
\medskip{}

We conclude by showing existence of perfect Berk-Nash equilibrium
(hence, of Berk-Nash equilibrium with exhaustive learning, by Proposition
\ref{prop:refinements2}).\medskip{}

\begin{thm}
\label{thm:exist-perfectBN}For any regular SMDP that is weakly identified
and satisfies full communication, there exists a perfect Berk-Nash
equilibrium.\end{thm}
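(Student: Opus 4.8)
The plan is to prove existence of a perfect Berk-Nash equilibrium by fixing a perturbation level $\varepsilon>0$, establishing existence of a Berk-Nash equilibrium with exhaustive learning in the $\varepsilon$-perturbed SMDP, and then taking $\varepsilon\to 0$ along a convergent subsequence to recover a perfect equilibrium of the original SMDP. For fixed $\varepsilon$, I would first invoke Theorem \ref{The:Existence}: since the $\varepsilon$-perturbed SMDP is still a regular SMDP (the strategy set $\Sigma^\varepsilon$ is nonempty, compact, and convex, and the restriction does not affect regularity of $\mathcal{Q}_\Theta$), a Berk-Nash equilibrium $(\sigma^\varepsilon,m^\varepsilon)$ exists. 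The crucial point is that this equilibrium is not merely a Berk-Nash equilibrium but one \emph{with exhaustive learning}. This is exactly where Proposition \ref{prop:refinement}(ii) enters: under the hypotheses that the SMDP is weakly identified and satisfies full communication, every Berk-Nash equilibrium of the $\varepsilon$-perturbed SMDP is automatically a Berk-Nash equilibrium with exhaustive learning. Thus for each $\varepsilon>0$ we obtain $(\sigma^\varepsilon,m^\varepsilon)$, a Berk-Nash equilibrium with exhaustive learning of the $\varepsilon$-perturbed SMDP.

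Next I would extract a limit. Because $\Sigma\times\Delta(Gr(\Gamma))$ is compact (both factors are compact, $\Sigma$ being a product of simplices and $\Delta(Gr(\Gamma))$ a simplex over a finite set), the sequence $(\sigma^\varepsilon,m^\varepsilon)_{\varepsilon}$ has a convergent subsequence as $\varepsilon\to 0$, say converging to some $(\sigma,m)\in\Sigma\times\Delta(Gr(\Gamma))$. By the very definition of perfect Berk-Nash equilibrium, the existence of a sequence of Berk-Nash equilibria with exhaustive learning of the $\varepsilon$-perturbed SMDPs converging to $(\sigma,m)$ means precisely that $(\sigma,m)$ is a perfect Berk-Nash equilibrium. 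So once existence is secured at each $\varepsilon$ and the equilibria are shown to carry the exhaustive-learning property, the limiting argument is immediate from compactness and the definition.

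The step I expect to require the most care is verifying that Theorem \ref{The:Existence} genuinely applies to the $\varepsilon$-perturbed SMDP and, in tandem, that the hypotheses of Proposition \ref{prop:refinement}(ii) are preserved under perturbation. Theorem \ref{The:Existence} was stated for a regular SMDP whose strategy space is the full $\Sigma$; I would need to confirm that replacing $\Sigma$ by the compact convex subset $\Sigma^\varepsilon$ leaves intact the fixed-point argument (the generalized Kakutani theorem applied to the joint correspondence on strategies, stationary distributions, and beliefs). The correspondences from Lemmas \ref{Lemma:Sigma(Q)}, \ref{Lemma:I(sigma)}, and \ref{Lemma:Theta(m)} all retain their nonemptiness, compact- and convex-valuedness, and upper hemicontinuity when restricted to $\Sigma^\varepsilon$, since $\Sigma^\varepsilon$ is itself nonempty, compact, and convex, and the relevant intersections with $\Sigma^\varepsilon$ inherit these properties. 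Likewise, weak identification and full communication are properties of the primitives $(Q,\mathcal{Q}_\Theta,\Gamma)$ that do not depend on which strategies are admissible, so they transfer to the perturbed environment without modification. Once these routine but essential verifications are in place, the proof closes quickly.

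\begin{proof}
Fix $\varepsilon>0$. The $\varepsilon$-perturbed SMDP is itself a regular SMDP with strategy space $\Sigma^\varepsilon$, which is nonempty, compact, and convex. Since regularity of the family $\mathcal{Q}_\Theta$ is unaffected by restricting the admissible strategies, and since the correspondences in Lemmas \ref{Lemma:Sigma(Q)}, \ref{Lemma:I(sigma)}, and \ref{Lemma:Theta(m)} retain nonemptiness, compact- and convex-valuedness, and upper hemicontinuity when restricted to $\Sigma^\varepsilon$, the fixed-point argument of Theorem \ref{The:Existence} applies verbatim. Hence there exists a Berk-Nash equilibrium $(\sigma^\varepsilon,m^\varepsilon)$ of the $\varepsilon$-perturbed SMDP. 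Because weak identification and full communication are properties of the primitives and do not depend on the admissible strategy set, the $\varepsilon$-perturbed SMDP is weakly identified and satisfies full communication. Therefore, by Proposition \ref{prop:refinement}(ii), $(\sigma^\varepsilon,m^\varepsilon)$ is in fact a Berk-Nash equilibrium with exhaustive learning of the $\varepsilon$-perturbed SMDP.

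The space $\Sigma\times\Delta(Gr(\Gamma))$ is compact, being a product of a finite product of simplices and a simplex over the finite set $Gr(\Gamma)$. Consequently, the family $(\sigma^\varepsilon,m^\varepsilon)_{\varepsilon>0}$ admits a subsequence converging, as $\varepsilon\to 0$, to some $(\sigma,m)\in\Sigma\times\Delta(Gr(\Gamma))$. By construction, this yields a sequence of Berk-Nash equilibria with exhaustive learning of the $\varepsilon$-perturbed SMDPs converging to $(\sigma,m)$. By the definition of perfect Berk-Nash equilibrium, $(\sigma,m)$ is a perfect Berk-Nash equilibrium of the SMDP.
\end{proof}
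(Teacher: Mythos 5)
Your proof is correct and follows essentially the same route as the paper: existence of a Berk-Nash equilibrium in each $\varepsilon$-perturbed SMDP via the fixed-point argument of Theorem \ref{The:Existence}, upgrading it to an equilibrium with exhaustive learning via Proposition \ref{prop:refinement}(ii) under weak identification and full communication, and then compactness of $\Sigma\times\Delta(Gr(\Gamma))$ plus the definition of perfection to pass to the limit as $\varepsilon\rightarrow0$. The only qualification worth adding is that the paper restricts to $\varepsilon\in(0,\bar{\varepsilon}]$ with $\bar{\varepsilon}=1/(|\mathbb{X}|+1)$ precisely because your claim that $\Sigma^{\varepsilon}$ is nonempty is false for large $\varepsilon$ (if $\varepsilon|\Gamma(s)|>1$ for some $s$, no strategy can put mass at least $\varepsilon$ on every feasible action); since only small $\varepsilon$ matters for the limiting argument, this is a cosmetic rather than substantive fix.
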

\begin{proof}
See the Appendix.
\end{proof}

\section{Conclusion}

We studied Markov decision processes where the agent has a prior over
a set of possible transition probability functions and updates her
beliefs using Bayes' rule. This problem is relevant in many economic
settings but usually not amenable to analysis. We propose to make
it more tractable by studying asymptotic beliefs and behavior. The
answer to the question ``Can the steady state of a Bayesian-SMDP
be characterized by reference to an MDP with fixed beliefs?'' is
a qualified yes. If the agent is sufficiently impatient, it suffices
to focus on the set of Berk-Nash equilibria. If, on the other hand,
the agent is sufficiently patient and we are interested in steady
states with exhaustive learning, then these steady states are characterized
by the notion of Berk-Nash equilibrium with exhaustive learning. Finally,
if we are interested on equilibria that are robust to experimentation,
we can restrict attention to the set of perfect Berk-Nash equilibria.

Our results hold for both the correctly-specified and misspecified
cases, and we are not aware of any prior \emph{general} results for
either of these cases. For the correctly-specified case, our results
can justify the common assumption in the literature that the agent
knows the transition probability function provided that strong identification
holds (or that there is weak identification and one is interested
in equilibria that are robust to experimentation). In the misspecified
case, our results significantly expand the range of possible applications.

\addcontentsline{toc}{section}{References}

\bibliographystyle{aer}
\bibliography{bibtex}

\appendix

\section*{Appendix\label{sec:Appendix}}

\renewcommand{\baselinestretch}{1.25}
\addcontentsline{toc}{section}{Appendix}

\textbf{Proof of Lemma \ref{Lemma:I(sigma)}. }$I_{Q}(\sigma)$ \emph{is
nonempty}: $M_{\sigma,Q}$ is a linear (hence continuous) self-map
on a convex and compact subset of an Euclidean space (the set of probability
distributions over the finite set $Gr(\Gamma)$); hence, Brower's
theorem implies existence of a fixed point.

$I_{Q}(\sigma)$ \emph{is convex valued}: For all $\alpha\in[0,1]$
and $m_{1},m_{2}\in\Delta(Gr(\Gamma))$, $\alpha M_{\sigma,Q}[m_{1}]+(1-\alpha)M_{\sigma,Q}[m_{2}]=M_{\sigma,Q}[\alpha m_{1}+(1-\alpha)m_{2}]$.
Thus, if $m_{1}=M_{\sigma,Q}[m_{1}]$ and $m_{2}=M_{\sigma,Q}[m_{2}]$,
then $\alpha m_{1}+(1-\alpha)m_{2}=M_{\sigma,Q}[\alpha m_{1}+(1-\alpha)m_{2}]$.

$I_{Q}(\sigma)$ \emph{is upper hemicontinuous and compact valued}:
Fix any sequence $(\sigma_{n},m_{n})_{n}$ in $\Sigma\times\Delta(Gr(\Gamma))$
such that $\lim_{n\rightarrow\infty}(\sigma_{n},m_{n})=(\sigma,m)$
and such that $m_{n}\in I_{Q}(\sigma_{n})$ for all $n$. Since $M_{\sigma_{n},Q}[m_{n}]=m_{n}$,
$||m-M_{\sigma,Q}[m]||\leq||m-m_{n}||+||M_{\sigma_{n},Q}[m_{n}-m]||+||M_{\sigma_{n},Q}[m]-M_{\sigma,Q}[m]||$.
The first term in the RHS vanishes by the hypothesis. The second term
satisfies $||M_{\sigma_{n},Q}[m_{n}-m]||\leq||M_{\sigma_{n},Q}||\times||m_{n}-m||$
and also vanishes.\footnote{For a matrix $A$, $||A||$ is understood as the operator norm.}
For the third term, note that $\sigma\mapsto M_{\sigma,Q}[m]$ is
a linear mapping and $\sup_{\sigma}||M_{\sigma,Q}[m]||\leq\max_{s'}|\sum_{(s,x)\in Gr(\Gamma)}Q(s'\mid s,x)m(s,x)|<\infty$.
Thus $||M_{\sigma_{n},Q}[m]-M_{\sigma,Q}[m]||\leq K\times||\sigma_{n}-\sigma||$
for some $K<\infty$ , and so it also vanishes. Therefore, $m=M_{\sigma,Q}[m]$;
thus, $I_{Q}(\cdot)$ has a closed graph and so $I_{Q}(\sigma)$ is
a closed set. Compactness of $I_{Q}(\sigma)$ follows from compactness
of $\Delta(Gr(\Gamma))$. Therefore, $I_{Q}(\cdot)$ is upper hemicontinuous
(see \cite{aliprantis2006infinite}, Theorem 17.11). $\square$

\medskip{}

The proof of Lemma \ref{Lemma:Theta(m)} relies on the following claim.
The proofs of Claims A, B, and C in this appendix appear in the Online
Appendix .\medskip{}

\textbf{Claim A. (i)} \emph{For any regular SMDP, there exists $\theta^{*}\in\Theta$
and $K<\infty$ such that, for all $m\in\Delta(Gr(\Gamma))$, $K_{Q}(m,\theta^{*})\leq K$.
}\textbf{\emph{(ii)}}\emph{ Fix any $\theta\in\Theta$ and a sequence
$(m_{n})_{n}$ in $\Delta(Gr(\Gamma))$ such that $Q_{\theta}(s'\mid s,x)>0$
for all $(s',s,x)\in\mathbb{S}\times Gr(\Gamma)$ such that $Q(s'\mid s,x)>0$
and $\lim_{n\rightarrow\infty}m_{n}=m$. Then $\lim_{n\rightarrow\infty}K_{Q}(m_{n},\theta)=K_{Q}(m,\theta)$.
}\textbf{\emph{(iii)}}\emph{ $K_{Q}$ is (jointly) lower semicontinuous:
Fix any $(m_{n})_{n}$ and $(\theta_{n})_{n}$ such that $\lim_{n\rightarrow\infty}m_{n}=m$
and $\lim_{n\rightarrow\infty}\theta_{n}=\theta$. Then $\liminf_{n\rightarrow\infty}K_{Q}(m_{n},\theta_{n})\geq K_{Q}(m,\theta)$.}

\medskip{}

\textbf{Proof of Lemma \ref{Lemma:Theta(m)}.} (i) By Jensen's inequality
and strict concavity of $\ln(\cdot)$, $K_{Q}(m,\theta)\geq-\sum_{(s,x)\in Gr(\Gamma)}\ln(E_{Q(\cdot\mid s,x)}[\frac{Q_{\theta}(S'\mid s,x)}{Q(S'\mid s,x)}])m(s,x)=0$,
with equality if and only if $Q_{\theta}(\cdot\mid s,x)=Q_{\theta}(\cdot\mid s,x)$
for all $(s,x)$ such that $m(s,x)>0$.

(ii) \emph{$\Theta_{Q}(m)$ is nonempty}: By Claim A(i), there exists
$K<\infty$ such that the minimizers are in the constraint set $\{\theta\in\Theta:K_{Q}(m,\theta)\leq K\}$.
Because $K_{Q}(m,\cdot)$ is continuous over a compact set, a minimum
exists.

\emph{$\Theta_{Q}(\cdot)$ is uhc} and compact valued: Fix any $(m_{n})_{n}$
and $(\theta_{n})_{n}$ such that $\lim_{n\rightarrow\infty}m_{n}=m$,
$\lim_{n\rightarrow\infty}\theta_{n}=\theta$, and $\theta_{n}\in\Theta_{Q}(m_{n})$
for all $n$. We establish that $\theta\in\Theta_{Q}(m)$ (so that
$\Theta(\cdot)$ has a closed graph and, by compactness of $\Theta$,
it is uhc). Suppose, in order to obtain a contradiction, that $\theta\notin\Theta_{Q}(m)$.
Then, by Claim A(i), there exists $\hat{\theta}\in\Theta$ and $\varepsilon>0$
such that $K_{Q}(m,\hat{\theta})\leq K_{Q}(m,\theta)-3\varepsilon$
and $K_{Q}(m,\hat{\theta})<\infty$. By regularity, there exists $(\hat{\theta}_{j})_{j}$
with $\lim_{j\rightarrow\infty}\hat{\theta}_{j}=\hat{\theta}$ and,
for all $j$, $Q_{\hat{\theta}_{j}}(s'\mid s,x)>0$ for all \emph{$(s',s,x)\in\mathbb{S}^{2}\times\mathbb{X}$
such that $Q(s'\mid s,x)>0$}. We will show that there is an element
of the sequence, $\hat{\theta}_{J}$, that ``does better'' than
$\theta_{n}$ given $m_{n}$, which is a contradiction. Because $K_{Q}(m,\hat{\theta})<\infty$,
continuity of $K_{Q}(m,\cdot)$ implies that there exists $J$ large
enough such that $\left|K_{Q}(m,\hat{\theta}_{J})-K_{Q}(m,\hat{\theta})\right|\leq\varepsilon/2$.
Moreover, Claim A(ii) applied to $\theta=\hat{\theta}_{J}$ implies
that there exists $N_{\varepsilon,J}$ such that, for all $n\geq N_{\varepsilon,J}$,
$\left|K_{Q}(m_{n},\hat{\theta}_{J})-K_{Q}(m,\hat{\theta}_{J})\right|\leq\varepsilon/2$.
Thus, for all $n\geq N_{\varepsilon,J}$, $\bigl|K_{Q}(m_{n},\hat{\theta}_{J})-K_{Q}(m,\hat{\theta})\bigr|\leq\bigl|K_{Q}(m_{n},\hat{\theta}_{J})-K_{Q}(m,\hat{\theta}_{J})\bigr|+\bigl|K_{Q}(m,\hat{\theta}_{J})-K_{Q}(m,\hat{\theta})\bigr|\leq\varepsilon$
and, therefore, 
\begin{equation}
K_{Q}(m_{n},\hat{\theta}_{J})\leq K_{Q}(m,\hat{\theta})+\varepsilon\leq K_{Q}(m,\theta)-2\varepsilon.\label{eq:uhc_1-1}
\end{equation}

Suppose $K_{Q}(m,\theta)<\infty$. By Claim A(iii), there exists $n_{\varepsilon}\geq N_{\varepsilon,J}$
such that $K_{Q}(m_{n_{\varepsilon}},\theta_{n_{\varepsilon}})\geq K_{Q}(m,\theta)-\varepsilon$.
This result, together with (\ref{eq:uhc_1-1}), implies that $K_{Q}(m_{n_{\varepsilon}},\hat{\theta}_{J})\leq K_{Q}(m_{n_{\varepsilon}},\theta_{n_{\varepsilon}})-\varepsilon$.
But this contradicts $\theta_{n_{\varepsilon}}\in\Theta_{Q}(m_{n_{\varepsilon}})$.
Finally, if $K_{Q}(m,\theta)=\infty$, Claim A(iii) implies that there
exists $n_{\varepsilon}\geq N_{\varepsilon,J}$ such that $K_{Q}(m_{n_{\varepsilon}},\theta_{n_{\varepsilon}})\geq2K$,
where $K$ is the bound defined in Claim A(i). But this also contradicts
$\theta_{n_{\varepsilon}}\in\Theta_{Q}(m_{n_{\varepsilon}})$. Thus,
$\Theta_{Q}(\cdot)$ has a closed graph, and so $\Theta_{Q}(m)$ is
a closed set. Compactness of $\Theta_{Q}(m)$ follows from compactness
of $\Theta$. Therefore, $\Theta_{Q}(\cdot)$ is upper hemicontinuous
(see \cite{aliprantis2006infinite}, Theorem 17.11). $\square$

\medskip{}

\textbf{Proof of Theorem \ref{The:Existence}. } Let $\mathbb{W}=\Sigma\times\Delta(Gr(\Gamma))\times\Delta(\Theta)$
and endow it with the product topology (given by the Euclidean one
for $\Sigma\times\Delta(Gr(\Gamma))$ and the weak topology for $\Delta(\Theta)$).
Clearly, $\mathbb{W}\ne\{\emptyset\}$. Since $\Theta$ is compact,
$\Delta(\Theta)$ is compact under the weak topology; $\Sigma$ and
$\Delta(Gr(\Gamma))$ are also compact. Thus by Tychonoff's theorem
(see \cite{aliprantis2006infinite}), $\mathbb{W}$ is compact under
the product topology. $\mathbb{W}$ is also convex. Finally, $\mathbb{W}\subseteq\mathbb{M}^{2}\times rca(\Theta)$
where $\mathbb{M}$ is the space of $|\mathbb{S}|\times|\mathbb{X}|$
real-valued matrices and $rca(\Theta)$ is the space of regular Borel
signed measures endowed with the weak topology. The space $\mathbb{M}^{2}\times rca(\Theta)$
is locally convex with a family of seminorms $\{(\sigma,m,\mu)\mapsto p_{f}(\sigma,m,\mu)=||(\sigma,m)||+|\int_{\Omega}f(x)\mu(dx)|\colon\mbox{\text{ }}f\in\mathbb{C}(\Omega)\}$
($\mathbb{C}(\Omega)$ is the space of real-valued continuous and
bounded functions and $||.||$ is understood as the spectral norm).
Also, we observe that $(\sigma,m,\mu)=0$ iff $p_{f}(\sigma,m,\mu)=0$
for all $f\in\mathbb{C}(\Omega)$, thus $\mathbb{M}^{2}\times rca(\Theta)$
is also Hausdorff.%

Let $\mathcal{T}:\mathbb{W}\rightarrow2^{\mathbb{W}}$ be such that
$\mathcal{T}(\sigma,m,\mu)=\Sigma(\bar{Q}_{\mu})\times I_{Q}(\sigma)\times\Delta(\Theta_{Q}(m))$.
Note that if $(\sigma^{*},m^{*},\mu^{*})$ is a fixed point of $\mathcal{T}$,
then $m^{*}$ is a Berk-Nash equilibrium. By Lemma \ref{Lemma:Sigma(Q)},
$\Sigma(\cdot)$ is nonempty, convex valued, compact valued, and upper
hemicontinuous. Thus, for every $\mu\in\Delta(\Theta)$, $\Sigma(\bar{Q}_{\mu})$
is nonempty, convex valued, and compact valued. Also, because $Q_{\theta}$
is continuous in $\theta$ (by regularity assumption), then $\bar{Q}_{\mu}$
is continuous (under the weak topology) in $\mu$. Since $Q\mapsto\Sigma(Q)$
is upper hemicontinuous, then $\Sigma(\bar{Q}_{\mu})$ is also upper
hemicontinuous as a function of $\mu$. By Lemma \ref{Lemma:I(sigma)},
$I_{Q}(\cdot)$ is nonempty, convex valued, compact valued and upper
hemicontinuous. By Lemma \ref{Lemma:Theta(m)} and the regularity
condition, the correspondence $\Theta_{Q}(\cdot)$ is nonempty, compact
valued, and upper hemicontinuous; hence, the correspondence $\Delta(\Theta_{Q}(\cdot))$
is nonempty, upper hemicontinuous (see \cite{aliprantis2006infinite},
Theorem 17.13), compact valued (see \cite{aliprantis2006infinite},
Theorem 15.11) and, trivially, convex valued. Thus, the correspondence
$\mathcal{T}$ is nonempty, convex valued, compact valued (by Tychonoff's
Theorem), and upper hemicontinuous (see \cite{aliprantis2006infinite},
Theorem 17.28) under the product topology; hence, it has a closed
graph (see \cite{aliprantis2006infinite}, Theorem 17.11). Since
$\mathbb{W}$ is a nonempty compact convex subset of a locally Hausdorff
space, then there exists a fixed point of $\mathcal{T}$ by the Kakutani-Fan-Glicksberg
theorem (see \cite{aliprantis2006infinite}, Corollary 17.55). $\square$

\medskip{}

For the proof of Lemma \ref{Lemma:Berk}, we rely on the following
definitions and Claim. Let $K^{\ast}(m)=\inf_{\theta\in\Theta}K_{Q}(m,\theta)$
and let $\hat{\Theta}\subseteq\Theta$ be a dense set such that, for
all $\theta\in\hat{\Theta}$, $Q_{\theta}(s'\mid s,x)>0$ for all
$(s',s,x)\in\mathbb{S}\times Gr(\Gamma)$ such that $Q(s'\mid s,x)>0$.
Existence of such a set $\hat{\Theta}$ follows from the regularity
assumption.

\medskip{}

\textbf{Claim B.} Suppose $\lim_{t\rightarrow\infty}\left\Vert m_{t}-m\right\Vert =0$
a.s.-$\mathbf{P}^{f}$ . Then: \textbf{(i)} For all $\theta\in\hat{\Theta}$,
\begin{align*}
\lim_{t\rightarrow\infty}t^{-1}\sum_{\tau=1}^{t}\log\frac{Q(s_{\tau}|s_{\tau-1},x_{\tau-1})}{Q_{\theta}(s_{\tau}|s_{\tau-1},x_{\tau-1})}=\sum_{(s,x)\in Gr(\Gamma)}E_{Q(\cdot|s,x)}\Bigl[\log\frac{Q(S'|s,x)}{Q_{\theta}(S'|s,x)}\Bigr]m(s,x)
\end{align*}
a.s.-$\mathbf{P}^{f}$. \textbf{(ii)} For $\mathbf{P}^{f}$-almost
all $h^{\infty}\in\mathbb{H}^{\infty}$ and for any $\epsilon>0$
and $\alpha=(\inf_{\Theta\colon d_{m}(\theta)\geq\epsilon}K_{Q}(m,\theta)-K^{\ast}(m))/3$,
there exists $T$ such that, for all $t\geq T$, 
\begin{align*}
t^{-1}\sum_{\tau=1}^{t}\log\frac{Q(s_{\tau}|s_{\tau-1},x_{\tau-1})}{Q_{\theta}(s_{\tau}|s_{\tau-1},x_{\tau-1})}\geq K^{\ast}(m)+\frac{3}{2}\alpha
\end{align*}
for all $\theta\in\{\Theta\colon d_{m}(\theta)\geq\epsilon\}$, where
$d_{m}(\theta)=\inf_{\tilde{\theta}\in\Theta_{Q}(m)}||\theta-\tilde{\theta}||$.

\medskip{}

\textbf{Proof of Lemma \ref{Lemma:Berk}. }It suffices to show that
$\lim_{t\rightarrow\infty}\int_{\Theta}d_{m}(\theta)\mu_{t}(d\theta)=0$
a.s.-$\mathbf{P}^{f}$ over $\mathcal{H}$. Let $K^{\ast}(m)\equiv K_{Q}(m,\Theta_{Q}(m))$.
For any $\eta>0$ let $\Theta_{\eta}(m)=\left\{ \theta\in\Theta:d_{m}(\theta)<\eta\right\} $,
and $\hat{\Theta}_{\eta}(m)=\hat{\Theta}\cap\Theta_{\eta}(m)$ (the
set $\hat{\Theta}$ is defined in condition 3 of Definition \ref{Def:regular},
i.e., regularity). We now show that $\mu_{0}(\hat{\Theta}_{\eta}(m))>0$.
By Lemma \ref{Lemma:Theta(m)}, $\Theta_{Q}(m)$ is nonempty. By denseness
of $\hat{\Theta}$, $\hat{\Theta}_{\eta}(m)$ is nonempty. Nonemptiness
and continuity of $\theta\mapsto Q_{\theta}$, imply that there exists
a non-empty open set $U\subseteq\hat{\Theta}_{\eta}(m)$. By full
support, $\mu_{0}(\hat{\Theta}_{\eta}(m))>0$.%
{}  Also, observe that for any $\epsilon>0$, $\{\Theta\colon d_{m}(\theta)\geq\epsilon\}$
is compact. This follows from compactness of $\Theta$ and continuity
of $\theta\mapsto d_{m}(\theta)$ (which follows by Lemma \ref{Lemma:Theta(m)}
and an application of the Theorem of the Maximum). Compactness of
$\{\Theta\colon d_{m}(\theta)\geq\epsilon\}$ and lower semi-continuity
of $\theta\mapsto K_{Q}(m,\theta)$ (see Claim A(iii)) imply that
$\inf_{\Theta\colon d_{m}(\theta)\geq\epsilon}K_{Q}(m,\theta)=\min_{\Theta\colon d_{m}(\theta)\geq\epsilon}K_{Q}(m,\theta)>K^{\ast}(m)$.
Let $\alpha\equiv(\min_{\Theta\colon d_{m}(\theta)\geq\epsilon}K_{Q}(m,\theta)-K^{\ast}(m))/3>0$.
Also, let $\eta>0$ be chosen such that $K_{Q}(m,\theta)\leq K^{\ast}(m)+0.25\alpha$
for all $\theta\in\Theta_{\eta}(m)$ (such $\eta$ always exists by
continuity of $\theta\mapsto K_{Q}(m,\theta)$). 

Let $\mathcal{H}_{1}$ be the subset of $\mathcal{H}$ for which the
statements in Claim B hold; note that $\mathbf{P}^{f}\left(\mathcal{H}\setminus\mathcal{H}_{1}\right)=0$.
Henceforth, fix $h^{\infty}\in\mathcal{H}_{1}$; we omit $h^{\infty}$
from the notation to ease the notational burden. By simple algebra
and the fact that $d_{m}$ is bounded in $\Theta$, it follows that,
for all $\epsilon>0$ and some finite $C>0$,
\begin{align*}
\int_{\Theta}d_{m}(\theta)\mu_{t}(d\theta) & =\frac{\int_{\Theta}d_{m}(\theta)Q_{\theta}(s_{t}\mid s_{t-1},x_{t-1})\mu_{t-1}(d\theta)}{\int_{\Theta}Q_{\theta}(s_{t}\mid s_{t-1},x_{t-1})\mu_{t-1}(d\theta)}=\frac{\int_{\Theta}d_{m}(\theta)Z_{t}(\theta)\mu_{0}(d\theta)}{\int_{\Theta}Z_{t}(\theta)\mu_{0}(d\theta)}\\
 & \leq\epsilon+C\frac{\int_{\left\{ \Theta\colon d_{m}(\theta)\geq\epsilon\right\} }Z_{t}(\theta)\mu_{0}(d\theta)}{\int_{\hat{\Theta}_{\eta}(m)}Z_{t}(\theta)\mu_{0}(d\theta)}\equiv\epsilon+C\frac{A_{t}(\epsilon)}{B_{t}(\eta)}.
\end{align*}
where $Z_{t}(\theta)\equiv\prod_{\tau=1}^{t}\frac{Q_{\theta}(s_{\tau}|s_{\tau-1},x_{\tau-1})}{Q(s_{\tau}|s_{\tau-1},x_{\tau-1})}=\exp\left\{ -\sum_{\tau=1}^{t}\log\left(\frac{Q(s_{\tau}|s_{\tau-1},x_{\tau-1})}{Q_{\theta}(s_{\tau}|s_{\tau-1},x_{\tau-1})}\right)\right\} $.
Hence, it suffices to show that 
\begin{equation}
\limsup_{t\rightarrow\infty}\left\{ \exp\left\{ t\left(K^{\ast}(m)+0.5\alpha\right)\right\} A_{t}(\epsilon)\right\} =0\label{eq:A_limit-1-1-1}
\end{equation}
and 
\begin{equation}
\liminf_{t\rightarrow\infty}\left\{ \exp\left\{ t\left(K^{\ast}(m)+0.5\alpha\right)\right\} B_{t}(\eta)\right\} =\infty.\label{eq:B_limit-1-1-1}
\end{equation}

Regarding equation (\ref{eq:A_limit-1-1-1}), we first show that 
\[
\lim_{t\rightarrow\infty}\sup_{\{\Theta\colon d_{m}(\theta)\geq\epsilon\}}\Bigl\{\left(K^{\ast}(m)+0.5\alpha\right)-t^{-1}\sum_{\tau=1}^{t}\log\frac{Q(s_{\tau}|s_{\tau-1},x_{\tau-1})}{Q_{\theta}(s_{\tau}|s_{\tau-1},x_{\tau-1})}\Bigr\}\leq const<0.
\]
To show this, note that, by Claim B(ii) there exists a $T$, such
that for all $t\geq T$, $t^{-1}\sum_{\tau=1}^{t}\log\frac{Q(s_{\tau}|s_{\tau-1},x_{\tau-1})}{Q_{\theta}(s_{\tau}|s_{\tau-1},x_{\tau-1})}\geq K^{\ast}(m)+\frac{3}{2}\alpha$,
for all $\theta\in\{\Theta\colon d_{m}(\theta)\geq\epsilon\}$. Thus,
\begin{align*}
\lim_{t\rightarrow\infty}\sup_{\{\Theta\colon d_{m}(\theta)\geq\epsilon\}}\Bigl\{ K^{\ast}(m)+\frac{\alpha}{2}-t^{-1}\sum_{\tau=1}^{t}\log\frac{Q(s_{\tau}|s_{\tau-1},x_{\tau-1})}{Q_{\theta}(s_{\tau}|s_{\tau-1},x_{\tau-1})}\Bigr\}\leq-\alpha.
\end{align*}
 Therefore,
\begin{align*}
 & \limsup_{t\rightarrow\infty}\left\{ \exp\left\{ t\left(K^{\ast}(m)+0.5\alpha\right)\right\} A_{t}(\epsilon)\right\} \\
 & \leq\limsup_{t\rightarrow\infty}\sup_{\{\Theta\colon d_{m}(\theta)\geq\epsilon\}}\exp\Bigl\{ t\Bigl(\left(K^{\ast}(m)+0.5\alpha\right)-t^{-1}\sum_{\tau=1}^{t}\log\frac{Q(s_{\tau}|s_{\tau-1},x_{\tau-1})}{Q_{\theta}(s_{\tau}|s_{\tau-1},x_{\tau-1})}\Bigr)\Bigr\}\\
 & =0.
\end{align*}

Regarding equation (\ref{eq:B_limit-1-1-1}), by Fatou's lemma and
some algebra it suffices to show that 
\[
\liminf_{t\rightarrow\infty}\exp\left\{ t\left(K^{\ast}(m)+0.5\alpha\right)\right\} Z_{t}(\theta)=\infty>0
\]
(pointwise on $\theta\in\hat{\Theta}_{\eta}(m)$), or, equivalently,
\[
\liminf_{t\rightarrow\infty}\Bigl(K^{\ast}(m)+0.5\alpha-t^{-1}\sum_{\tau=1}^{t}\log\frac{Q(s_{\tau}|s_{\tau-1},x_{\tau-1})}{Q_{\theta}(s_{\tau}|s_{\tau-1},x_{\tau-1})}\Bigr)>0.
\]
By Claim B(i), 
\[
\liminf_{t\rightarrow\infty}\Bigl(K^{\ast}(m)+0.5\alpha-t^{-1}\sum_{\tau=1}^{t}\log\frac{Q(s_{\tau}|s_{\tau-1},x_{\tau-1})}{Q_{\theta}(s_{\tau}|s_{\tau-1},x_{\tau-1})}\Bigr)=K^{\ast}(m)+0.5\alpha-K_{Q}(m,\theta)
\]
(pointwise on $\theta\in\hat{\Theta}_{\eta}(m)$). By our choice of
$\eta$, the RHS is greater than $0.25\alpha$ and our desired result
follows. $\square$

\medskip{}

\textbf{Proof of Theorem \ref{theo:Stability_implies_equilibrium}.
}%
For any $s\in\mathbb{S}$ and $\mu\in\Delta(\Theta)$, let 
\begin{align*}
x(s,\mu) & \equiv\arg\max_{x\in\Gamma(s)}E_{\bar{Q}_{\mu}(\cdot\mid s,x)}\left[\pi(s,x,S')\right]\\
\tilde{\delta}(s,\mu) & \equiv\min_{x\in\Gamma(s)\setminus x(s,\mu)}\Bigl\{\max_{x\in\Gamma(s)}E_{\bar{Q}_{\mu}(\cdot\mid s,x)}\left[\pi(s,x,S')\right]-E_{\bar{Q}_{\mu}(\cdot\mid s,x)}\left[\pi(s,x,S')\right]\Bigr\}\\
\hat{\delta} & \equiv\max\bigl\{\min_{s,\mu}\tilde{\delta}(s,\mu),0\bigr\}\\
\bar{\delta} & \equiv\max\bigl\{\delta\geq0\mid\hat{\delta}-2\frac{\delta}{1-\delta}M\geq0\bigr\}=\frac{\hat{\delta}/M}{2+\hat{\delta}/M},
\end{align*}
where $M\equiv\max_{(s,x)\in Gr(\Gamma),s\in\mathbb{S}'}|\pi(s,x,s')|$.

By Lemma \ref{Lemma:Berk}, for all open sets $U\supseteq\Theta_{Q}(m)$,
$\lim_{t\rightarrow\infty}\mu_{t}\left(U\right)=1$ a.s.-$\mathbf{P}^{f}$
in $\mathcal{H}$. Also Let $g_{\tau}(h^{\infty})(s,x)=\mathbf{1}_{(s,x)}(s_{\tau},x_{\tau})-M_{\sigma_{\tau}}(s,x\mid s_{\tau-1},x_{\tau-1})$
for any $\tau$ and $(s,x)\in Gr(\Gamma)$ and $h^{\infty}\in\mathbb{H}$.
The sequence $(g_{\tau})_{\tau}$ is a Martingale difference and by
analogous arguments to those in the proof of Claim B: $\lim_{t\rightarrow\infty}||t^{-1}\sum_{\tau=0}^{t}g_{\tau}(h^{\infty})||=0$
a.s.-$\mathbf{P}^{f}$. Let $\mathcal{H}^{\ast}$ be the set in $\mathcal{H}$
such that for all $h^{\infty}\in\mathcal{H}^{\ast}$ the following
holds: for all open sets $U\supseteq\Theta_{Q}(m)$, $\lim_{t\rightarrow\infty}\mu_{t}\left(U\right)=1$
and $\lim_{t\rightarrow\infty}||t^{-1}\sum_{\tau=0}^{t}g_{\tau}(h^{\infty})||=0$.
Note that $\mathbf{P}^{f}\left(\mathcal{H}\setminus\mathcal{H}^{\ast}\right)=0$.
Henceforth, fix an $h^{\infty}\in\mathcal{H}^{\ast}$, which we omit
from the notation. 

We first establish that $m\in I_{Q}(\sigma)$. Note that 
\[
\left\Vert m-M_{\sigma,Q}\left[m\right]\right\Vert \leq\left\Vert m-m_{t}\right\Vert +\left\Vert m_{t}-M_{\sigma,Q}\left[m\right]\right\Vert 
\]
where $(s,x)\mapsto M_{\sigma,Q}[p](s,x)\equiv\sum_{\tilde{s},\tilde{x}\in Gr(\Gamma)}M_{\sigma}(s,x|\tilde{s},\tilde{x})p(\tilde{s},\tilde{x})$
for any $p\in\Delta(Gr(\Gamma))$. By stability, the first term in
the RHS vanishes, so it suffices to show that $\lim_{t\rightarrow\infty}||m_{t}-M_{\sigma,Q}\left[m\right]||=0$.
The fact that $\lim_{t\rightarrow\infty}||t^{-1}\sum_{\tau=0}^{t}g_{\tau}||=0$
and the triangle inequality imply
\begin{align}
\lim_{t\rightarrow\infty}\bigl\Vert m_{t}-M_{\sigma,Q}\left[m\right]\bigr\Vert\leq & \lim_{t\rightarrow\infty}\bigl\Vert m_{t}-t^{-1}\sum_{\tau=1}^{t}M_{\sigma_{\tau},Q}(\cdot,\cdot\mid s_{\tau-1},x_{\tau-1})\bigr\Vert\nonumber \\
 & +\lim_{t\rightarrow\infty}\bigl\Vert t^{-1}\sum_{\tau=1}^{t}M_{\sigma_{\tau},Q}(\cdot,\cdot\mid s_{\tau-1},x_{\tau-1})-M_{\sigma,Q}\left[m\right]\bigr\Vert\nonumber \\
= & \lim_{t\rightarrow\infty}\bigl\Vert t^{-1}\sum_{\tau=1}^{t}g_{\tau}\bigr\Vert+\lim_{t\rightarrow\infty}\bigl\Vert t^{-1}\sum_{\tau=1}^{t}M_{\sigma_{\tau},Q}(\cdot,\cdot\mid s_{\tau-1},x_{\tau-1})-M_{\sigma,Q}\left[m\right]\bigr\Vert\nonumber \\
\leq & \lim_{t\rightarrow\infty}\bigl\Vert t^{-1}\sum_{\tau=1}^{t}M_{\sigma_{\tau},Q}(\cdot,\cdot\mid s_{\tau-1},x_{\tau-1})-M_{\sigma,Q}\bigl[t^{-1}\sum_{\tau=1}^{t}\mathbf{1}_{(\cdot,\cdot)}(s_{\tau-1},x_{\tau-1})\bigr]\bigr\Vert\nonumber \\
 & +\lim_{t\rightarrow\infty}\bigl\Vert M_{\sigma,Q}\bigl[t^{-1}\sum_{\tau=1}^{t}\mathbf{1}_{(\cdot,\cdot)}(s_{\tau-1},x_{\tau-1})\bigr]-M_{\sigma,Q}[m]\bigr\Vert.\label{eq:m-Mm-1}
\end{align}
Moreover, by definition of $M_{\sigma,Q}$ (see equation (\ref{eq:transition_kernel-1})),
for all $(s,x)\in Gr(\Gamma)$, 
\begin{align}
t^{-1}\sum_{\tau=1}^{t}M_{\sigma_{\tau},Q}(s,x\mid s_{\tau-1},x_{\tau-1}) & =\sum_{\tilde{s},\tilde{x}\in Gr(\Gamma)}Q(s|\tilde{s},\tilde{x})t^{-1}\sum_{\tau=1}^{t}\sigma_{\tau}(x|s)\mathbf{1}_{(\tilde{s},\tilde{x})}(s_{\tau-1},x_{\tau-1})\label{eq:M1}\\
M_{\sigma,Q}\bigl[t^{-1}\sum_{\tau=1}^{t}\mathbf{1}_{(\cdot,\cdot)}(s_{\tau-1},x_{\tau-1})\bigr] & =\sum_{\tilde{s},\tilde{x}\in Gr(\Gamma)}Q(s\mid\tilde{s},\tilde{x})t^{-1}\sum_{\tau=1}^{t}\sigma(x\mid s)\mathbf{1}_{(\tilde{s},\tilde{x})}(s_{\tau-1},x_{\tau-1}).\label{eq:M2}
\end{align}
Equations (\ref{eq:M1}) and (\ref{eq:M2}) and stability ($\sigma_{t}\rightarrow\sigma)$
imply that the first term in the RHS of \ref{eq:m-Mm-1} vanishes.
The second term in the RHS also vanishes under stability due to continuity
of the operator $M_{\sigma}[.]$ and the fact that $t^{-1}\sum_{\tau=1}^{t}\mathbf{1}_{(\cdot,\cdot)}(s_{\tau-1},x_{\tau-1})=\frac{t-1}{t}m_{t-1}\left(\cdot,\cdot\right)$.
Thus, $\left\Vert m-M_{\sigma,Q}\left[m\right]\right\Vert =0$, and
so $m\in I_{Q}(\sigma)$. 

Therefore, for proving cases (i) and (ii), we need to establish that,
for each case, there exists $\mu\in\Delta(\Theta_{Q}(m))$ such that
$\sigma$ is an optimal strategy for the MDP($\bar{Q}_{\mu}$).

(i) Consider any $\delta\in[0,\bar{\delta}]$. Since $\Delta(\Theta)$
is compact under the weak topology, there exists a subsequence of
$(\mu_{t})_{t}$ --- which we still denote as $(\mu_{t})_{t}$ ---
such that $\mu_{t}\overset{w}{\rightarrow}\mu_{\infty}$ and $\mu_{\infty}\in\Delta(\Theta_{Q}(m))$.
Since $\sigma_{t}\in\bar{\Sigma}(\mu_{t})$ for all $t$ and $\bar{\Sigma}$
is uhc (see Lemma \ref{Lemma:Sigma(Q)-1}), stability $(\sigma_{t}\rightarrow\sigma)$
implies $\sigma\in\bar{\Sigma}(\mu_{\infty})$. We conclude by showing
that $\sigma$ is an optimal strategy for the MDP($\bar{Q}_{\mu_{\infty}}$).
If $\delta=\bar{\delta}=0$, this assertion is trivial. If $\bar{\delta}\geq\delta>0$,
it suffices to show that 
\begin{align}
x(s,\mu_{\infty}) & =\arg\max_{x\in\Gamma(s)}\int_{\mathbb{S}}\left\{ \pi(s,x,s')+\delta W(s',B(s,x,s',\mu_{\infty}))\right\} \bar{Q}_{\mu_{\infty}}(ds'|s,x)\nonumber \\
 & =\arg\max_{x\in\Gamma(s)}\int_{\mathbb{S}}\left\{ \pi(s,x,s')+\delta W(s',\mu_{\infty})\right\} \bar{Q}_{\mu_{\infty}}(ds'|s,x).\label{eq:argmax's}
\end{align}
We conclude by establishing (\ref{eq:argmax's}). Note that, since
$\bar{\delta}>0$, it follows that $\hat{\delta}>0$, which in turn
implies that $x(s,\mu_{\infty})$ is a singleton. The first equality
in (\ref{eq:argmax's}) holds because, by definition of $\bar{\delta}$,
\[
E_{\bar{Q}_{\mu_{\infty}}(\cdot|s,x(s,\mu_{\infty}))}\left[\pi(s,x(s,\mu_{\infty}),S')\right]-E_{\bar{Q}_{\mu_{\infty}}(\cdot|s,x)}\left[\pi(s,x,S')\right]\geq\hat{\delta}\geq2\frac{\delta M}{1-\delta}>0
\]
for all $x\in\Gamma(s)\backslash\{x(s,\mu_{\infty})\}$, and, by definition
of $M$,
\[
2\frac{\delta M}{1-\delta}\geq\delta\int_{\mathbb{S}}\Bigl\{ W(s',B(s,x,s',\mu_{\infty}))\bar{Q}_{\mu_{\infty}}(ds'|s,x)-W(s',B(s,x(s,\mu_{\infty}),s',\mu_{\infty}))\bar{Q}_{\mu_{\infty}}(ds'|s,x(s,\mu_{\infty}))\Bigr\}.
\]
The second equality in (\ref{eq:argmax's}) holds by similar arguments.

(ii) By stability with exhaustive learning, there exists a subsequence
$(\mu_{t(j)})_{j}$ such that $\mu_{t(j)}\overset{w}{\rightarrow}\mu^{*}$.
This fact and the fact that for all open $U\supseteq\Theta_{Q}(m)$,
$\lim_{t\rightarrow\infty}\mu_{t(j)}\left(U\right)=1$, imply that
$\mu^{*}\in\Delta(\Theta_{Q}(m))$. Since $\sigma_{t(j)}\in\bar{\Sigma}(\mu_{t(j)})$
for all $j$ and $\bar{\Sigma}$ is uhc (see Lemma \ref{Lemma:Sigma(Q)-1}),
stability $(\sigma_{t}\rightarrow\sigma)$ implies $\sigma\in\bar{\Sigma}(\mu^{*})$.
Moreover, by condition of stability with exhaustive learning (i.e.,
$\mu^{\ast}=B(s,x,s',\mu^{\ast})$ for all $(s,x)\in Gr(\Gamma)$
and $s'\in supp(\bar{Q}_{\mu^{*}}(\cdot|s,x))$), $W(s,\mu^{*})=\max_{x\in\Gamma(s)}\int_{\mathbb{S}}\left\{ \pi(s,x,s')+\delta W(s',\mu^{*})\right\} \bar{Q}_{\mu^{*}}(ds'|s,x)$
for all $s\in\mathbb{S}$. Then, by uniqueness of the value function,
$\sigma$ is an optimal strategy for the MDP($\bar{Q}_{\mu_{\ast}}$).
$\square$

\medskip{}

The proof of Proposition\textbf{ }\ref{prop:refinement}\textbf{ }relies
on the following claim. \medskip{}

\textbf{Claim C. }If $(\sigma,m)\in\Sigma\times\Delta(\mathbb{S}\times\mathbb{X})$
is such that $\sigma\in\Sigma^{\varepsilon}$ and $m\in I_{Q}(m)$
with $Q$ satisfying the full communication condition in Definition
\ref{de:Fullcomm}, then  $m(s,x)>0$ for all $(s,x)\in Gr(\Gamma)$.

\medskip{}

\textbf{Proof of Proposition \ref{prop:refinement}.} (i) We show
that, if $(\sigma,m)$ is stable for a Bayesian-SMDP that is $\varepsilon$-perturbed,
weakly identified and satisfies full communication (and has a prior
$\mu_{0}$ and policy function $f$ ), then $(\sigma,m)$ is stable
with exhaustive learning. That is, we must find a subsequence $(\mu_{t(j)})_{j}$
such that $\mu_{t(j)}$ converges weakly to $\mu^{\ast}$ and $\mu^{\ast}=B(s,x,s',\mu^{\ast})$
for any $(s,x)\in Gr(\Gamma)$ and $s'\in supp(\bar{Q}_{\mu^{\ast}}(\cdot\mid s,x))$.
By compactness of $\Delta(\Theta)$, there always exists a convergent
subsequence with limit point $\mu^{\ast}\in\Delta(\Theta)$. By Lemma
\ref{Lemma:Berk}, $\mu^{\ast}\in\Delta(\Theta_{Q}(m))$. By assumption,
$\sigma\in\Sigma^{\varepsilon}$ and, by the arguments given in the
proof of Theorem \ref{theo:Stability_implies_equilibrium}, $m\in I_{Q}(\sigma)$.
Since the SMDP satisfies full-communication, by Claim C, $supp(m)=Gr(\Gamma)$.
This result, the fact that $\mu^{\ast}\in\Delta(\Theta_{Q}(m))$,
and weak identification imply strong identification, i.e., for any
$\theta_{1}$ and $\theta_{2}$ in the support of $\mu^{\ast}$, $Q_{\theta_{1}}(\cdot\mid s,x)=Q_{\theta_{2}}(\cdot\mid s,x)$
for all $(s,x)\in Gr(\Gamma)$. Hence, it follows that, for all $A\subseteq\Theta$
Borel and for all $(s,x)\in Gr(\Gamma)$ and $s'\in\mathbb{S}$ such
that $\bar{Q}_{\mu^{\ast}}(s'\mid s,x)>0$ (i.e., $\int_{\Theta}Q_{\theta}(s'\mid s,x)\mu^{\ast}(d\theta)>0$),
\[
B(s,x,s',\mu^{\ast})(A)=\frac{\int_{A}Q_{\theta}(s'\mid s,x)\mu^{\ast}(d\theta)}{\int_{\Theta}Q_{\theta}(s'\mid s,x)\mu^{\ast}(d\theta)}=\mu^{\ast}(A).
\]
Thus, $\mu^{\ast}$ satisfies the desired condition.

(ii) We prove that if $(\sigma,m)$ is a Berk-Nash equilibrium, then
it is also a Berk-Nash equilibrium with exhaustive learning. Let $\mu$
be the supporting equilibrium belief. By Claim C and weak identification,
it follows that there is strong identification, and so for any $\theta_{1}$
and $\theta_{2}$ in the support of $\mu$, $Q_{\theta_{1}}(\cdot\mid s,x)=Q_{\theta_{2}}(\cdot\mid s,x)$
for all $(s,x)\in Gr(\Gamma)$. It follows that, for all $A\subseteq\Theta$
Borel and for all $(s,x)\in Gr(\Gamma)$ and $s'\in\mathbb{S}$ such
that $\bar{Q}_{\mu}(s'\mid s,x)>0$ (i.e., $\int_{\Theta}Q_{\theta}(s'\mid s,x)\mu(d\theta)>0$),
\[
B(s,x,s',\mu)(A)=\frac{\int_{A}Q_{\theta}(s'\mid s,x)\mu(d\theta)}{\int_{\Theta}Q_{\theta}(s'\mid s,x)\mu(d\theta)}=\mu(A).
\]
Thus, $(\sigma,m)$ is a Berk-Nash equilibrium with exhaustive learning.
$\square$\medskip{}

\textbf{Proof of Proposition \ref{prop:refinements2}.} Suppose $(\sigma,m)$
is a perfect Berk-Nash equilibrium and let $(\sigma^{\varepsilon},m^{\varepsilon},\mu^{\varepsilon})_{\varepsilon}$
be the associated sequence of equilibria with exhausted learning such
that $\lim_{\varepsilon\rightarrow0}(\sigma^{\varepsilon},m^{\varepsilon})=(\sigma,m)$.
By possibly going to a sub-sequence, let $\mu=\lim_{\varepsilon\rightarrow0}\mu^{\varepsilon}$
(under the weak topology). By the upper hemicontinuity of the equilibrium
correspondence $\mathcal{T}(\sigma,m,\mu)=\Sigma(\bar{Q}_{\mu})\times I_{Q}(\sigma)\times\Delta(\Theta_{Q}(m))$
(see the proof of Theorem \ref{The:Existence}), $(\sigma,m)$ is
a Berk-Nash equilibrium with supporting belief $\mu$. We conclude
by showing that $(\sigma,m)$ is a Berk-Nash equilibrium with exhaustive
learning.

For all $(s,x)\in Gr(\Gamma)$ and $s'\in supp\left(\bar{Q}_{\mu}(\cdot|s,x)\right)$,
and for all $f:\Theta\rightarrow\mathbb{R}$ bounded and continuous,
$\bigl|\int f(\theta)\mu(d\theta)-\int f(\theta)B(s,x,s',\mu)(d\theta)\bigr|\leq\bigl|\int f(\theta)\mu(d\theta)-\int f(\theta)\mu^{\varepsilon}(d\theta)\bigr|+\bigl|\int f(\theta)\mu^{\varepsilon}(d\theta)-\int f(\theta)B(s,x,s',\mu)(d\theta)\bigr|$.
The first term in the RHS vanishes as $\varepsilon\rightarrow0$ by
definition of weak convergence. For the second term, note that, for
sufficiently small $\varepsilon$, $s'\in supp\left(\bar{Q}_{\mu^{\varepsilon}}(\cdot|s,x)\right)$,
and so, since $\mu^{\varepsilon}=B(s,x,s',\mu^{\varepsilon})$ for
any $(s,x)\in Gr(\Gamma)$ and $s'\in supp\left(\bar{Q}_{\mu^{\varepsilon}}(\cdot|s,x)\right)$,
we can replace $\int f(\theta)\mu^{\varepsilon}(d\theta)$ with $\int f(\theta)B(s,x,s',\mu^{\varepsilon})(d\theta)$.
Thus, the second term vanishes by continuity of the Bayesian operator.
Therefore, by a standard argument\footnote{Suppose $\mu_{1},\mu_{2}$ in $\Delta(\Theta)$ are such that $\left|\int f(\theta)\mu_{1}(d\theta)-\int f(\theta)\mu_{2}(d\theta)\right|=0$
for any $f$ bounded and continuous. Then, for any $F\subseteq\Theta$
closed, $\mu_{1}(F)-\mu_{2}(F)\leq E_{\mu_{1}}[f_{F}(\theta)]-\mu_{2}(F)=E_{\mu_{2}}[f_{F}(\theta)]-\mu_{2}(F)$,
where $f_{F}$ is any continuous and bounded and $f_{F}\geq1_{\{F\}}$;
we call the class of such functions $C_{F}$. Thus, $\mu_{1}(F)-\mu_{2}(F)\leq\inf_{f\in C_{F}}E_{\mu_{2}}[f(\theta)]-\mu_{2}(F)=0$,
where the equality follows from an application of the monotone convergence
theorem. An analogous trick yields the reverse inequality and, therefore,
$\mu_{1}(F)=\mu_{2}(F)$ for any $F\subseteq\Theta$ closed. Borel
measures over $\Theta$ are inner regular (also known as tight; see
\cite{aliprantis2006infinite}, Ch. 12, Theorem 12.7). Thus, for
any Borel set $A\subseteq\Theta$ and any $\epsilon>0$, there exists
a $F\subseteq A$ compact such that $\mu_{i}(A\setminus F)<\epsilon$
for all $i=1,2$. Therefore $\mu_{1}(A)-\mu_{2}(A)\leq\mu_{1}(A)-\mu_{2}(F)\leq\mu_{1}(F)-\mu_{2}(F)+\epsilon$.
By our previous result, it follows that $\mu_{1}(A)-\mu_{2}(A)\leq\epsilon$.
A similar trick yields the reverse inequality and, since $\epsilon$
is arbitrary, this implies that $\mu_{1}(A)=\mu_{2}(A)$ for all $A\subseteq\Theta$
Borel.}, $\mu(A)=B(s,x,s',\mu)(A)$ for all $A\subseteq\Theta$ Borel and
all $(s,x)\in Gr(\Gamma)$ and $s'\in supp\left(\bar{Q}_{\mu}(\cdot|s,x)\right)$,
which implies that $(\sigma,m)$ is a Berk-Nash equilibrium with exhaustive
learning.$\square$

\medskip{}

\textbf{Proof of Theorem \ref{thm:exist-perfectBN}.} Existence of
a Berk-Nash equilibrium of an $\varepsilon$-perturbed environment,
$(\sigma^{\varepsilon},m^{\varepsilon})$, follows for all $\varepsilon\in(0,\bar{\varepsilon}]$,
where $\bar{\varepsilon}=1/(|\mathbb{X}|+1)$, from the same arguments
used to establish existence for the case $\varepsilon=0$ (see Theorem
\ref{The:Existence}). Weak identification, full communication and
Proposition \ref{prop:refinement}(ii) imply that there exists a sequence
$(\sigma^{\varepsilon},m^{\varepsilon})_{\varepsilon>0}$ of Berk-Nash
equilibrium with exhaustive learning. By compactness of $\Sigma\times\Delta(Gr(\Gamma))$,
there is a convergent subsequence, which is a perfect Berk-Nash equilibrium
by definition. $\square$

\newpage{}

\section*{Online Appendix\label{sec:onlineappendix}}

\addcontentsline{toc}{section}{Online Appendix}

\section*{Proof of Lemmas \ref{Lemma:Sigma(Q)} and \ref{Lemma:Sigma(Q)-1}.}

The proof of these lemmas is standard. We only prove Lemma \ref{Lemma:Sigma(Q)};
the proof of Lemma \ref{Lemma:Sigma(Q)-1} is analogous.

\medskip{}

\textbf{Proof of Lemma \ref{Lemma:Sigma(Q)}. }(i) Let $T_{Q}:L^{\infty}(\mathbb{S})\rightarrow L^{\infty}(\mathbb{S})$
be the Bellman operator, $T_{Q}[W](s)=\max_{\hat{x}\in\Gamma(s)}\int_{\mathbb{S}}\left\{ \pi(s,\hat{x},s')+\delta W(s')\right\} Q(ds'|s,\hat{x})$.
By standard arguments, $T_{Q}$ is a contraction with modulus $\delta$,
and so there is a unique fixed point $V_{Q}\in L^{\infty}(\mathbb{S})$.
To establish continuity in $Q$, let $V_{Q_{n}}$ be a sequence of
fixed points given $Q_{n}$ such that $Q_{n}$ converges to $Q$ and
let $V_{Q}$ be the fixed point given $Q$. Then 
\begin{align*}
||V_{Q_{n}}-V_{Q}||_{L^{\infty}} & \leq||T_{Q_{n}}[V_{Q_{n}}]-T_{Q_{n}}[V_{Q}]||_{L^{\infty}}+||T_{Q_{n}}[V_{Q}]-T_{Q}[V_{Q}]||_{L^{\infty}}\\
 & \leq\delta||V_{Q_{n}}-V_{Q}||_{L^{\infty}}+||T_{Q_{n}}[V_{Q}]-T_{Q}[V_{Q}]||_{L^{\infty}}
\end{align*}
and, since $\delta\in[0,1)$, it only remains to show that $||T_{Q_{n}}[V_{Q}]-T_{Q}[V_{Q}]||_{L^{\infty}}\rightarrow0$.
Note that, for any $s\in\mathbb{S}$, 
\[
T_{Q_{n}}[V_{Q}](s)-T_{Q}[V_{Q}](s)\leq\int_{\mathbb{S}}(\pi(s,\hat{x}_{n},s')+\delta V_{Q}(s'))\{Q_{n}(ds'|s,\hat{x}_{n})-Q(ds'|s,\hat{x}_{n})\}
\]
where $\hat{x}_{n}\in\arg\max\int_{\mathbb{S}}\left\{ \pi(s,\hat{x},s')+\delta V_{Q}(s')\right\} Q_{n}(ds'|s,\hat{x})$.
Since $V_{Q}$ and $\pi$ are in $L^{\infty}(\mathbb{S})$ and $|\mathbb{S}|<\infty$,
it follows that $T_{Q_{n}}[V_{Q}](s)-T_{Q}[V_{Q}](s)\leq C||Q_{n}-Q||$
for some finite constant $C$. Using similar arguments, one can show
that $T_{Q}[V_{Q}](s)-T_{Q_{n}}[V_{Q}](s)\leq C||Q_{n}-Q||$. Therefore,
$||T_{Q_{n}}[V_{Q}]-T_{Q}[V_{Q}]||_{L^{\infty}}\leq C||Q_{n}-Q||$
and the desired result follows because $||Q_{n}-Q||\rightarrow0$. 

(ii) For each $s\in\mathbb{S}$ and $Q\in\Delta(\mathbb{S})^{Gr(\Gamma)}$,
let $X_{s}(Q)\equiv\arg\max_{\hat{x}\in\Gamma(s)}U_{s}(\hat{x},Q)$,
where $U_{s}(\hat{x},Q)=\int_{\mathbb{S}}\left\{ \pi(s,\hat{x},s')+\delta V_{Q}(s')\right\} Q(ds'|s,\hat{x})$.
Note that

$\Sigma(Q)=\left\{ \sigma\in\Sigma\colon\forall s\in\mathbb{S},\sigma(\cdot|s)\in\Delta(X_{s}(Q))\right\} $
is isomorphic to $\times_{s\in\mathbb{S}}\Delta(X_{s}(Q))$, in the
sense that $\sigma\in\Sigma(Q)$ iff $\left(\sigma(\cdot|s_{1}),....,\sigma(\cdot|s_{|\mathbb{S}|})\right)\in\times_{s\in\mathbb{S}}\Delta(X_{s}(Q))$.
By part (i), $U_{s}$ is continuous, and so the Theorem of the Maximum
implies that $X_{s}(Q)$ is nonempty, compact-valued, and upper hemicontinuous
in $Q$. By Theorem 17.13 in \cite{aliprantis2006infinite}, $Q\mapsto\Delta(X_{s}(Q))$
is also non-empty, compact-valued and upper hemicontinuous for each
$s\in\mathbb{S}$. By Tychonoff's Theorem, so is $\times_{s\in\mathbb{S}}\Delta(X_{s}(Q))$,
and consequently $\Sigma(Q)$. Finally, to establish convexity of
$\Sigma(Q)$, let $\sigma,\sigma'\in\Sigma(Q)$, $\alpha\in(0,1)$
and $\sigma_{\alpha}=\alpha\sigma+(1-\alpha)\sigma'$. Then, for all
$s\in\mathbb{S}$, $supp\,\sigma_{\alpha}(\cdot\mid s)=supp\,\sigma(\cdot\mid s)\cup supp\,\sigma(\cdot\mid s)\subseteq X_{s}(Q)$,
and so $\sigma_{\alpha}\in\Sigma(Q)$. \textbf{$\square$}

\bigskip{}

\subsection*{Proof of Claims A, B, and C}

\emph{Notation.} For the proofs of Claim A and B, let $\mathbb{Z}=\mathbb{S}\times Gr(\Gamma)$.
For each $z=(s',s,x)\in\mathbb{Z}$ and $m\in\Delta(Gr(\Gamma))$,
define $\bar{P}_{m}(z)=Q(s'\mid s,x)m(s,x)$. We sometimes abuse notation
and write $Q(z)\equiv Q(s'\mid s,x)$, and similarly for $Q_{\theta}$.\bigskip{}

\textbf{Proof of Claim A.} (i) By regularity and finiteness of $\mathbb{Z}$,
there exists $\theta_{*}\in\Theta$ and $\alpha\in(0,1)$ such that
$Q_{\theta^{*}}(z)\geq\alpha$ for all $z\in\mathbb{Z}$ such that
$Q(z)>0$. Thus, for all \emph{$m\in\Delta(Gr(\Gamma))$}, $K_{Q}(m,\theta^{*})\leq-E_{\bar{P}_{m}}[\ln Q_{\theta^{*}}(Z)]\leq-\ln\alpha$. 

(ii) $K_{Q}(m_{n},\theta)-K_{Q}(m,\theta)=\sum_{z:Q(z)>0}(\bar{P}_{m_{n}}(z)-\bar{P}_{m}(z))(\ln Q(z)-\ln Q_{\theta}(z))$.
By the assumption that $Q_{\theta}(z)>0$ for all $z$ such that $Q(z)>0$,
$\left(\ln Q(z)-\ln Q_{\theta}(z)\right)$ is bounded for all $z$
such that $Q(z)>0$. In addition, $\bar{P}_{m_{n}}(z)-\bar{P}_{m}(z)$
converges to zero for all $z\in\mathbb{Z}$ due to linearity of $\bar{P}_{\cdot}$
and due to convergence of $m_{n}$ to $m$.

(iii) $K^{i}(\sigma_{n},\theta_{n}^{i})-K(\sigma,\theta^{i})=\sum_{z:Q(z)>0}(\bar{P}_{m_{n}}(z)-\bar{P}_{m}(z))\ln Q(z)+\sum_{z:Q(z)>0}(\bar{P}_{m}(z)\ln Q_{\theta}(z)-\bar{P}_{m_{n}}(z)\ln Q_{\theta_{n}}(z))$.
The first term in the RHS converges to zero (same argument as Claim
A(ii)). The proof concludes by showing that, for all $z$, 
\begin{equation}
\lim\inf_{n\rightarrow\infty}-\bar{P}_{m_{n}}(z)\ln Q_{\theta_{n}}(z)\geq-\bar{P}_{m}(z)\ln Q_{\theta}(z).\label{eq:lsm}
\end{equation}
Suppose $\lim\inf_{n\rightarrow\infty}-\bar{P}_{m_{n}}(z)\ln Q_{\theta_{n}}(z)\leq M<\infty$
(if not, (\ref{eq:lsm}) holds trivially). Then either (i) $\bar{P}_{m_{n}}(z)\rightarrow\bar{P}_{m}(z)>0$,
in which case (\ref{eq:lsm}) holds with equality by continuity of
$Q_{\theta}(z)$ in $\theta$, or (ii) $\bar{P}_{m_{n}}(z)\rightarrow\bar{P}_{m}(z)=0$,
in which case (\ref{eq:lsm}) holds because its RHS is zero (by convention
that $0\ln0=0$) and its LHS is always nonnegative. $\square$

\bigskip{}

\textbf{Proof of Claim B. }(i) For any $z\in\mathbb{Z}$ and any $h^{\infty}\in\mathcal{H}$,
let $freq_{t}(h^{\infty})(z)\equiv t^{-1}\sum_{\tau=0}^{t-1}\mathbf{1}_{\{z\}}(z_{\tau})$.
Observe that $t^{-1}\sum_{\tau=1}^{t}\log\left(\frac{Q(s_{\tau}|s_{\tau-1},x_{\tau-1})}{Q_{\theta}(s_{\tau}|s_{\tau-1},x_{\tau-1})}\right)=\kappa_{1t}(h^{\infty})+\kappa_{2}-\kappa_{3t}(h^{\infty},\theta)$,
where $\kappa_{1t}(h^{\infty})=\sum_{z\in\mathbb{Z}}\left(freq_{t}(h^{\infty})(z)-\bar{P}_{m}(z)\right)\ln Q(z)$,
$\kappa_{2}=\sum_{z\in\mathbb{Z}\colon Q(z)>0}\bar{P}_{m}(z)\ln Q(z)$,
and $\kappa_{3t}(h^{\infty},\theta)=\sum_{z\in\mathbb{Z}}freq_{t}(h^{\infty})(z)\ln Q_{\theta}(z)$.

We first show that $\lim_{t\rightarrow\infty}\kappa_{1t}(h^{\infty})=0$
a.s.-$\mathbf{P}^{f}$. To do this, let $g_{t}(h^{\infty},z)\equiv\left(\mathbf{1}_{\{z\}}(z_{\tau})-\bar{P}_{m}(z)\right)\ln Q(z)$,
and observe that $(g_{t}(\cdot,z))_{t}$ is a martingale difference
sequence. Let $h^{t}$ denote the partial history until time $t$
and $L_{t}(h^{\infty},z)=\sum_{\tau=1}^{t}\tau^{-1}g_{\tau}(h^{\infty},z)$;
note that $E_{\mathbf{P}^{f}(\cdot\mid h^{t})}\bigl[L_{t+1}(h^{\infty},z)\bigr]=L_{t}(h^{\infty},z)$
and so $(L_{t}(\cdot,z))_{t}$ is a martingale with respect to $\mathbf{P}^{f}$.
Moreover, $E_{\mathbf{P}^{f}(\cdot\mid h^{t})}\left[|g_{t}(h^{\infty},z)|^{2}\right]\leq(\ln Q(z))^{2}Q(z)$,
which is bounded by $1$; this result, the Law of iterated expectations
and the fact that $(g_{t}(\cdot,z))_{t}$ are uncorrelated, imply
that $\sup_{t}E_{\mathbf{P}^{f}}\left[|L_{t}(h^{\infty},z)|^{2}\right]\leq M$
for $M<\infty$. Hence, by the Martingale convergence Theorem (see
Theorem 5.2.8 in \cite{Durrett2010}) $L_{t}(h^{\infty},z)$ converges
a.s.-$\mathbf{P}^{f}$ to a finite $L_{\infty}(h^{\infty},z)$. By
Kronecker's lemma (\cite{pollard2001}, page 105)\footnote{This lemma implies that for a sequence $(\ell_{t})_{t}$ if $\sum_{\tau}\ell_{\tau}<\infty$,
then $\sum_{\tau=1}^{t}\frac{b_{\tau}}{b_{t}}\ell_{\tau}\rightarrow0$
where $(b_{t})_{t}$ is a nondecreasing positive real valued sequence
that diverges to $\infty$. We can apply the lemma with $\ell_{t}\equiv t^{-1}g_{t}$
and $b_{t}=t$. }, $\lim_{t\rightarrow\infty}t^{-1}\sum_{\tau=1}^{t}g_{\tau}(h^{\infty},z)=0$
a.s.-$\mathbf{P}^{f}$, for all (uniformly) $z\in\mathbb{Z}$. Thus,
$\lim_{t\rightarrow\infty}\kappa_{1t}(h^{\infty})=0$ a.s.-$\mathbf{P}^{f}$.

We also note that analogous arguments show that 
\[
\lim_{t\rightarrow\infty}freq_{t}(h^{\infty},z)=\bar{P}_{m}(z)
\]
a.s.-$\mathbf{P}^{f}$, for all (uniformly) $z\in\mathbb{Z}$. 

Since $\theta\in\hat{\Theta}$, $z\mapsto-\log\left(Q_{\theta}(z)\right)$
is bounded. Thus by analogous arguments to those used to show $\lim_{t\rightarrow\infty}\kappa_{1t}(h^{\infty})=0$
a.s.-$\mathbf{P}^{f}$, it follows that, for any $\theta\in\hat{\Theta}$,
$\lim_{t\rightarrow\infty}\kappa_{3t}(h^{\infty},\theta)=\sum_{z\in\mathbb{Z}}\bar{P}_{m}(z)\ln Q_{\theta}(z)$
a.s.-$\mathbf{P}^{f}$. This result and the fact that $\lim_{t\rightarrow\infty}\kappa_{1t}(h^{\infty})=0$
a.s.-$\mathbf{P}^{f}$, imply that $\lim_{t\rightarrow\infty}t^{-1}\sum_{\tau=1}^{t}\log\left(\frac{Q(s_{\tau}|s_{\tau-1},x_{\tau-1})}{Q_{\theta}(s_{\tau}|s_{\tau-1},x_{\tau-1})}\right)=$

$\sum_{z\in\mathbb{Z}}\bar{P}_{m}(z)\log\left(\frac{Q(z)}{Q_{\theta}(z)}\right)=\sum_{(s,x)\in Gr(\Gamma)}E_{Q(\cdot|s,x)}\left[\log\left(\frac{Q(S'|s,x)}{Q_{\theta}(S'|s,x)}\right)\right]m(s,x)$
for any $\theta\in\hat{\Theta}$ a.s.-$\mathbf{P}^{f}$, as desired.

(ii) For any $\xi>0$, define $\Theta_{m,\xi}$ to be the set such
that $\theta\in\Theta_{m,\xi}$ if and only if $Q_{\theta}(z)\geq\xi$
for all $z$ such that $\bar{P}_{m}(z)>0$. For any $\xi>0$, let
$\zeta_{\xi}=-\alpha/(\#\mathbb{Z}4\ln\xi)>0$. By the fact that $\lim_{t\rightarrow\infty}freq_{t}(h^{\infty},z)=\bar{P}_{m}(z)$
a.s.-$\mathbf{P}^{f}$, for all (uniformly) $z\in\mathbb{Z}$, $\exists\hat{t}_{\zeta_{\xi}}$
such that, $\forall t\geq\hat{t}_{\zeta_{\xi}}$, 
\begin{align*}
\kappa_{3t}(h^{\infty},\theta) & \leq\sum_{\{z:\bar{P}_{m}(z)>0\}}freq_{t}(h^{\infty})(z)\ln Q_{\theta}(z)\leq\sum_{\{z^{i}:\bar{P}_{m}(z)>0\}}\left(\bar{P}_{m}(z)-\zeta_{\xi}\right)\ln Q_{\theta}(z)\\
 & \leq\sum_{(s,x)\in Gr(\Gamma)}E_{Q(\cdot\mid s,x)}\left[\ln Q_{\theta}(S'\mid s,x)\right]m(s,x)-\#\mathbb{Z}\zeta_{\xi}\ln\xi,
\end{align*}
 a.s.-$\mathbf{P}^{f}$ and$\forall\theta\in\Theta_{m,\xi}$ (since
$Q_{\theta}(z)\geq\xi$ $\forall z$ such that $\bar{P}_{m}(z)>0$).
The above expression, the fact that \emph{$\alpha/4=-\#\mathbb{Z}\zeta_{\xi}\ln\xi$,}
and the fact that $t^{-1}\sum_{\tau=1}^{t}\log\left(\frac{Q(s_{\tau}|s_{\tau-1},x_{\tau-1})}{Q_{\theta}(s_{\tau}|s_{\tau-1},x_{\tau-1})}\right)=\kappa_{1t}(h^{\infty})+\kappa_{2}-\kappa_{3t}(h^{\infty},\theta)$
imply that $\forall t\geq\hat{t}_{\zeta_{\xi}}$, 
\begin{equation}
t^{-1}\sum_{\tau=1}^{t}\log\left(\frac{Q(s_{\tau}|s_{\tau-1},x_{\tau-1})}{Q_{\theta}(s_{\tau}|s_{\tau-1},x_{\tau-1})}\right)\geq\sum_{(s,x)\in Gr(\Gamma)}E_{Q(\cdot\mid s,x)}\left[\ln\frac{Q(S'\mid s,x)}{Q_{\theta}(S'\mid s,x)}\right]m(s,x)-\frac{\alpha}{4}=K_{Q}(m,\theta)-\frac{1}{4}\alpha,\label{eq:SLLN-2}
\end{equation}
a.s.-$\mathbf{P}^{f}$ and$\forall\theta\in\Theta_{m,\xi}$. For any
$\theta\in\{\Theta\colon d_{m}(\theta)\geq\epsilon\}\cap\Theta_{m,\xi}$,
the RHS is bounded below by $K^{\ast}(m)+3\alpha-\frac{1}{4}\alpha>K^{\ast}(m)+\frac{3}{2}\alpha$.

Moreover, since $\lim_{t\rightarrow\infty}freq_{t}(h^{\infty},z)=\bar{P}_{m}(z)$
a.s.-$\mathbf{P}^{f}$ (uniformly over $z\in\mathbb{Z}$), there exists
a $T$ such that for any $\theta\notin\Theta_{m,\xi}$, $\kappa_{3t}(h^{\infty},\theta)\leq freq_{t}(z)\ln Q_{\theta}(z)\leq\left(p_{L}/2\right)\ln\xi$
for all $t\geq T(\xi)$ and some $z\in\mathbb{Z}$ where $p_{L}=\min_{\mathbb{Z}}\{\bar{P}_{m}(z):\bar{P}_{m}(z)>0\}$.
Therefore, for any $\theta\notin\Theta_{m,\xi}$ and a.s.-$\mathbf{P}^{f}$
: 
\begin{equation}
t^{-1}\sum_{\tau=1}^{t}\log\left(\frac{Q(s_{\tau}|s_{\tau-1},x_{\tau-1})}{Q_{\theta}(s_{\tau}|s_{\tau-1},x_{\tau-1})}\right)\geq\sum_{z\in\mathbb{Z}\colon Q(z)>0}\bar{P}_{m}(z)\ln Q(z)-\left(p_{L}/2\right)\ln\xi\label{eq:SLLN-3}
\end{equation}
for all $t\geq T(\xi)$. Observe that $\sum_{z\in\mathbb{Z}\colon Q(z)>0}\bar{P}_{m}(z)\ln Q(z)$
and $K^{\ast}(m)$ are bounded, so there exists a $\xi(\alpha)$ such
that the RHS can be made larger than $K^{\ast}(m)+\frac{3}{2}\alpha$. 

Therefore, by displays \ref{eq:SLLN-2} and \ref{eq:SLLN-3}, it follows
that: For any $t\geq T\equiv\max\{\hat{t}_{\zeta_{\xi(\alpha)}},T(\xi(\alpha))\}$
and a.s.-$\mathbf{P}^{f}$ 
\[
t^{-1}\sum_{\tau=1}^{t}\log\left(\frac{Q(s_{\tau}|s_{\tau-1},x_{\tau-1})}{Q_{\theta}(s_{\tau}|s_{\tau-1},x_{\tau-1})}\right)\geq K^{\ast}(m)+\frac{3}{2}\alpha
\]
for all $\theta\in\{\Theta\colon d_{m}(\theta)\geq\epsilon\}$, as
desired. $\square$

\bigskip{}

\textbf{Proof of Claim C.} We first show that for any $(s',x')\in Gr(\Gamma)$
and $(s_{0},x_{0})\in Gr(\Gamma)$, there exists an $n$ such that
$M_{\sigma,Q}^{n}(s',x'\mid s_{0},x_{0})>0$, where $M_{\sigma,Q}^{n}=M_{\sigma,Q}\cdots M_{\sigma,Q}$.\footnote{The expression $M_{\sigma,Q}\cdot M_{\sigma,Q}$ is defined as a transition
probability function over $\mathbb{S}\times\mathbb{X}$ where $M_{\sigma,Q}\cdot M_{\sigma,Q}(s',x'\mid s,x)\equiv\sum_{(a,b)}M(s',x'\mid a,b)M(a,b\mid s,x)$.
The expression $M_{\sigma,Q}\cdots M_{\sigma,Q}$ is constructed by
successive iterations of the previous one.} By the condition in Definition \ref{de:Fullcomm}, there exist an
$n$ and a ``path'' $((s_{1},x_{1}),...,(s_{n},x_{n}))$ such that
$(s_{i},x_{i})\in Gr(\Gamma)$ for all $i=1,...,n$ and 
\[
Q(s'\mid s_{n},x_{n})Q(s_{n}\mid s_{n-1},x_{n-1})...Q(s_{1}\mid s_{0},x_{0})>0.
\]
This inequality and the fact that $\sigma(x\mid s)\geq\varepsilon$
for all $(s,x)\in Gr(\Gamma)$, imply that 
\begin{align*}
M_{\sigma,Q}^{n}(s',x'\mid s_{0},x_{0}) & =\sum_{((s_{1},x_{1}),...,(s_{n},x_{n}))}\sigma(x'\mid s')Q(s'\mid s_{n},x_{n})...\sigma(x_{1}\mid s_{1})Q(s_{1}\mid s_{0},x_{0})\\
 & \geq\varepsilon^{n+1}\sum_{((s_{1},x_{1}),...,(s_{n},x_{n}))}Q(s'\mid s_{n},x_{n})...Q(s_{1}\mid s_{0},x_{0})>0,
\end{align*}
as desired. 

Consider any invariant distribution $m$. There exists at least one
point $(s_{0},x_{0})\in Gr(\Gamma)$ such that $m(s_{0},x_{0})>0$.
For any $(s',x')\in Gr(\Gamma)$, let $n$ be the integer that ensures
that $M_{\sigma,Q}^{n}(s',x'\mid s_{0},x_{0})>0$. Then, it follows
that $m(s',x')=\sum_{(s,x)\in Gr(\Gamma)}M_{\sigma,Q}^{n}(s',x'\mid s,x)m(s,x)\geq M_{\sigma,Q}^{n}(s',x'\mid s_{0},x_{0})m(s_{0},x_{0})>0$.
Thus, $supp(m)=Gr(\Gamma)$. $\square$

\subsection*{Computing $\Theta_{Q}(\cdot)$ and the stationary distribution in
the search example}

\textbf{Claim D.} \textbf{(i)} Let $\sigma$ be a strategy characterized
by a threshold $w^{\ast}$. Then there is a unique stationary marginal
distribution over $\mathbb{X}$, $m_{\mathbb{X}}(\cdot;w^{\ast})$,
and it is given by 
\[
m_{\mathbb{X}}(0;w^{\ast})=\frac{E[\gamma]-(1-F(w^{\ast}))E[\lambda\gamma]}{(1-F(w^{\ast}))\left\{ E[\lambda]-E[\lambda\gamma]\right\} +E[\gamma]}.
\]
 \textbf{(ii)} For any $m\in\Delta(Gr(\Gamma))$ with marginal $m_{\mathbb{X}}\in\Delta(\mathbb{X})$,
$\Theta_{Q}(m)$ is a singleton given by 
\begin{align*}
\theta_{Q}(m) & =\frac{m_{\mathbb{X}}(0)}{m_{\mathbb{X}}(0)+m_{\mathbb{X}}(1)\left(E\left[\gamma\right]\right)}\bar{\lambda}+\left(1-\frac{m_{\mathbb{X}}(0)}{m_{\mathbb{X}}(0)+m_{\mathbb{X}}(1)\bar{\gamma}}\right)\left(\bar{\lambda}+\frac{Cov(\gamma,\lambda)}{\bar{\gamma}}\right).
\end{align*}

\bigskip{}

\textbf{Proof of Claim D. }(i) For any $m\in\Delta(Gr(\Gamma))$,
$z',x'$ and $A\subseteq\mathbb{S}$ Borel, let 
\[
m(z',A,x')=\int_{\mathbb{S}}\int_{\mathbb{X}}\sigma(x'|w')\bar{Q}(z',A|w,x)m(w,x)dwdx,
\]
where $\{z'\},A,x'$ is just notation for the set $\{z'\}\times A\times\{x'\}$
and $\bar{Q}(z',A\mid w,x)\equiv\Pr(A|z',w,x)G(z')$, with 

\[
\Pr\left(w'\in A|z',w,0\right)=\left\{ \begin{array}{cc}
\int_{A}F(dw') & \text{ w/ prob. \ensuremath{\lambda(z')}}\\
1\{0\in A\} & \text{ w/ prob. }(1-\lambda(z'))
\end{array}\right.,
\]
and 
\[
\Pr\left(w'\in A|z',w,1\right)=\begin{cases}
\begin{array}{ccc}
\left\{ \begin{array}{cc}
\int_{A}F(dw') & \text{ w/pr. \ensuremath{\lambda(z')}}\\
1\{0\in A\} & \text{ w/pr. }(1-\lambda(z'))
\end{array}\right. & \text{ w/pr. }\gamma(z')\\
1\{w\in A\} & \text{ w/pr. }1-\gamma(z')
\end{array}\end{cases}.
\]

Also $\sigma(1|w)=1\{w>w^{\ast}\}$. Hence, for $x'=1$ 
\[
m(z',\mathbb{S},1;w^{\ast})=m(z',\{w'>w^{\ast}\},1;w^{\ast})
\]
and similarly for $x'=0$. Thus, $m_{\mathbb{X}}(1;w^{\ast})=\int_{\mathbb{Z}}m(dz',\{w'>w^{\ast}\},1;w^{\ast})$
and $m_{\mathbb{X}}(0;w^{\ast})=\int_{\mathbb{Z}}m(dz',\{w'<w^{\ast}\},0;w^{\ast})$
($w'=w^{\ast}$ occurs with probability zero, so it can be ignored).
It thus follows that 
\begin{align*}
m_{\mathbb{X}}(1;w^{\ast})= & \int_{\mathbb{Z}}\int_{\mathbb{S}}\int_{\mathbb{X}}\sigma(x'|w')\bar{Q}(dz',\{w'>w^{\ast}\}|w,x)m(w,x;w^{\ast})dwdx\\
= & \int_{\mathbb{Z}}\int_{\mathbb{S}}\int_{\mathbb{X}}\Pr\left(\{w'>w^{\ast}\}|z',w,x\right)G(dz')m(w,x;w^{\ast})dwdx\\
= & \int_{\mathbb{Z}}\int_{\mathbb{S}}\Pr\left(\{w'>w^{\ast}\}|z',0\right)G(dz')m(w,0;w^{\ast})dw\\
 & +\int_{\mathbb{Z}}\int_{\mathbb{S}}\Pr\left(\{w'>w^{\ast}\}|z',w,1\right)G(dz')m(w,1;w^{\ast})dw\\
= & \int_{\mathbb{Z}}\lambda(z')G(dz')(1-F(w^{\ast}))m_{\mathbb{X}}(0;w^{\ast})+\int_{\mathbb{Z}}\gamma(z')\lambda(z')(1-F(w^{\ast}))G(dz')m_{\mathbb{X}}(1;w^{\ast})\\
 & +\int_{\mathbb{Z}}(1-\gamma(z'))G(dz')\int_{\mathbb{S}}1\{w>w^{\ast}\}m(dw,1;w^{\ast}).
\end{align*}
where the last line follows from the fact that $1\{w'>w^{\ast}\}1\{w'=0\}=0$
always. Observe that $\int_{\mathbb{W}}1\{w>w^{\ast}\}m(dw,1;w^{\ast})=m(\{w>w^{\ast}\},1;w^{\ast})=m_{\mathbb{X}}(1;w^{\ast})$
by our previous observation. Thus 
\[
m_{\mathbb{X}}(1;w^{\ast})=E[\lambda](1-F(w^{\ast}))m_{\mathbb{X}}(0;w^{\ast})+\left\{ E[\lambda\gamma](1-F(w^{\ast}))+(1-E[\gamma])\right\} m_{\mathbb{X}}(1;w^{\ast}).
\]
Solving for $m_{\mathbb{X}}(1;w^{\ast})$, we obtain 
\[
m_{\mathbb{X}}(1;w^{\ast})=\frac{E[\lambda](1-F(w^{\ast}))}{(1-F(w^{\ast}))\left\{ E[\lambda]-E[\lambda\gamma]\right\} +E[\gamma]}.
\]
The result follows by noting that $m_{\mathbb{X}}(0;w^{\ast})=1-m_{\mathbb{X}}(1;w^{\ast})$.

(ii) For $x=1$, 
\begin{align*}
E_{Q(\cdot|w,1)}\left[\ln\left(\frac{Q(W'|w,1)}{Q_{\theta}(W'|w,1)}\right)\right]= & \sum_{z'\in\mathbb{Z}}\gamma(z')\left\{ \lambda(z')\int_{0}^{1}\log\left(\theta\right)F(dw')+(1-\lambda(z'))\log\left(1-\theta\right)\right\} G(z')\\
 & +\sum_{z'\in\mathbb{Z}}(1-\gamma(z'))\left\{ \log\left(1\right)\right\} G(z')\\
= & E[\lambda\gamma]\log\left(\theta\right)+(E[\gamma]-E[\lambda\gamma])\log\left(1-\theta\right)+Const.
\end{align*}
Similarly, for $x=0$, 
\[
E_{Q(\cdot|w,0)}\left[\ln\left(\frac{Q(W'|s,0)}{Q_{\theta}(W'|s,0)}\right)\right]=E[\lambda]\log\left(\theta\right)+(1-E[\lambda])\log\left(1-\theta\right)+Const',
\]
where $Const$ and $Const'$ are constants that do not depend on $\theta$.
It is easy to see that, over $[0,1]$, these are strictly convex functions
of $\theta$, so a convex combination also is. Thus $\Theta_{Q}(m)$
is a singleton for any $m$, which we denote as $\theta_{Q}(m)$.
The first order conditions yield 
\[
\frac{1}{\theta}\left\{ E[\lambda\gamma]m_{\mathbb{X}}(1)+E[\lambda]m_{\mathbb{X}}(0)\right\} =\frac{1}{1-\theta}\left\{ \left(E[\gamma]-E[\lambda\gamma]\right)m_{\mathbb{X}}(1)+(1-E[\lambda])m_{\mathbb{X}}(0)\right\} .
\]
Thus 
\[
\theta_{Q}(m)=\frac{E[\lambda\gamma]m_{\mathbb{X}}(1)+\bar{\lambda}m_{\mathbb{X}}(0)}{\bar{\gamma}m_{\mathbb{X}}(1)+m_{\mathbb{X}}(0)}.
\]
The desired results follows from some algebra and from the standard
expression for the covariance. $\square$

\end{document}